\newtcolorbox{mybox}[3][]
{
colframe = black,
  colback  = black,
  coltitle = white,  
  coltext = white,
  title    = {#3},
  #1,
}
\renewcommand{\thefootnote}{\fnsymbol{footnote}}
\def\be{\begin{equation}}
\def\ee{\end{equation}}
\def\bea{\begin{align}}
\def\eea{\end{align}}
\def\bea*{\begin{align*}}
\def\eea*{\end{align*}}
\theoremstyle{plain}
\newtheorem{theorem}{Theorem}[section]
\newtheorem{lemma}[theorem]{Lemma}
\newtheorem{corollary}[theorem]{Corollary}
\newtheorem{proposition}[theorem]{Proposition}
\theoremstyle{definition} 
\newtheorem{definition}[theorem]{Definition}
\newtheorem{example}{Example}
\newtheorem{remark}[theorem]{Remark}
\DeclareMathOperator{\A}{\mathbb{A}}
\DeclareMathOperator{\B}{\mathbb{B}}
\DeclareMathOperator{\C}{\mathbb{C}}
\DeclareMathOperator{\D}{\mathbb{D}}
\DeclareMathOperator{\E}{\mathbb{E}}
\DeclareMathOperator{\F}{\mathbb{F}}
\DeclareMathOperator{\N}{\mathbb{N}}
\DeclareMathOperator{\Q}{\mathbb{Q}}
\DeclareMathOperator{\R}{\mathbb{R}}
\DeclareMathOperator{\calC}{\mathcal{C}}
\DeclareMathOperator{\calD}{\mathcal{D}}
\DeclareMathOperator{\calF}{\mathcal{F}}
\DeclareMathOperator{\calH}{\mathcal{H}}
\DeclareMathOperator{\calI}{\mathcal{I}}
\DeclareMathOperator{\calJ}{\mathcal{J}}
\DeclareMathOperator{\calK}{\mathcal{K}}
\DeclareMathOperator{\calL}{\mathcal{L}}
\DeclareMathOperator{\calM}{\mathcal{M}}
\DeclareMathOperator{\calO}{\mathcal{O}}
\DeclareMathOperator{\calS}{\mathcal{S}}
\DeclareMathOperator{\calV}{\mathcal{V}}
\DeclareMathOperator{\frakF}{\mathfrak{F}}
\DeclareMathOperator{\frakI}{\mathfrak{I}}
\providecommand{\keywords}[1]
{
  \small	
  \textbf{\textit{Keywords ---}} \textit{#1}
}
\date{\vspace{-1em}\normalsize{\today}}
\title{\vspace{-25mm} Functional Expansions \vspace{-3mm}}
\date{\today}
\author[$\star$]{Bruno Dupire}
\author[ $\star$,$\diamond$]{Valentin Tissot-Daguette}
\affil[$\star$]
{\footnotesize Quantitative Research, Bloomberg L.P., New York, NY, USA}
\affil[$\diamond$]
{\footnotesize ORFE, Princeton University, Princeton, NJ, USA\vspace{-2mm}}
\begin{document}
\maketitle
\renewcommand{\thefootnote}{\arabic{footnote}}
\thispagestyle{firststyle}

\vspace{-9mm}
\begin{abstract}

Path dependence is omnipresent in many disciplines such as  engineering, system theory and finance. It reflects the influence of the past on the future, often expressed through 
functionals.  However,  non-Markovian  problems 
are often infinite-dimensional, thus challenging from a conceptual and computational perspective. 
In this work, we shed light on   expansions of  functionals. 
First, we treat static expansions  made around paths of fixed length and propose a generalization of the Wiener series$-$the intrinsic value expansion (IVE). 
In the dynamic case, we revisit the functional Taylor expansion (FTE).  The latter connects the functional Itô calculus with the   signature to quantify the effect in a functional when a 
“perturbation" path is concatenated with the source path. 
In particular, the FTE elegantly separates the  functional from future trajectories. 
The notions of real analyticity and radius of convergence  are also  extended to the path space. 
We discuss other dynamic expansions arising from Hilbert projections and the Wiener chaos, and finally  show financial applications of the FTE to the pricing and hedging of exotic contingent claims.

\end{abstract}
\vspace{1mm}
	
\keywords{Functional It\^o Calculus, Taylor Expansion, Path Signature, 
Wiener Chaos, Contingent Claims}
\\ \vspace{-3mm}

\textit{\textbf{MSC (2020) Classification —} 
41A58, 
91G20,  
26E15,  
60L10 
}

\setcounter{tocdepth}{2}
\tableofcontents
\section{Introduction}

Traditional problems in many  disciplines rely on the strong hypothesis that the future solely depends on the present. This is the Markov property, which tremendously simplifies the study of causal relationships  
in physics, biology, or social science. 
In 
 finance, 
it  allows to express the price and hedge of vanilla options in terms of the spot value of the underlying. 
This is often no longer the case when  the payoff becomes path-dependent (also called exotic) or the stock dynamics is more complex. The lack of Markovian representation can have  serious consequences from a computational perspective as the problem is usually infinite-dimensional.  

In some cases, a Markovian framework can be recovered by enlarging the state process accordingly.  
As an illustration, consider an at-the-money  lookback option, i.e. 
$g(X_T) = (\max_{0 \le u\le T} x_u - x_0 )^{+}$  
 with
underlying $X$ and maturity $T$. 
If $X$ is Markov and  $Y := (\max_{0 \le u\le t} x_u)_{t\in [0,T]}$, then $(X,Y)$ is also Markov under mild assumptions. 
 In turn, the price of the lookback option at  $t\in [0,T]$ will be function of $ (t,x_t,y_t)$ only. 
 The same remedy may be applied when 
 the dynamics of the stock depends on past information, e.g. 
 path-dependent volatility \cite{Guyon}. Taking 
 the Hobson-Rogers model 
 \cite{HobsonRogers}, the underlying becomes Markov when the offset  processes$-$capturing historical trends of the stock$-$are added to the state. 
 
 Nevertheless, there are situations where  path dependence cannot be absorbed by finitely many features. 
 We characterize such frameworks as \textit{fully non-Markovian}.  
 Among others, 
 we can mention 
 American options whose reward depends on 
 a  moving average of the underlying  \cite{Bernhart}.  
 Also, the pricing problem in rough volatility models belong to this category, even for the simplest instrument. Either way, the option price becomes a \textit{functional} of the path so far, calling for the development of new mathematical  tools.  


One avenue consists of  employing \textit{series expansions} to  project exotic payoffs onto a finite-dimensional subspace arising, e.g.,  from  
orthogonal  polynomials \cite{Bernhart}, the Wiener chaos  \cite{Lelong,Neufeld}, or  the Karhuhnen-Loève expansion \cite{Tissot}.   
Series expansions can also be applied to approximate stock processes with  fully non-Markovian dynamics; see, e.g., \cite{Chevalier}. 
Similar in spirit, the \textit{path signature}  showed promising results 
in option pricing and financial modeling problems  entailing path dependence \cite{Arribas,CuchieroSig,LyonsNum}. 
These works motivate the present paper  as the use of the signature 
is justified by  a deep yet overlooked result: the \textit{functional Taylor expansion} (FTE). First proposed by \citet{Fliess81,Fliess83,Fliess86}, it is a generalization of the  Taylor expansion where  classical derivatives and monomials are respectively replaced by functional derivatives and the signature. Further  historical notes and details will be given at the beginning of \cref{sec:DynamicExpansions}. 



\textbf{Outline.} We devote this study to the \textit{expansions} of  functionals by retracing   historical works and sharing new  findings. In \cref{sec:static},  we collect 
static 
expansions 
in the sense that the path length remains unchanged. 
Starting in the early 1900s, \citet{Volterra} expressed  a functional defined on a fixed horizon $[0,T]-$called $T-$\textit{functional} and denoted by $g$ in this work$-$as combination of homogeneous polynomials in integral form: If $X,Y$ are  paths of length $T$, then one can write 
\begin{equation} \label{eq:Volterra0}
    g(X+Y) = \sum_{k < K} \int_{0}^T \int_{0}^{t_k}\cdots \int_{0}^{t_2} \psi_k(t_1,...,t_k) y_{t_1} \cdots y_{t_k}dt_1 \cdots dt_k + R_K(X,Y). 
\end{equation} 
The kernels $(\psi_k)$ consists of higher-order Fr\'echet derivatives of $g$ evaluated at $X$; 
see \cite{Volterra,PalmPoggio,Boyd} and  \cref{sec:Volterra}. 
As can be seen, Volterra's approach relies on pointwise values of the added path $Y$.  
In contrast, the Wiener series and chaos expansion \cite{Wiener38,Wiener58} encode a 
path in terms of its infinitesimal increments. This gives, formally, the expansion 
\begin{equation} \label{eq:Wiener0}
    g(X + Y) = \sum_{k < K} \int_{0}^T \int_{0}^{t_k}\cdots \int_{0}^{t_2} \phi_k(t_1,...,t_k) dy_{t_1} \cdots dy_{t_k}+R_K(X,Y). 
\end{equation} 
In the chaos expansion, $X,Y$ are typical Brownian paths and as shown in  \citet{Stroock}, the kernels $(\phi_k)$ are expectations of iterated Malliavin derivatives of $g$. 
It turns out that a pathwise expansion can be obtained 
 by computing the Malliavin derivatives of the stopped path 
and employing Stratonovich integrals instead of Itô ones. 
This leads to our first contribution:  the \textit{intrinsic value expansion}, set forth in \cref{thm:IVE}. To the best of our knowledge, neither the intrinsic value expansion nor a variation of it has been discussed
in the literature. 

Alternatively,  one can study the local behavior of a "running" functional$-$denoted by $f-$when extending the source path.  This is the core 
of \cref{sec:DynamicExpansions}, concerned with \textit{dynamic} 
expansions. Let us give some insight.  
In classical calculus, say on the real line, the Taylor expansion reads, 
\begin{equation} \label{eq:taylor1D}
  f(x+y) = \sum_{k<K} f^{(k)}(x) \frac{y^k}{k!} + R_K(x,y), \quad x,y\in \R, \quad f\in   \calC^{K}(\R). 
\end{equation}
The shock $(y)$ is fragmented into its scaled powers $(\frac{y^k}{k!})$,  
weighted by the sensitivities of the function at the initial point $(x)$. 
In the path space, a similar decomposition can be formulated:  If $X,Y$ are paths  of arbitrary lengths, then 
\begin{equation} \label{eq:FTE0}
    f(X \oplus  Y) = \sum_{|\alpha| < K} \Delta_{\alpha}f(X) \calS_{\alpha}( Y) + R_K(X,Y),
\end{equation} 
where $\oplus$ concatenates $X$ and $Y$. This is the \textit{functional Taylor expansion} (FTE)$-$our main object of interest in this work$-$stated  in  \cref{thm:FTE}. 
Let us  describe the right side of $\eqref{eq:FTE0}$ while keeping the level of technicality to a minimum in this introduction. For that reason, we also postpone the literature review to the beginning of  \cref{sec:FTE}. The first summands $(\Delta_{\alpha}f(X))$ are higher order functional  derivatives of $f$, while  $(\calS_{\alpha}(Y))$ are iterated integrals of  $Y$, also called \textit{signature functionals}. 
The indexes $\{\alpha = \alpha_1\cdots \alpha_k \ : \ k < K \}$ 
 concurrently specify the order of differentiation for $\Delta_{\alpha}f$ (with respect to the time or space variable) and integration for $\calS_{\alpha}$. Unlike the Volterra   and Wiener expansion where the functional  and perturbation path $Y$ are quite entangled (see \eqref{eq:Volterra0}, \eqref{eq:Wiener0}), the FTE provides a perfect separation between them.  

\cref{fig:exp2D} compares the types of expansions 
discussed above. As can be seen, static (respectively classical) expansions are made  
\textit{around} a path (resp. a point), whereas dynamic expansions describe the evolution of a functional \textit{after} the initial path.  
\begin{figure}[H]
\caption{Classification of expansions.}
\vspace{-2mm}
\begin{subfigure}[b]{0.33\textwidth}
    \centering
    \caption{Classical (around a point)}
\includegraphics[height=1.6in,width=1.85in]{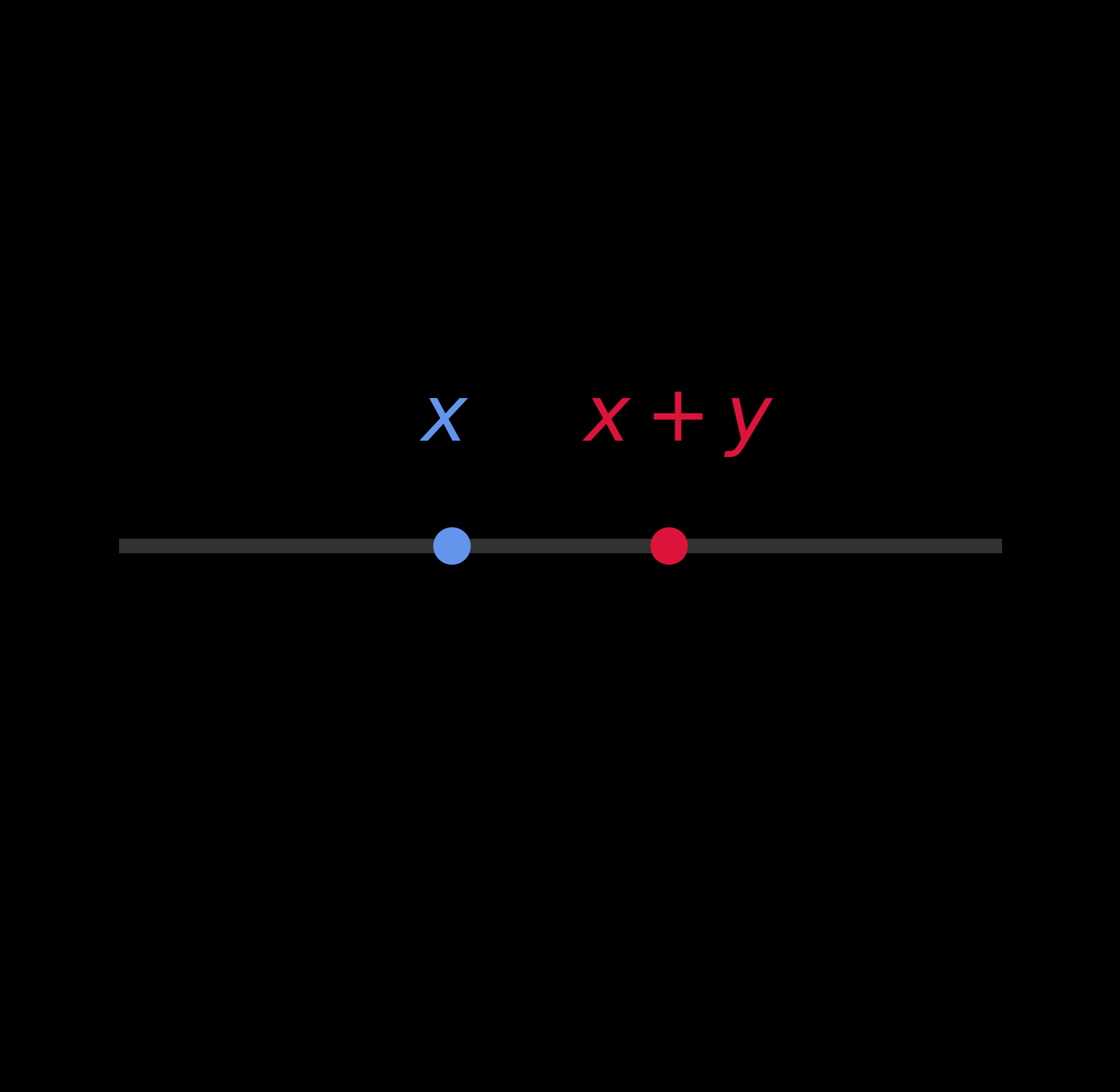}
    \label{fig:classical}
    \end{subfigure}
\begin{subfigure}[b]{0.33\textwidth}
    \centering
    \caption{Static (around a path)}
    \includegraphics[height=1.6in,width=1.85in]{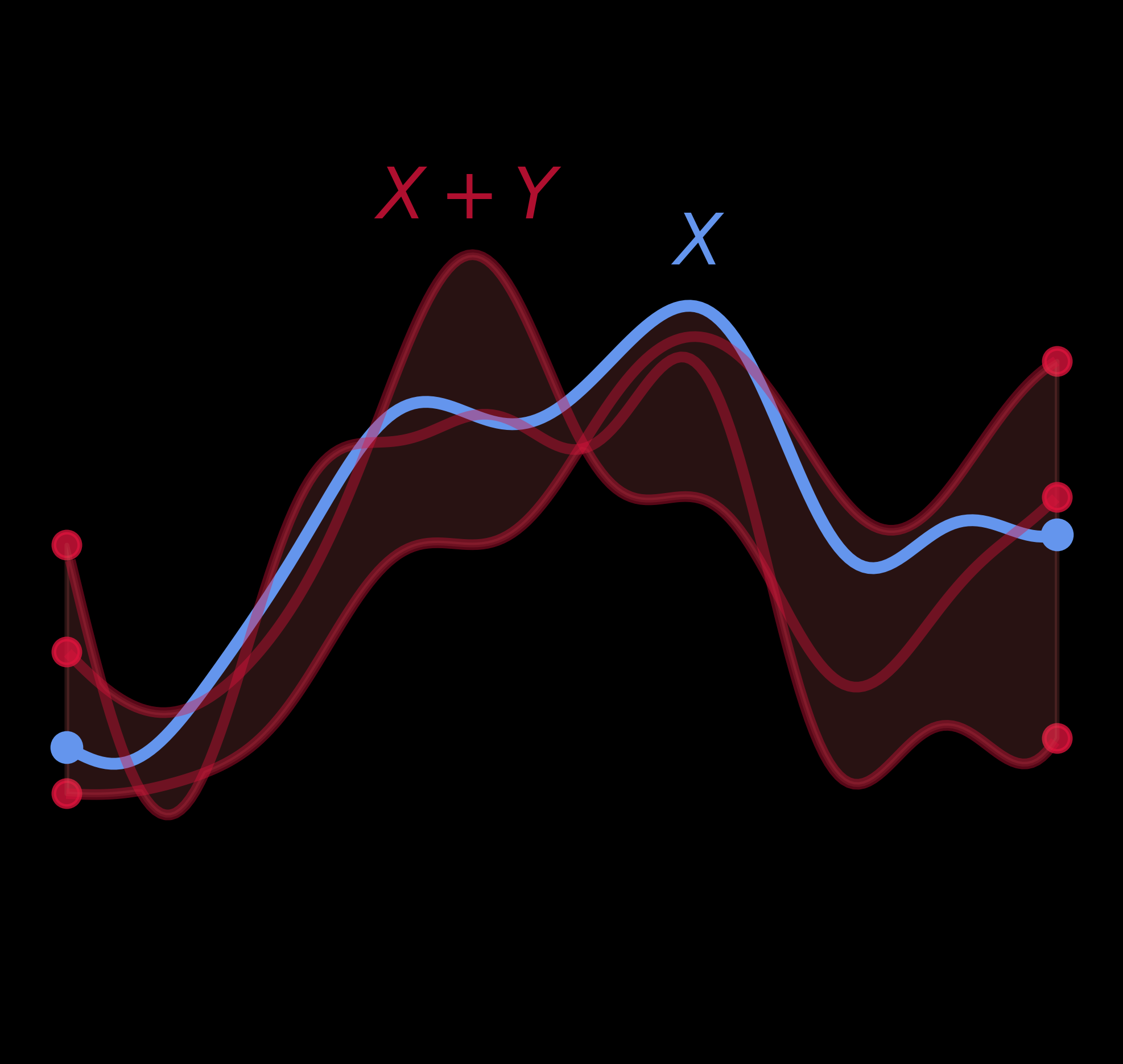}
    \label{fig:static}
\end{subfigure}
\begin{subfigure}[b]{0.33\textwidth}
    \centering
    \caption{Dynamic (after a path)}
    \includegraphics[height=1.6in,width=1.85in]{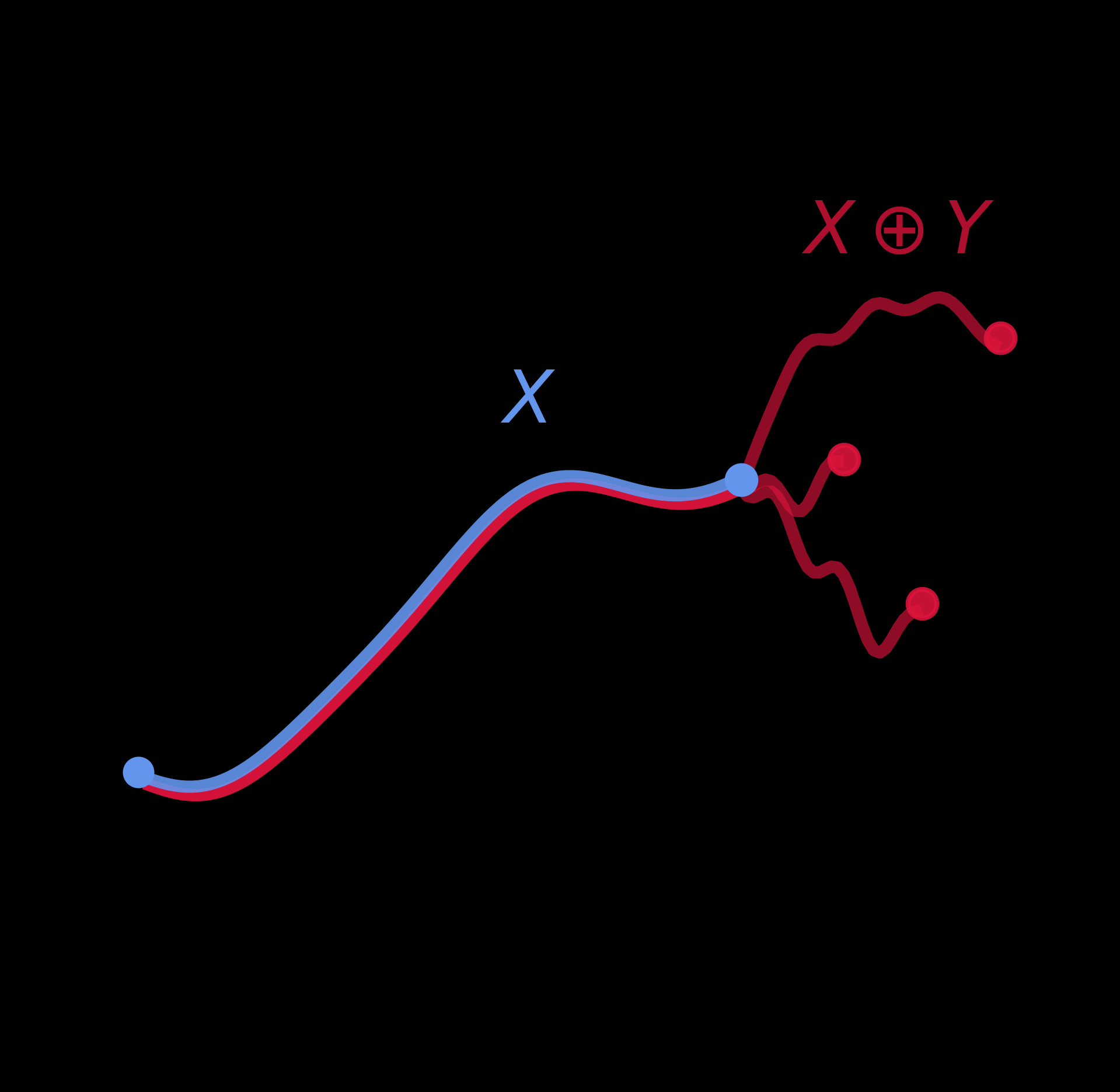}
    \label{fig:dynamic}
\end{subfigure}
\vspace{2mm}
\label{fig:exp2D}
\end{figure}
\vspace{-8mm}

 The paper is organized as follows. In \cref{sec:static}, we examine static expansions   and derive the novel intrinsic value expansion (IVE). 
 \cref{sec:FTE} puts  
 a spotlight on the functional It\^o calculus and  path signature, culminating with the functional Taylor expansion (FTE). 
 We also derive remainder estimates of the FTE and characterize the corresponding radius of convergence.  Connections are finally  established between the FTE and  static expansions. 
 In \cref{sec:otherExp}, we discuss other dynamic expansions arising from Hilbert projections and the Wiener chaos. We present the pricing and dynamic hedging  of exotic options in \cref{sec:FTEApp} 
 as possible applications of the FTE. 
 \cref{sec:conclusion} concludes the study and  
 \cref{app:Proofs} contains postponed proofs of the main results.

\textbf{Notations and Definitions.} We employ  the framework from  the functional Itô calculus \cite{Dupire}. Fix throughout a horizon $T>0$, which can be interpreted as the maturity of a financial derivative. Let $\Lambda_t = \calD([0,t],\R)$ be the Skorokhod space of c\`adl\`ag paths of length $t\in [0,T]$. 
Given $X_t\in \Lambda_t$, $X_s$ denotes the whole trajectory up to time $s\le t$, while $x_s= X_t(s)$ is the value at time $s$. 
Moreover, write $\Lambda := \bigcup_{t\in[0,T]}\Lambda_t$ for the collection of all c\`adl\`ag paths.  
\begin{definition}
A \textit{functional} is a map $f:\Lambda \to \R$, while  a  $T-$\textit{functional} refers to any map $g:\Lambda_T \to \R$.    
\end{definition}
The distance between elements of $\Lambda$ is measured according to 
\begin{equation} \label{eq:dLambda}
    d_{\Lambda}( X_t, Y_s) =t-s + \sup_{0 \le u \le t} |x_u - y_{u\wedge s} | , \quad 0 \le s \le t \le T.
\end{equation}
  We say that a functional  is $\Lambda-$\textit{continuous}  if it is continuous with respect to $d_{\Lambda}$. The $\Lambda_T-$continuity of $T-$functionals is defined similarly, where we remark that $d_{\Lambda}$ coincides with the uniform topology. We now collect important notions: 
  \begin{enumerate}
      \item[\textbf{(i)}] 
We recall the spatial and temporal \textit{functional derivatives} introduced in \cite{Dupire}, namely 

\setlength{\columnsep}{6.5cm}
\begin{multicols}{2}
\begin{align*}
    \Delta_x f(X_t) &= \lim_{h\to 0} \, \frac{f(X^{h}_t)-f(X_t)}{h}, \ &&X^{h}_t(s) \,\, =  x_s + h\,\mathds{1}_{\{s\,=\,t\}}, \quad  s\le t, 
   \\[4.1em]
      \Delta_t f(X_t) &= \lim_{\delta t\downarrow 0} \, \frac{f(X_{t,\delta t})-f(X_t)}{\delta t}, \
    &&X_{t,\delta t}(s) = x_{s \wedge t}, \hspace{1.02cm} s\le t+\delta t,
\end{align*}
\columnbreak

\vspace{11mm}
    \begin{figure}[H]
        \centering \includegraphics[width = 1.1in, height = 0.9in]{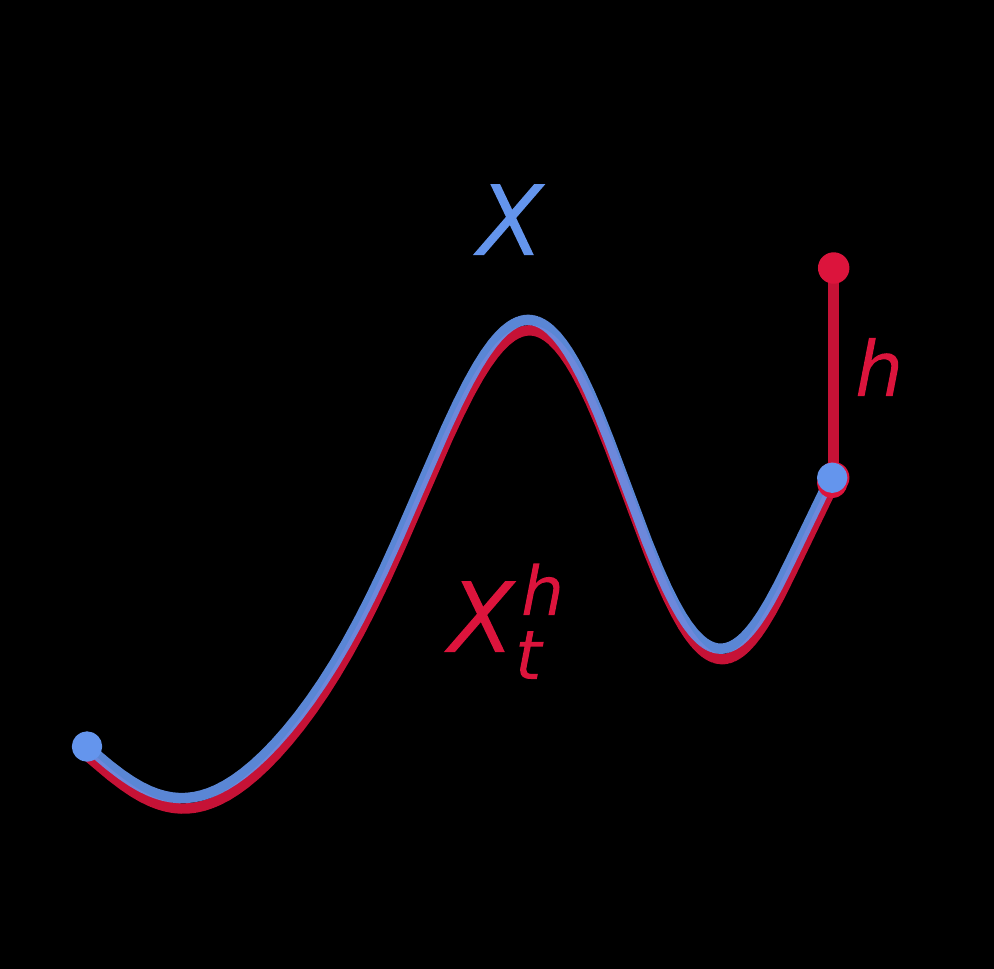}
    \end{figure}

\vspace{-8mm}
    \begin{figure}[H]
        \centering \includegraphics[width = 1.1in, height = 0.9in]{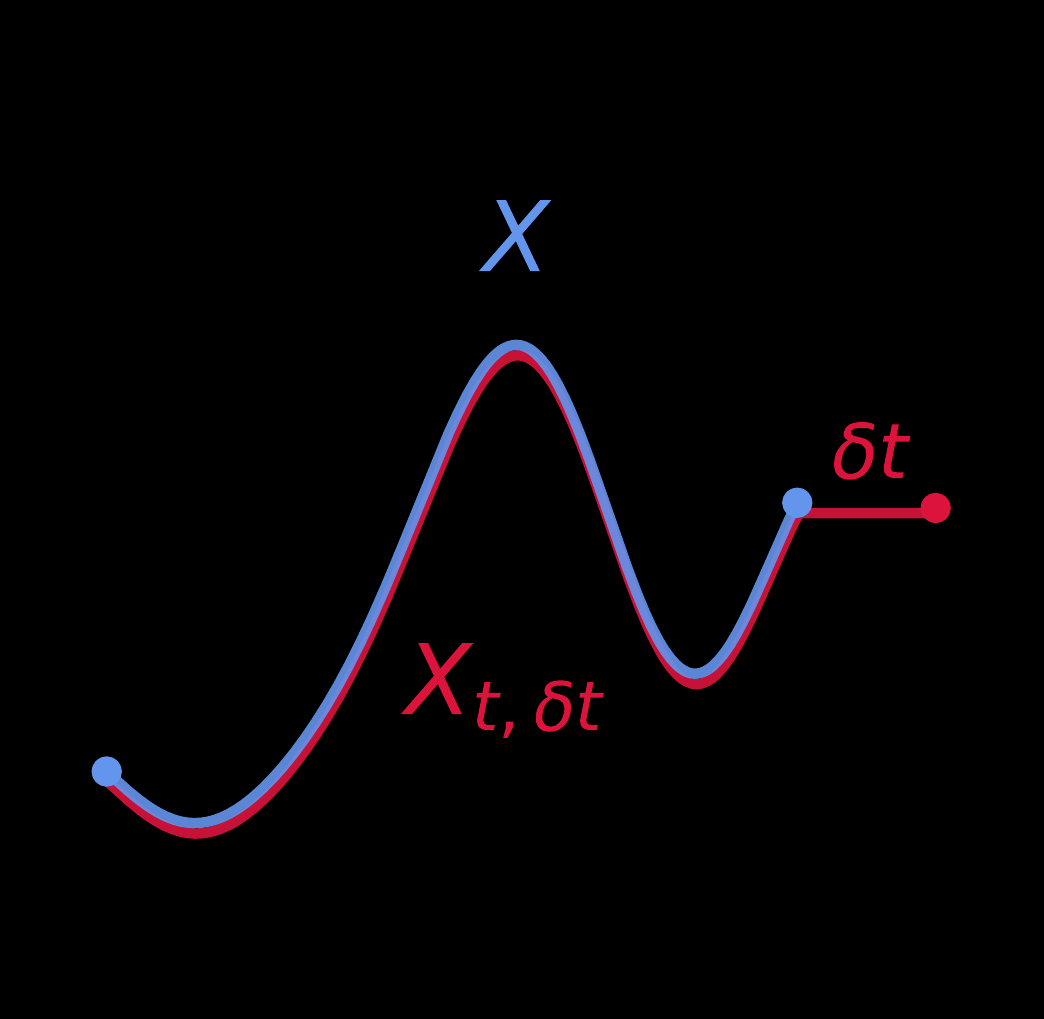}
    \end{figure} 
\end{multicols}
\vspace{-2mm}

whenever these quantities exist. 
We note that $\Delta_x$ and $\Delta_t$ need not commute; see  \cref{ex:IntDev}. 
      
      \item[\textbf{(ii)}] Let $\oplus: \Lambda^2 \to \Lambda$ be the noncommutative binary operation that concatenates paths in a continuous fashion. In other words,   if $X\in \Lambda_{s}$, $ Z \in \Lambda_{u}$ then $Y = X \oplus Z \in \Lambda_{s+u}$ is given by 
$$y_r  = x_{r \wedge s} + z_{(r- s)^+}- z_0, \quad  r \in [0,s+u].$$
 If 
 $u > T-s$, we define instead $X \oplus Z = X \oplus Z_{T-s} \in \Lambda_T$.
      
      \item[\textbf{(iii)}] 
If $\Q$ is a measure  on $\Lambda$ and $f:\Lambda \to \R$, we write $\E^{\Q}[f(Y_t) \ | \ X_s]$ for the   \textit{conditioned expectation} of $f(Y_t)$ given that $Y_t$ coincides with $X_s$ up to $s<t$. In other words, 
 \begin{equation}\label{eq:conditionedExp}
     \E^{\Q}[f(Y_t)\,|\, X_s]=  \int_{\Lambda_{t-s}}f(\underbrace{X_s \oplus Z}_{=\ Y_t}) \Q(dZ),\quad  s \le t.
 \end{equation}
To avoid ambiguity, we favor 
the letters $Y,\ Z$ over
$X$ to represent random or future paths.
      
  \end{enumerate}
  
\section{Static Expansions} 
\label{sec:static}

Functional expansions are traditionally made for paths of fixed length. Without loss of generality, we restrict ourselves to  paths  
in $\Lambda_T$. 
Contrary to $\Lambda$, the restriction  $\Lambda_T$ has the advantage of being a vector space: we can add paths, have a clear sense of directions, etc. 
For simplicity, we often express static expansions around the null path of length $T$.

\subsection{$T-$functionals and Embeddings} \label{sec:TFuncEmbedd}
It proves  helpful to make 
 the distinction between functionals defined on $\Lambda$ and $T-$functionals (those restricted to $\Lambda_T$).    
If $X \in \Lambda_T$ is a stock price  trajectory, then $g:\Lambda_T \to \R$ can be viewed as the payoff of a claim with maturity $T$. Depending on the term sheet of the claim, it is a priori unclear  
how to  quantify the value of the payoff having observed the price path only  up to some intermediate time $t<T$ only. 
We here address this question by 
embedding $T-$functionals into the space of "running" functionals. 
\begin{definition}\label{def:embedding}
Let $\frakF$ be a set of functionals and $\frakF_T$ is the restriction of $\frakF$ to $\Lambda_T$.   
Then an \textit{embedding} is an operator $\iota: \frakF_T \hookrightarrow \frakF$ such that  $f = \iota  g \in \frakF$ 
satisfies  $f|_{\Lambda_T} = g$ for all $g \in \frakF_T$. 
\end{definition}

Notice that many embeddings may exist as the only constraint is that the embedded functional coincides   with a precribed $T-$functional on $\Lambda_T$. 
We here focus our attention on the following class of embeddings which resonates with many meanings in finance. 
Let $\Q$ be a  probability measure on $\Lambda$ and define 
\begin{equation} \label{eq:embedding}
    \iota_{\Q}   g(X_t) := \E^{\Q}[g(Y_T) \,|\, X_t ],
\end{equation}
  with  the conditioned expectation $\E^{\Q}[ \ \cdot \ | \ \cdot \ ]$  given in \eqref{eq:conditionedExp}. In \cref{def:embedding}, we may choose $\frakF_T$, $\frakF$ to be the set of uniformly bounded $(T-)$functionals so that clearly $\iota_{\Q}g \in \frakF$ and is well-defined. Clearly, $\iota_{\Q}g(X_T) = g(X_T)$ so that $\iota_{\Q}$ is indeed an embedding  for every $\Q$.  We stress that $X_t$ need not be in the support of $\Q$ as the latter is only introduced to extrapolate the future.    
In  financial terms and  provided that  $\Q$ is a risk-neutral measure,  $\iota_{\Q}g$ is the \textit{price functional} of the claim $g$  under the 
"model" $\Q$. Taking for instance the measure $\Q_\sigma$ such that the scaled canonical process $Y/ \sigma$ is 
Brownian motion under $\Q_\sigma$, $\sigma >  0$, then
$\iota_{\Q_\sigma}g$ gives the Bachelier price of $g$   with zero interest rates and volatility $\sigma$.  
An important case arises when $\sigma \downarrow 0$, leading to the \textit{intrinsic functional},
 \begin{equation}\label{eq:IV}
     \iota_0 g(X_t) :=  \iota_{\Q_0} g(X_t) = g(X_{t,T-t}).
 \end{equation}
The intrinsic functional is central in \cref{sec:IVE} where we introduce  the
intrinsic value expansion. 

\subsection{Volterra Series} \label{sec:Volterra}

Vito Volterra (1860-1940)  was arguably the first mathematician to introduce and study   functional expansions.  
His findings have had tremendous influence in nonlinear systems, where the Volterra expansion allows to express the solution functional linking the input path to the output (see, e.g., \cite{Volterra,PalmPoggio,Boyd}).  
His idea was to approximate a Fr\'echet differentiable $T-$functional $g:\Lambda_T \to \R$ by a multivariate 
function taking
pointwise values of the input path.

For simplicity, suppose that  $x_0=0$ and consider  the regular partitions $\Pi^N = \{ s_n = \frac{nT}{N},  \ n=0,\ldots,N \}$, $N\ge 1$. 
Then  $g(X_T)$ is estimated by 
$g^N(x_{s_1},...,x_{s_{N}}) = g(X_T^N)$ with the piecewise constant path     
 $X_T^N = \sum_{n=0}^{N-1} x_{s_n}\mathds{1}_{[s_n,s_{n+1})} + x_{T}\mathds{1}_{\{T\}}.$ 
If $g^N$ is real analytic and $\boldsymbol{0} = (0,...,0)\in \R^{N}$, a Taylor expansion of $h \mapsto g^N(h \  x_{s_1},...,h \  x_{s_N}) $ 
around $h =0$ gives 
\begin{align}\label{eq:Volterra1}
   g^N(x_{s_1},\ldots ,x_{s_N}) 
   &= g^N(\boldsymbol{0}) + \sum_{n=1}^N \partial_{n}g^N(\boldsymbol{0}) x_{s_n} + \sum_{1 \le n\le m \le N} \partial_{nm}g^N(\boldsymbol{0}) \,  x_{s_n}  x_{s_m} + \ldots
\end{align}
Put differently, $\eqref{eq:Volterra1}$  is an expansion of $g(X_T^{N})$ around the flat path $X_{0,T}$. 
Now  consider the
Fr\'echet derivatives of $g$  
in the direction of the Dirac mass  $\delta(\cdot - t)$, $t\in [0,T]$, 
i.e.  the family 
$(F_t)_{t\in [0,T]}$ such that
$$D_{Y_T}g(X_T) :=  \lim_{h \to 0}\frac{g(X_T + h Y_T) - g(X_T)}{h } = \int_0^T F_tg(X_T) y_t dt,  \quad Y_T \in \Lambda_T, $$
 when the above is well-defined. 
From the definition of $g^N$, choose $Y_T  = \mathds{1}_{[s_n,s_{n+1})}$ to obtain 
\begin{equation}\label{eq:psi1}
    \partial_{n}g^N(\boldsymbol{0}) = D_{\mathds{1}_{[s_n,s_{n+1})}}g(X_{0,T}) = \int_{s_n}^{s_{n+1}} F_t g(X_T)  dt.
\end{equation}
Similarly, defining  the iterated Fr\'echet derivatives $F_{t_1\cdots t_k} = F_{t_1} \cdots F_{t_k}$  yields  $$\partial_{n_1 \cdots n_k}g^N(\boldsymbol{0}) =  \int_{s_{n_k}}^{s_{n_k+1}}\cdots \int_{s_{n_1}}^{s_{{n_1}+1}} F_{t_1\cdots t_k}g(X_{0,T})  dt_{1}\cdots dt_{k}.$$
Owing to  $\eqref{eq:psi1}$ and the fact that $X^N$ equals $x_{s_n}$ on $[s_n,s_{n+1})$ for all $n\le N$, 
we can rewrite $\eqref{eq:Volterra1}$ as  
\begin{align*}\label{eq:Volterra2}
   g(X^N_T) 
   &=  g(X_{0,T})  + \int_{0}^{T} F_t g(X_{0,T}) x_t^N dt + \int_{0}^{T}\int_0^{t_2} F_{t_1 t_2} g(X_{0,T}) x_{t_1}^N x^N_{t_2} dt_1 dt_2 + \ldots
\end{align*}
Letting $N\to \infty$ gives, at least  formally, the \textit{Volterra series}, 
  \begin{equation}
       g(X_T) =   g(X_{0,T}) + \sum_{k= 1}^{\infty} \int_{\triangle_{k,T}}  F_{t_1 \cdots  t_k} g(X_{0,T}) \, x_{t_1} \cdots x_{t_k}\, dt^{\otimes k}, \label{eq:Volterra}
  \end{equation}
 with the notation $dt^{\otimes k} = dt_1 \cdots dt_k$ and  the simplexes,
\begin{equation}
    \label{eq:simplex}
 \triangle_{k,t} = \{(t_1,\ldots,t_k)\in [0,t]^k\ : \ t_1 \le \cdots \le t_k\}, \quad  k\ge 1, \quad  t\in [0,T]. 
\end{equation}

If $\psi_0 = g(X_{0,T}) $ and  $\psi_k(t_1,\ldots,t_k) =  F_{t_1 \cdots  t_k} g(X_{0,T})$ denotes the \textit{Volterra kernels}, then $\eqref{eq:Volterra}$ can be written more compactly as 
     $g(X_T)  = \psi_0+ \sum_{k= 1}^{\infty}V_k \psi_k(X_T), $
 with the operators 
 $$V_k \psi_k(X_t) = \int_{\triangle_{k,t}}  \psi_k \, x^{\otimes k} dt^{\otimes k} , \quad X_t \in \Lambda_t.$$
 Note that $V_k \psi_k$ is a \textit{homogeneous  functional of degree $k$} in the sense that $V_k\psi_k(\gamma X_t) = \gamma^k V_k \psi_k( X_t) $ with $\gamma X_t := (\gamma x_s)_{s\in [0,t]}$,  $\gamma \ne 0$. For instance, $V_1\psi_1 $ is a linear functional,  $V_2 \psi_2$ is homogeneous quadratic, and so on. 
 By construction, we see that  the kernels 
 can be retrieved from Dirac impulses of $g(X_T)$ at specific dates. \cref{fig:Volterra} shows the approximation of $\psi_1$ as in $\eqref{eq:psi1}$ obtained by bumping a single 
 value  of the discretized path.

\begin{example}
 \label{ex:nullfunctional}
Let $g:\Lambda_T\to \R$ with Volterra expansion $g(X_T) = \psi_0+ \sum_{k= 1}^{\infty}V_k \psi_k(X_T)$. Consider the embedding (see \cref{sec:TFuncEmbedd}) given by $f(X_t) = \psi_0+ \sum_{k= 1}^{\infty}V_k \psi_k(X_t)$. In other words, we 
truncate  the simplexes of each homogeneous functional $(V_k\psi_k)$. It turns out that $f$ coincides with the  $f^0(X_t) := g(X^0_{t,T-t})$ 
where $X^0_{t,T-t}\in\Lambda_T$ consists of the path $X_t$ glued with the null path of length $T-t$; see \cref{fig:VolterraEmbedding}.
Indeed, observe that $V_k \psi_k(X^0_{t,T-t})$ and $V_k \psi_k(X_{t})$ are equal as the integrand of the former 
is zero as soon as $t_k>t$. 
We can thus regard $f^0$ as the natural embedding of $g$ associated to the Volterra series.  
\end{example}

\begin{figure}[H]
\caption{Path deformation  in  Volterra and Wiener series.}
\vspace{-2mm}
\begin{subfigure}[b]{0.49\textwidth}
    \centering
    \caption{Volterra (Fr\'echet  
    derivative)}
    \includegraphics[height=1.9in,width=2.6in]{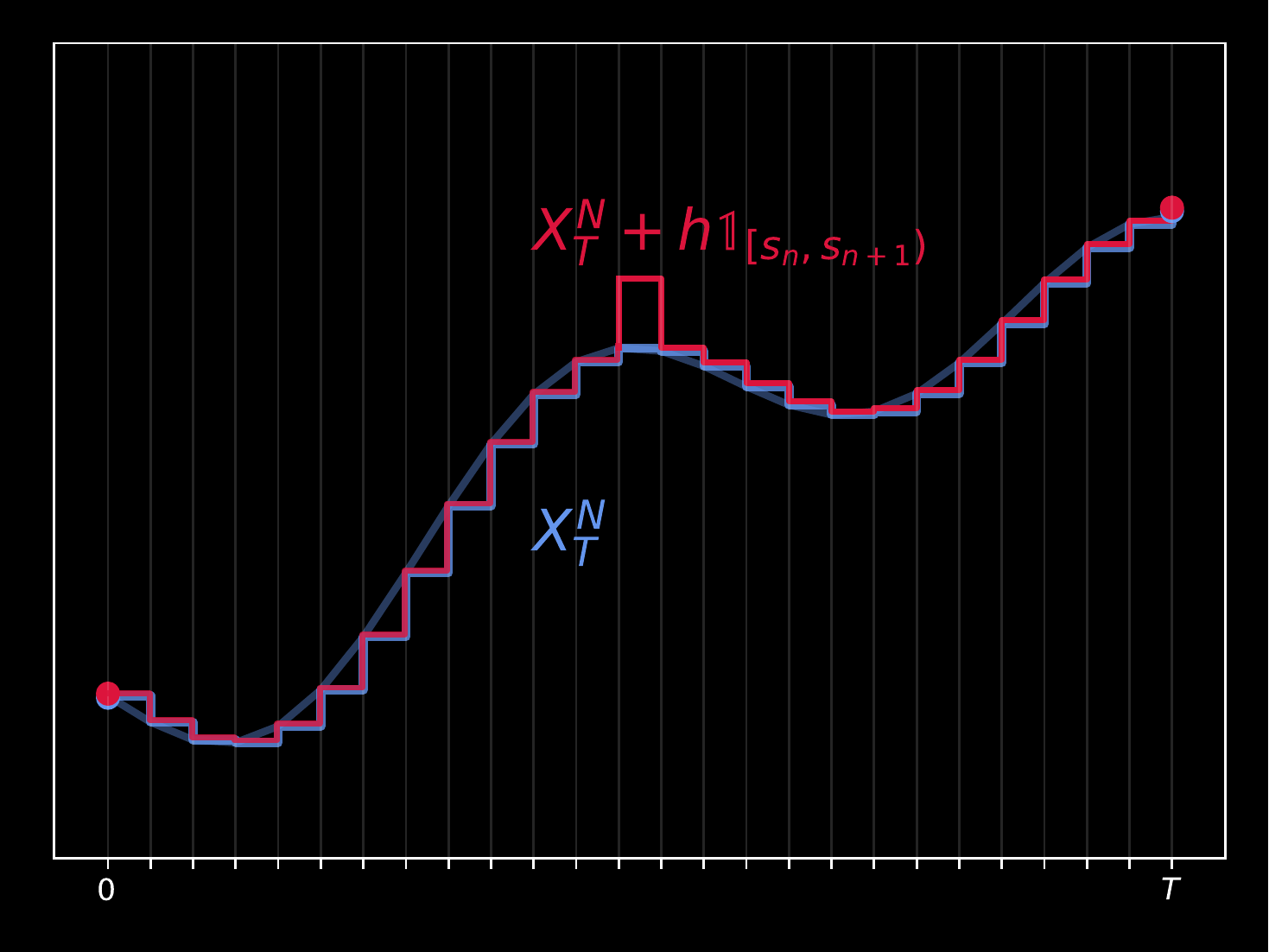}
    \label{fig:Volterra}
\end{subfigure}
\begin{subfigure}[b]{0.49\textwidth}
    \centering
    \caption{Wiener (Malliavin derivative)}
    \includegraphics[height=1.9in,width=2.6in]{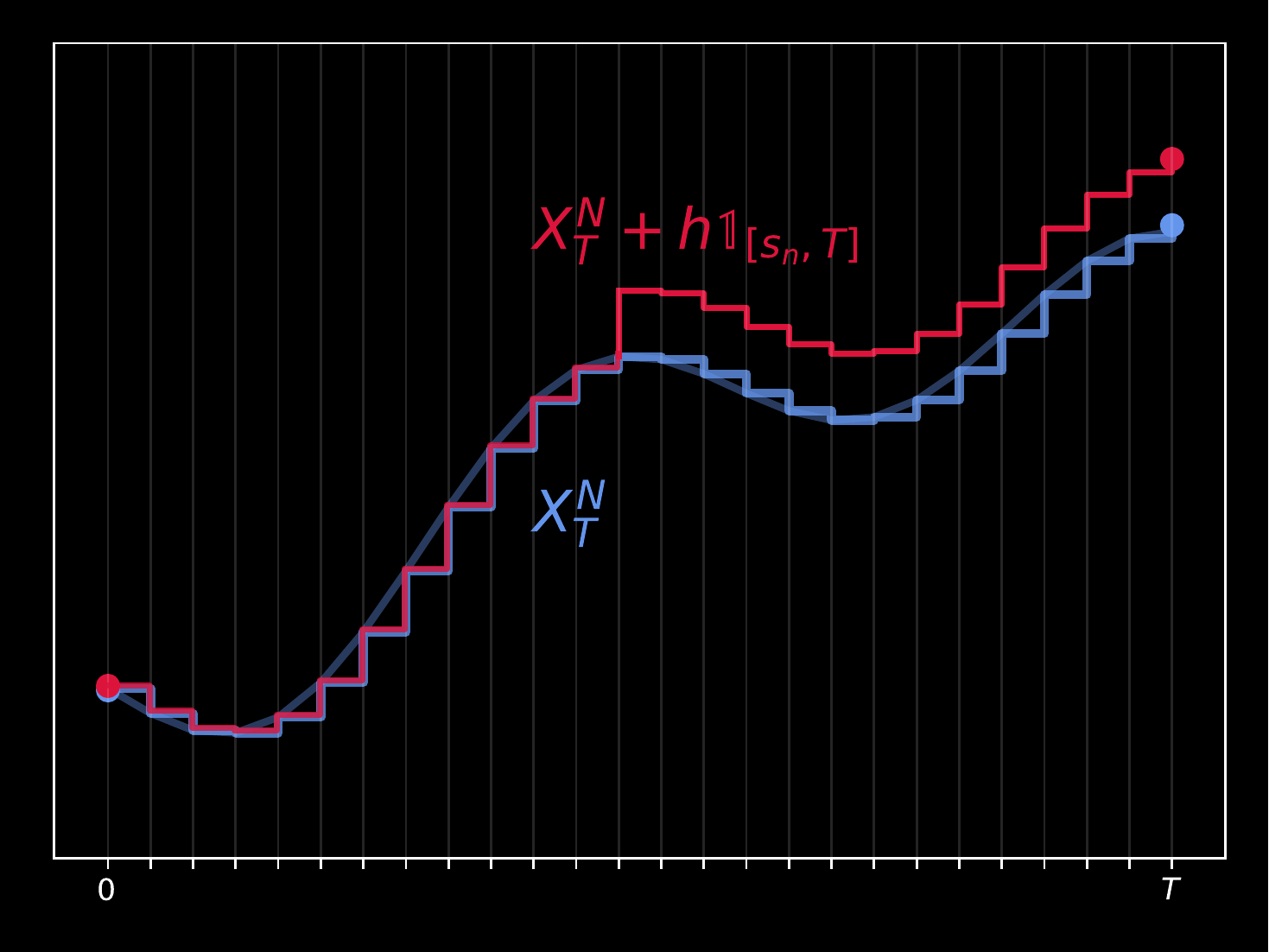}
    \label{fig:WienerChaos}
\end{subfigure}
\label{fig:Kernel}
\end{figure}
\subsection{Wiener Series}
\label{sec:WienerSeries}
As seen above, the Volterra series expresses $T-$functionals$-$seen as the output of a system$-$
in terms of pointwise values of the source path. 
In contrast, the Wiener series and homogeneous chaos expansion \cite{Wiener58,Wiener38}$-$corresponding to the deterministic and stochastic case,  respectively$-$encode the path with its infinitesimal increments.
We now present 
a seemingly unconventional  derivation of the Wiener series by  
 borrowing,  deliberately,  anachronistic elements from the Malliavin calculus.

Let $X\in \Lambda_T$ be a continuous path  of finite variation. 
Assume again that  $x_0 = 0$,   the knowledge of    $(x_{s_1},...,x_{s_N})$ is equivalent to the knowledge of the increments  $\delta x_{s_n} =x_{s_n} - x_{s_{n-1}}$, $n=1,\ldots,N$. We can therefore write $g^N(\delta x_{s_1},...,\delta x_{s_N}) := g(X_T^N)$ with $X_T^N$ as in \cref{sec:Volterra}. 
Similarly, write the Taylor expansion of $h \mapsto g^N(h \  \delta x_{s_1},...,h \  \delta  x_{s_N}) $  
at $h =0$. That is,
\begin{align*}
    g^N(x_{s_1},\ldots ,x_{s_N}) &= \phi^N_0 + \sum_{n=1}^N \phi^N_1(s_n) \delta  x_{s_n} + \sum_{1 \le n\le m \le N} \phi^N_2(s_n,s_m) \, \delta  x_{s_n}  \delta  x_{s_m}  + \ldots 
\end{align*}
with $\phi^N_0 = g^N(\boldsymbol{0})\ (= g(X_{0,T}) \ \forall N)$ and  $\phi^N_k(t_1,\ldots,t_k) = \partial_{n_1 ... n_k}g^N(\boldsymbol{0})$ whenever $t_l \in (s_{n_{l-1}},s_{n_{l}}]$, $l=1,\ldots,k$. 
Defining $\phi_0 = g(X_{0,T})$ and provided that $\phi_k := \lim_{N\to \infty} \phi_k^N $ exist for all $k\ge 1$, a passage to the limit yields, heuristically, the \textit{Wiener series}, 
\begin{align}\label{eq:chaosStieltjes}
 g(X_T) = \phi_0 + \sum_{k= 1}^{\infty}  \int_{\triangle_{k,T}}  \phi_k \, dx^{\otimes k} = \phi_0 + \sum_{k= 1}^{\infty} \int_{\triangle_{k,T}}  \phi_k(t_1,\ldots,t_k) \, dx_{t_1} \cdots dx_{t_k}.
 \end{align}
Let us express the kernels explicitly.  
First, note that $\partial_n g^N(\delta x_{s_1},...,\delta x_{s_N}) = D_{s_n}g(X_T^N) $, where $D_{\cdot}$ is the \textit{Malliavin derivative} 
 \begin{equation}\label{eq:MalliavinDer}
     D_tg(X_T) := D_{\mathds{1}_{[t,T]}}g(X_T) = \lim_{h \downarrow 0}\frac{g(X_T + h \mathds{1}_{[t,T]}) - g(X_T)}{h}. 
 \end{equation}
Indeed, bumping an increment of $X^N_T$
induces a parallel shift in the path 
following the shock; see \cref{fig:WienerChaos}. Letting $N\to \infty$ and evaluating the result at the flat path $X_{0,T}$ leads to $\phi_1(t_1) = D_{t_1}g(X_{0,T})$.  Similarly, writing $D_{t_1\cdots t_k} = D_{t_1} \cdots D_{t_k}$ gives $\partial_{n_1 ... n_k}g^N(\delta x_{s_1},...,\delta x_{s_N}) = D_{s_{n_1} \cdots s_{n_k}}g(X_T^N)$. In conclusion, the kernels in the Wiener series are  iterated Malliavin derivatives, namely
  $\phi_k(t_1,...,t_k) = D_{t_1\cdots t_k}g(X_{0,T})$.  
\begin{remark}
The Volterra and Wiener series  
 share many similarities. 
For a comprehensive comparison between the Volterra and Wiener chaos expansion, see  \citet{PalmPoggio}. 
In particular, the authors  show how  $\psi_k, \ \phi_k$ relate to one another; when $\phi_k$ is smooth, then 
$\psi_k = (-1)^k \partial_{t_1...t_k} \phi_k$ using integration by parts. 
This can be seen from \cref{fig:Kernel} in the case $k=1$.
Alternatively, observe that the Malliavin derivative is 
given by a Dirac impulse of the  "white noise" process $\dot{X} = \frac{d}{dt}X$.  
Consequently, the Volterra kernels of $\dot{X}$ are precisely the Wiener kernels of $X$. 
\end{remark}
  So far, the paths are of finite variation so that the iterated integrals in $\eqref{eq:chaosStieltjes}$ are in the Riemann-Stieltjes sense.  
  Before proceeding with the stochastic case (\cref{sec:chaos}) 
  where the integrals are in the It\^o sense, 
  we  propose 
  a generalization of the Wiener series to paths of finite quadratic variation along a given sequence of partitions.

\subsection{Intrinsic Value Expansion}\label{sec:IVE}

Consider a smooth $T-$functional  $g$ with  associated
intrinsic functional
 $\iota_{0}g(X_t) = g(X_{t,T-t})$  (see  \cref{sec:TFuncEmbedd}). Let us compute the functional derivatives of $\iota_{0}g$. First, the temporal derivative vanishes because the intrinsic value function is already defined through a flat extension. Second, it is easily checked that the spatial derivative  coincides with the Malliavin derivative of $g$ evaluated at the stopped path $X_{t,T-t}$. To sum up, 
 \begin{align}
 \Delta_t \iota_{0}g &\equiv 0.\\[0.5em]
     \Delta_x \iota_{0}g(X_t) &= D_t g(X_{t,T-t}).\label{eq:IVx}
 \end{align}
 
These observations are  illustrated in  \cref{fig:IVDerivatives} and incidentally  allow to represent $g$ in a compact way.  To keep the derivation short, we borrow results from  \cref{sec:FTE}. 

\begin{proposition}\label{prop:IVRepr}
Let $g$ be a twice continuously Malliavin differentiable $T-$functional. If $X_T$ is a continuous with finite quadratic variation along a given  sequence of partitions (see \cref{def:QV}), then 
\begin{equation}\label{eq:IVStrat}
    g(X_T) = g(X_{0,T}) + \int_0^T D_t g(X_{t,T-t}) \circ dx_t,
\end{equation}
where the integral is in the sense of Stratonovich. 
\end{proposition}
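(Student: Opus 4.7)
The plan is to apply the functional Itô formula to the embedded functional $f := \iota_0 g$ and convert the resulting Itô integral into a Stratonovich one. The two identities $\Delta_t \iota_0 g \equiv 0$ and $\Delta_x \iota_0 g(X_t) = D_t g(X_{t,T-t})$ already recorded in the text do most of the work; the twice continuous Malliavin differentiability of $g$ will be used to guarantee that $f$ has enough functional regularity (both in the Dupire sense and pathwise along the prescribed partitions) to make the functional Itô formula available.

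First I would apply the (pathwise/Föllmer-type) functional Itô formula to $f$ along $X$. Because $f(X_T) = g(X_T)$, $f(X_0) = g(X_{0,T})$, and $\Delta_t f \equiv 0$, it reduces to
\begin{equation*}
g(X_T) = g(X_{0,T}) + \int_0^T \Delta_x f(X_t)\, dx_t + \frac{1}{2} \int_0^T \Delta_x(\Delta_x f)(X_t)\, d[x]_t,
\end{equation*}
where the integral against $dx_t$ is the pathwise Itô integral defined along the given sequence of partitions. Substituting $\Delta_x f(X_t) = D_t g(X_{t,T-t})$ gives an Itô representation with one second-order correction.

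Next I would upgrade the Itô integral to a Stratonovich integral. Writing $Y_t := \Delta_x f(X_t)$ and using the Stratonovich-to-Itô conversion
\begin{equation*}
\int_0^T Y_t \circ dx_t \;=\; \int_0^T Y_t\, dx_t \;+\; \tfrac{1}{2}[Y,x]_T,
\end{equation*}
I would apply the functional Itô formula a second time to $Y_t = \Delta_x f(X_t)$ (permitted because $\Delta_x f$ is once continuously differentiable in the Dupire sense, by the assumed regularity on $g$) to identify
\begin{equation*}
[Y,x]_T \;=\; \int_0^T \Delta_x(\Delta_x f)(X_t)\, d[x]_t.
\end{equation*}
Plugging this into the previous display, the two $d[x]$-terms cancel and I obtain exactly $g(X_T) = g(X_{0,T}) + \int_0^T D_t g(X_{t,T-t}) \circ dx_t$, as claimed.

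The main obstacle, and the place where care is needed, is the regularity bookkeeping: one has to verify that twice continuous Malliavin differentiability of the $T$-functional $g$ really does translate into the joint regularity of $f = \iota_0 g$ required by the pathwise functional Itô formula of \cite{Dupire} under the assumed finite quadratic variation along a fixed partition sequence, and that the "intrinsic" identities for $\Delta_x f$ and $\Delta_x \Delta_x f$ in terms of iterated Malliavin derivatives of $g$ (already visible in \eqref{eq:IVx}) hold with the required continuity in $X_t$. Once this is done, the rest is a clean cancellation between the Itô second-order term and the Stratonovich correction.
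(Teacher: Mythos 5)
Your proposal is correct in outline but takes a different route from the paper, and the difference matters for the regularity bookkeeping. The paper's proof is a one-step application of the functional Stratonovich formula (\cref{thm:FSF}) to $f=\iota_0 g$: since $\Delta_t f\equiv 0$ and $\Delta_x f(X_t)=D_tg(X_{t,T-t})$ by \eqref{eq:IVx}, the identity \eqref{eq:IVStrat} drops out immediately. You instead re-derive the Stratonovich formula inside the proof: functional It\^o formula for $f$, then an It\^o-to-Stratonovich conversion in which the covariation $[\Delta_x f(X),x]_T$ is computed by a second application of the functional It\^o formula to $\Delta_x f$, after which the two quadratic-variation terms cancel. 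This is the classical derivation and it does land on the right identity, but it is more expensive in two ways. First, it duplicates work: \cref{thm:FSF} is proved in \cref{app:FSF} precisely so that statements like this one become immediate. Second, and more substantively, applying the functional It\^o formula to $Y=\Delta_x f$ requires $\Delta_x f\in\C^{1,2}$, i.e.\ a third spatial derivative $\Delta_{xxx}f$ (equivalently a third Malliavin derivative of $g$), which the hypothesis of twice continuous Malliavin differentiability does not provide; your parenthetical ``once continuously differentiable in the Dupire sense'' is not sufficient for that step. The paper avoids this entirely because the trapezoidal Riemann sums in the proof of \cref{thm:FSF} make the second-order terms cancel through a symmetric mean-value argument, so only $f\in\C^{1,2}$ is ever needed. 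If you want to keep your route, you must either strengthen the hypothesis to three Malliavin derivatives or justify the covariation identity $[\Delta_x f(X),x]_T=\int_0^T\Delta_{xx}f(X_t)\,d[x]_t$ pathwise by a first-order argument that does not invoke the full It\^o formula for $\Delta_x f$.
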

\begin{proof}
From the regularity of $g$ in the statement,  we can apply  the pathwise functional  Stratonovich formula (see \cref{thm:FSF}) to obtain 
\begin{align*}
    g(X_T)  &= \iota_{0}g(X_0) +\int_{0}^T \Delta_t \iota_{0}g(X_t)  dt +  \int_{0}^T \Delta_x \iota_{0}g(X_t) \circ dx_t.
\end{align*}
The result follows from \eqref{eq:IVx}, $ \iota_{0}g(X_0)=g(X_{0,T})$, and $\Delta_t \iota_{0}g \equiv 0$.   
\end{proof}
\begin{figure}[H]
\caption{Functional derivatives of the intrinsic value functional.}
\vspace{-2mm}
\begin{subfigure}[b]{0.49\textwidth}
    \centering
    \caption{$\Delta_t \iota_0 g(X_t) \equiv 0$}
    \includegraphics[height=1.6in,width=2.4in]{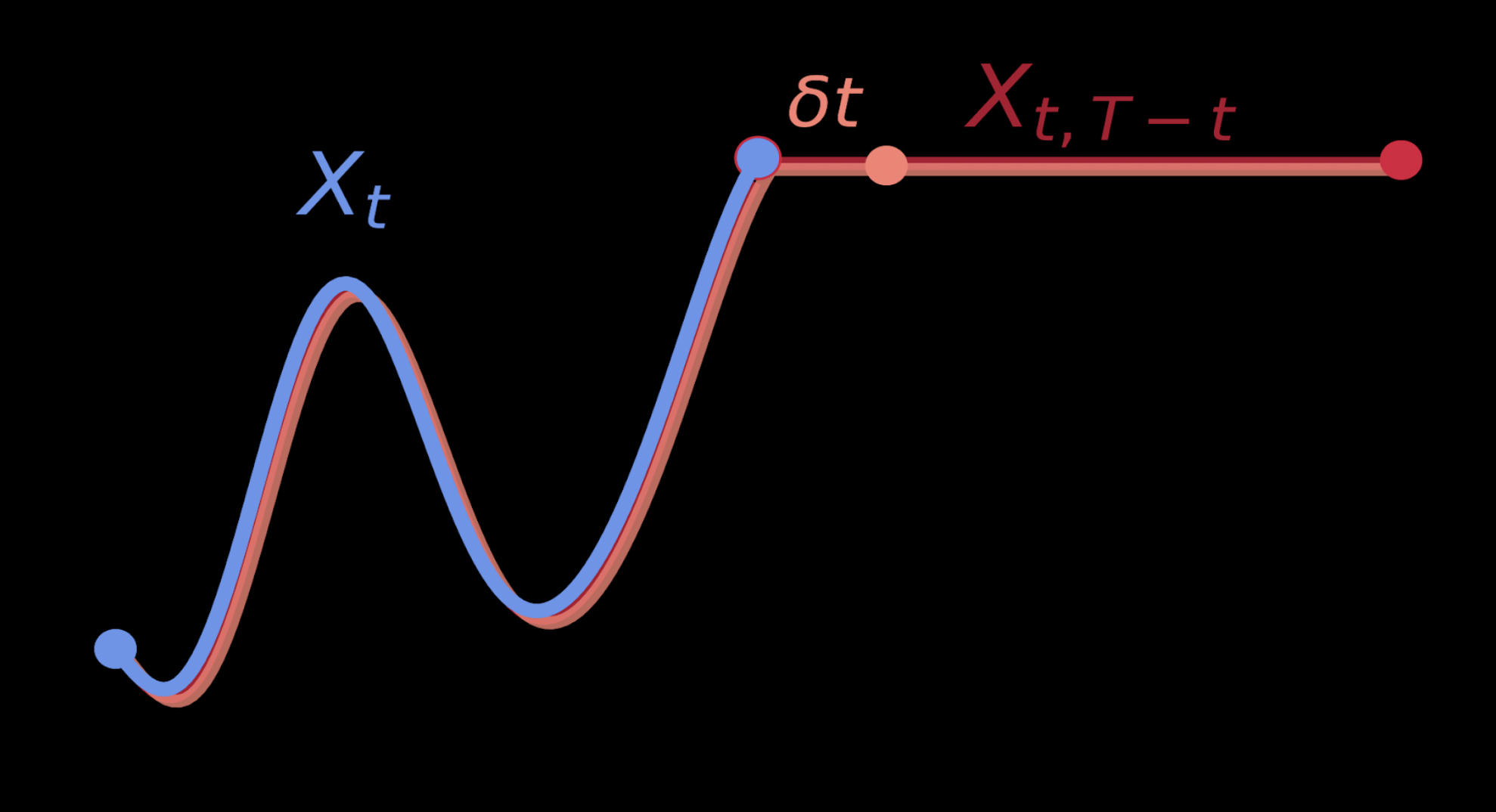}
    \label{fig:IVt}
\end{subfigure}
\begin{subfigure}[b]{0.49\textwidth}
    \centering
    \caption{$\Delta_x \iota_0 g(X_t) = D_t g(X_{t,T-t})$} \includegraphics[height=1.6in,width=2.4in]{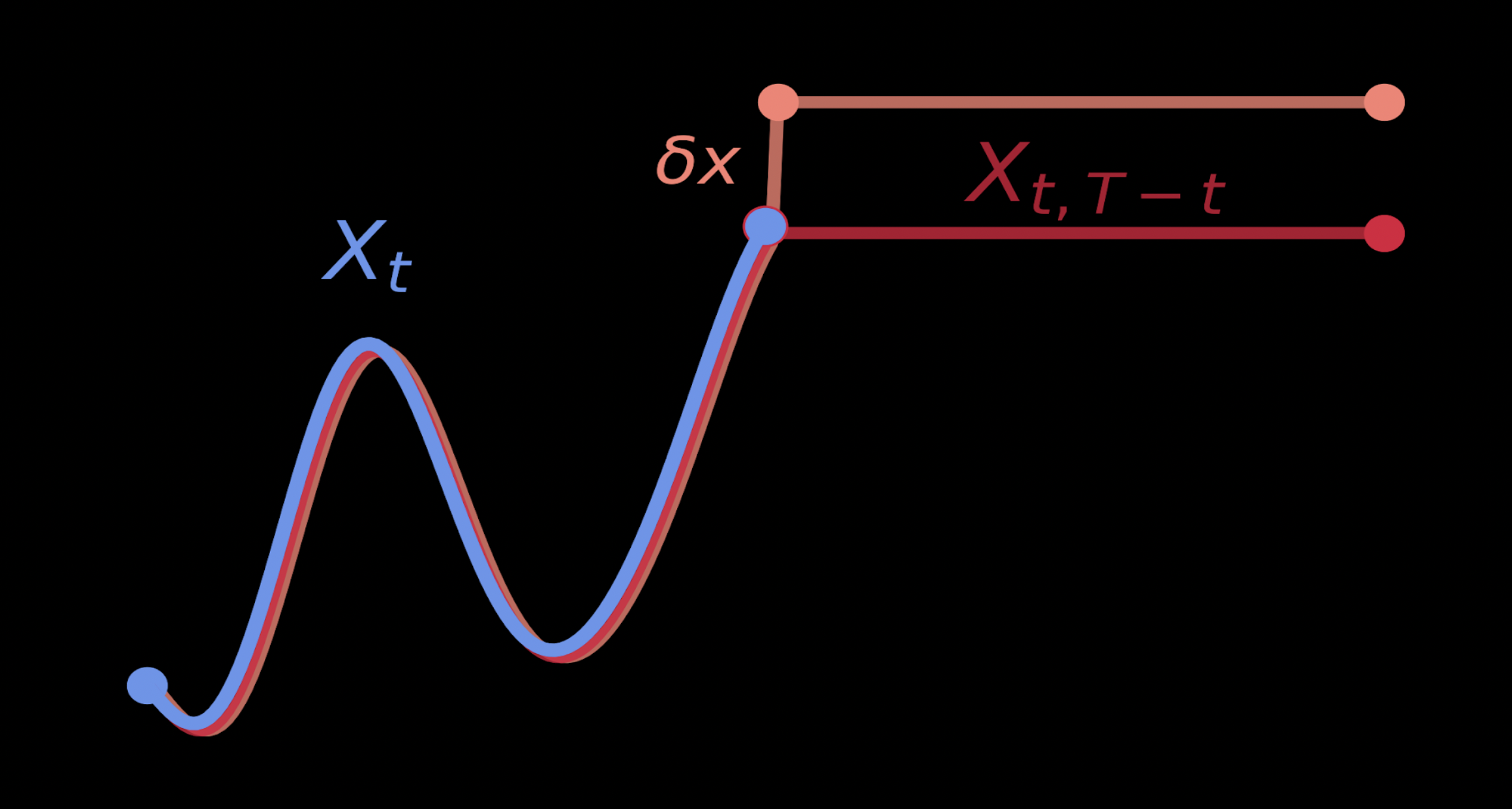}
    \label{fig:IVx}
\end{subfigure}
\label{fig:IVDerivatives}
\end{figure}
We can now iteratively apply \cref{prop:IVRepr} to derive  the \textit{intrinsic value expansion} (IVE) of $g$. 
\begin{theorem}\label{thm:IVE} 
\textnormal{(\textbf{Intrinsic Value Expansion (IVE)})} Let $K\ge 1$ and $g:\Lambda_T\to \R$ be $(K+1)-$times Malliavin differentiable and $X_T$ as in \cref{prop:IVRepr}.   Then 
  \begin{align} \label{eq:IVE}
 	g(X_T) &= g(X_{0,T}) + \sum_{k= 1}^{K-1} \int_{\triangle_{k,T}}  D_{t_1 \cdots t_k}g(X_{0,T})  \, \circ dx^{\otimes k} + R_K(X_T), \\[0.5em]
  R_K(X_T) &= \int_{\triangle_{K,T}}  D_{t_1 \cdots t_K}g(X_{t_1,T-t_1})  \, \circ dx^{\otimes K}.
 \end{align}
When expanding around a fixed auxiliary path $Y_T \in \Lambda_T$, the IVE becomes (assuming $x_0=0$)
  \begin{align} \label{eq:IVEXY}
 	g(X_T + Y_T) &= g(Y_T) + \sum_{k= 1}^{K-1} \int_{\triangle_{k,T}}  D_{t_1 \cdots t_k}g(Y_T)  \, \circ dx^{\otimes k} + R_K(X_T,Y_T).\\[0.5em] 
   R_K(X_T,Y_T) &= \int_{\triangle_{K,T}}  D_{t_1 \cdots t_K}g(Y_T + X_{t_1,T-t_1})  \, \circ dx^{\otimes K}.
 \end{align}
\end{theorem}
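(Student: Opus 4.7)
The plan is to proceed by induction on $K$, at each step peeling off one layer of Stratonovich integration from the remainder using \cref{prop:IVRepr}. The base case $K=1$ is precisely \cref{prop:IVRepr}, which can be rewritten as
\[
g(X_T) = g(X_{0,T}) + R_1(X_T), \qquad R_1(X_T) = \int_{\triangle_{1,T}} D_{t_1} g(X_{t_1,T-t_1}) \circ dx_{t_1}.
\]
For the inductive step I assume the expansion up to order $k<K$ with the claimed form of $R_k$, and apply \cref{prop:IVRepr} not to $g$ itself but to the $T$-functional $h_{t_1,\ldots,t_k}: Y_T \mapsto D_{t_1 \cdots t_k} g(Y_T)$ at the argument $Y_T = X_{t_1,T-t_1}$. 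This gives
\[
D_{t_1 \cdots t_k}g(X_{t_1,T-t_1}) = D_{t_1 \cdots t_k}g(X_{0,T}) + \int_0^T D_s D_{t_1 \cdots t_k} g\bigl( (X_{t_1,T-t_1})_{s,T-s}\bigr) \circ d(X_{t_1,T-t_1})_s.
\]
The crucial observation is that $X_{t_1,T-t_1}$ is constant on $[t_1,T]$, so the Stratonovich integral above receives contributions only from $s \le t_1$; for such $s$ the stopped argument simplifies to $X_{s,T-s}$, while $D_s D_{t_1\cdots t_k} = D_{s\, t_1 \cdots t_k}$ by the convention fixed after \eqref{eq:MalliavinDer}.

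Substituting this identity into $R_k(X_T) = \int_{\triangle_{k,T}} D_{t_1\cdots t_k}g(X_{t_1,T-t_1}) \circ dx^{\otimes k}$ and splitting the integral yields a sum of two terms: the first reproduces $\int_{\triangle_{k,T}} D_{t_1 \cdots t_k}g(X_{0,T}) \circ dx^{\otimes k}$, the $k$-th summand of the IVE; the second,
\[
\int_{\triangle_{k,T}} \!\!\int_0^{t_1} D_{s\,t_1\cdots t_k}g(X_{s,T-s}) \circ dx_s \circ dx_{t_1} \cdots \circ dx_{t_k},
\]
is, after relabelling $(s,t_1,\ldots,t_k) \mapsto (u_1,\ldots,u_{k+1})$ with $u_1 \le \cdots \le u_{k+1}$, exactly $R_{k+1}(X_T)$ over the simplex $\triangle_{k+1,T}$. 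Closing the induction at $k=K$ gives \eqref{eq:IVE}. The auxiliary-path version \eqref{eq:IVEXY} follows by applying the same argument to $\tilde g(Z_T) := g(Z_T + Y_T)$ and noting that the Malliavin derivative, being a directional derivative in an additive direction, satisfies $D_{t_1 \cdots t_k}\tilde g(Z_T) = D_{t_1 \cdots t_k} g(Z_T + Y_T)$; evaluating at $Z_T = X_T$ (and at the relevant stopped paths in the remainder) yields the stated formula, using $x_0=0$ so that $X_T + Y_T$ is unambiguously defined via pointwise addition.

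The main obstacles are bookkeeping rather than substantive. First, one must check that at each iteration \cref{prop:IVRepr} is applicable to the functional $h_{t_1,\ldots,t_k}$: this requires that $D_{t_1 \cdots t_k}g$ be itself twice Malliavin differentiable as a $T$-functional, which is why the hypothesis demands $g$ be $(K{+}1)$-times Malliavin differentiable, and that the stopped path $X_{t_1,T-t_1}$ retain finite quadratic variation along the relevant partition sequence (which it does, since flat extensions do not add quadratic variation). Second, one must justify interchanging the inner Stratonovich integration with the outer iterated integrals over $\triangle_{k,T}$ and restricting the inner domain from $[0,T]$ to $[0,t_1]$; both rely on the pathwise nature of the Stratonovich integral, for which the support of $d(X_{t_1,T-t_1})$ is literally $[0,t_1]$, so no stochastic Fubini is needed. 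Provided these points are handled, the recursion produces \eqref{eq:IVE}–\eqref{eq:IVEXY} cleanly.
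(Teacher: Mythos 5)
Your proof is correct and takes essentially the same route as the paper: both arguments iterate \cref{prop:IVRepr} on the iterated Malliavin derivative functional, the only cosmetic difference being that the paper recasts $D_{t_1\cdots t_k}g(X_{t_1,T-t_1})$ as a functional $g^1$ on the shrinking horizon $[0,t_1]$ and applies the proposition there, whereas you apply it on $[0,T]$ to the stopped path and truncate the Stratonovich integral using the flatness of $X_{t_1,T-t_1}$ after $t_1$. The regularity and quadratic-variation bookkeeping you flag is consistent with the paper's (largely implicit) treatment, so no gap remains.
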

\begin{proof} The case $K=1$ corresponds to \cref{prop:IVRepr}. If $K\ge 2$ and  fixed $t\in [0,T]$, set  $g^1(X_t) = D_t g(X_{t,T-t})$ on the new horizon $[0,t]$ and define  $\iota_0 g^1(X_s) = g^1(X_{s,t-s})$. Again, the temporal  derivative is zero and $\Delta_x \iota_0 g^1(X_{s,t-s})=  D_{st} g(X_{s,T-s})$. \cref{prop:IVRepr} thus gives  
\begin{equation*}
	g^1(X_t) = D_t g(X_{0,T}) + \int_0^t  D_{st} g(X_{s,T-s}) \circ dx_t, \quad  D_{st} g(X_{s,T-s}) = \Delta_x g^1(X_{s,t-s}).
\end{equation*}
Iterating  and writing $\circ \ dx^{\otimes k} = \circ \ dx_{t_1} \cdots \circ dx_{t_k}$ yields the claim. To show \eqref{eq:IVEXY}, fix $Y_T\in\Lambda_T$ and apply \eqref{eq:IVE} to the smooth $T-$functional $X_T \mapsto g(X_T + Y_T)$. 
\end{proof}

Given an infinitely Malliavin differentiable $T-$functional $g$, then its \textit{intrinsic value series} is obtained by letting $K\to \infty$ in \cref{thm:IVE}, namely  
\begin{align} \label{eq:IVS}
 	g(X_T) = g(X_{0,T}) + \sum_{k= 1}^{\infty} \int_{\triangle_{k,T}}  D_{t_1 \cdots t_k}g(X_{0,T})  \, \circ dx^{\otimes k}, 
 \end{align}
 whenever the series converge, i.e. $R_{K}(X_T) \to 0$. Similarly if  $g(\cdot +Y_T)$ is infinitely Malliavin differentiable,  then the intrisic value series of $g$ around $Y_T$ writes 
  \begin{align} \label{eq:IVS2}
 g(X_T + Y_T) = g(Y_T) + \sum_{k= 1}^{\infty} \int_{\triangle_{k,T}}  D_{t_1 \cdots t_k}g(Y_T)  \, \circ dx^{\otimes k}. 
 \end{align}

 To the best of our knowledge, the IVE  and Intrinsic value series has not received attention in the current literature. 
 It  will be put into perspective with other expansions in \cref{sec:summary}  and \cref{sec:connectionIVE}. 
 
\begin{example}
 \label{ex:intrinsicfunctional} \textbf{(Intrinsic functional)} 
Let $g:\Lambda_T \to \R$ with intrinsic value series as in \eqref{eq:IVS}. As in \cref{ex:nullfunctional}, a natural embedding of $g$ into the space of functionals consists of restricting the iterated integrals appearing in the right side of \eqref{eq:IVS} to $[0,t]\subseteq  [0,T]$. In other words, define 
\begin{equation}\label{eq:IVEt}
    	f(X_t) = g(X_{0,T}) + \sum_{k= 1}^{\infty} \int_{\triangle_{k,t}}  D_{t_1 \cdots t_k}g(X_{0,T})  \, \circ dx^{\otimes k}, \quad t \in [0,T]. 
\end{equation}
Interestingly, $f$ is precisely the intrinsic functional of $g$. Indeed,   the IVE of $\iota_{0}g(X_t) = g(X_{t,T-t})$ on $[0,t]$ has constant term  $\iota_{0}g(X_{0,t})$ and kernels given by $D_{t_1 \cdots t_k}\iota_{0}g(X_{0,t})$,  $(t_1,\ldots,t_k) \in \triangle_{k,t}$.  Observing that $\iota_{0}g(X_{0,t}) = g((X_{0,t})_{t,T-t}) = g(X_{0,T})$, we conclude from $\eqref{eq:IVEt}$ that $f = \iota_{0}g$.
It comes as no surprise that  the intrinsic functional, illustrated in \cref{fig:WienerEmbedding}, is the natural embedding of $g$ induced by the intrinsic value expansion (and in turn, the Wiener series). 
\end{example}
\subsection{Wiener Chaos} \label{sec:chaos}
In this section, we restate the Wiener chaos expansion 
as well as useful tools from Malliavin calculus. We refer the interested reader to the excellent books \cite{DiNunno, Nualart,Malliavin} for further details. 
Let us consider a
stochastic basis  $(\Lambda,\calF,\F, \Q)$ with $\sigma-$algebra $\calF$, filtration $\F$ and probability measure $\Q$. 
With a slight abuse of notation, 
We  write $X$ for the coordinate process while we rather use the letter $Y$  when taking expectation under $\Q$.  
   For $p\in [1,\infty)$,  $t\in [0,T]$, 
let $L^p(\Lambda_t)$ be the  set of functionals on $\Lambda_t$   such that  
\begin{equation}\label{eq:L(Lambda_t)}
    \infty > \lVert f \rVert_{L^p(\Lambda_t)} := \left(\int_{\Lambda_t} |f(Y)|^p \Q(dY)\right)^{1/p}. 
\end{equation}
Notice that $L^p(\Lambda_t)$ implicitly depends on  $\Q$. 
 Furthermore,   we define 
\begin{equation}\label{eq:L(Lambda)}
L^p(\Lambda) = \left\{ f: \Lambda \to \R \ : \ 
\infty > \lVert f \rVert_{L^p(\Lambda)} :=   \left(\int_{\Lambda_T} \int_0^T |f(Y_t)|^p dt \, \Q(dY) \right)^{1/p}  \right \}. 
\end{equation}
Observe that $f \in L^p(\Lambda)$ if and only if the process $f(X) = (f(X_t))_{t\in [0,T]}$ belongs to $L^p(\Q \otimes dt)$. For the special case $p=2$, we see from $\eqref{eq:L(Lambda)}$ 
that $L^2(\Lambda) = \int_{[0,T]}^{\oplus} L^2(\Lambda_t) dt,$ i.e.  $L^2(\Lambda)$ is the direct integral of the Hilbert spaces $(L^2(\Lambda_t))_{t\in[0,T]}$ with inner product $(  f,f')_{L^{2}(\Lambda)} = \int_0^T (f,f')_{L^{2}(\Lambda_t)} dt $,  $\ (f,f')_{L^{2}(\Lambda_t)} = \int_{\Lambda_t} f(Y)f'(Y) \Q(dY)$.

\subsubsection{Wiener Chaos Expansion of   $T-$functionals}
Let $\Q$ be the Wiener measure so that the coordinate process $X$ is Brownian motion.\footnote{Although the trajectories of $X_T$ are $\Q-$a.s. continuous, 
$T-$functionals are still defined on $\Lambda_T$ to allow the presence of jumps when computing Malliavin or functional derivatives.} Moreover, let $\F$ be the filtration generated by $X$.      
One way to derive the chaos expansion is to iterate It\^o's representation theorem.    
For every $g\in L^2(\Lambda_T)$, we recall that the latter guarantees the existence of 
 a  
 functional $\varphi_1 \in L^2(\Lambda)$ such that
\begin{equation}\label{eq:MRT} 
    g(X_T) = \E^{\Q}[g(Y_T)] + \int_0^T \varphi_1(X_t) \ dx_t, \quad \Q-\text{a.s.} 
\end{equation}
Next, set $\phi_0 = \E^{\Q}[g(Y_T)]$ and  $\phi_1(t) = \E^{\Q}[\varphi_1(Y_t)]$. 
Since $\lVert \varphi_1 \rVert_{L^2(\Lambda_t)} < \infty$ for almost every  $t\in [0,T]$,  
we can apply \eqref{eq:MRT} 
to $\varphi_1(X_t)$ and obtain for some  functional $\varphi_2(t,\cdot)\in \int_{[0,t]}^{\oplus} L^2(\Lambda_{s}) ds$, 
$$   g(X_T) = \phi_0 + \int_0^T \phi_1(t) dx_t + \int_{\triangle_{2,T}}  \varphi_2(t_2,X_{t_1})  \ dx_{t_1} dx_{t_2}, \quad \Q-\text{a.s.}$$  
Pursuing the same logic further leads to the \textit{Wiener chaos expansion} of $g$, stated in \cref{thm:Wiener1}. 
Let us  introduce 
the integral  operators  
$J_k:L^2(\triangle_{k,T}) \to L^2(\Lambda) $, given by\footnote{From the definition of $\lVert \cdot \rVert_{L^2(\Lambda)}$ and  repeated use of It\^o isometry, note  that  
$\lVert J_{k}\phi_k \rVert_{L^2(\Lambda)}^2 \le T \ \lVert \phi_k  \rVert_{L^2(\triangle_{k,T})}^2. $  
Thus 
$J_{k}\phi_k \in L^2(\Lambda)$ as implicitly claimed.} 
 \begin{align} 
 J_{0}\phi_0(X_t) &\equiv  \phi_0 \in \R, \ (=: L^2(\triangle_{0,T})) \nonumber \\[0.5em]
  J_{k}\phi_k(X_t) &= \int_{\triangle_{k,t}} \phi_k\, dx^{\otimes k} = \int_{\triangle_{k,t}} \phi_k(t_1,...,t_k)\, dx_{t_1}\cdots dx_{t_k} , \quad  k\ge 1.\label{eq:chaosOperator}
\end{align}   
The image 
$\frakI_k: = \{J_k\phi_k \big |_{\Lambda_{T}} :  
\phi_{k} \in L^2(\triangle_{k,T}) \}$ 
is called the \textit{Wiener chaos of order $k$}. When $\phi_k \equiv 1$, $J_k$ generates the $k-$fold {It\^o iterated integral}, which we simply denote by $J_{k}(X_t)$.  We restate the Wiener chaos expansion  in terms of $T-$functionals; see, e.g. \cite[Theorem 1.10]{DiNunno} for a more classical formulation.

\begin{theorem} \textnormal{(\textbf{Wiener Chaos Expansion for $T-$functionals})}
\label{thm:Wiener1} The space of square integrable $T-$functionals can be expressed as a direct sum of chaos, namely
$L^2(\Lambda_T) = \bigoplus_{k=0}^\infty \frakI_k. $ 
That is, for all $g \in L^2(\Lambda_T)$, there exists  a unique sequence $\{\phi_k \in L^2(\triangle_{k,T})\}_{k \in \N}$  such that  
\begin{equation}
    \sum_{k= 0}^{K} J_k\phi_k(X_T)  \ 
    \underset{L^2(\Lambda_T)}{\overset{K\uparrow \infty}{\xrightarrow{\hspace{1.2cm}}}} \ 
    g(X_T).
\end{equation} 
Furthermore, 
$\lVert g \rVert^2_{L^2(\Lambda_T)} = \sum_{k=0}^{\infty} \lVert\phi_k \rVert^2_{L^2(\triangle_{k,T})}$. 
\end{theorem}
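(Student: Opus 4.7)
The plan is to combine two classical ingredients: the Itô isometry, which yields pairwise orthogonality of the chaoses $(\frakI_k)$, and the density of stochastic exponentials in $L^2(\Lambda_T)$, which supplies completeness. I would first check that distinct iterated integrals are orthogonal: for $\phi_k \in L^2(\triangle_{k,T})$ and $\phi_j \in L^2(\triangle_{j,T})$ with $j \ne k$, repeatedly peeling the outermost stochastic integral and invoking the Itô isometry together with the zero-mean property of stochastic integrals against adapted integrands yields $(J_k\phi_k, J_j\phi_j)_{L^2(\Lambda_T)} = 0$ and $\lVert J_k\phi_k \rVert^2_{L^2(\Lambda_T)} = \lVert \phi_k \rVert^2_{L^2(\triangle_{k,T})}$. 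In particular each $\frakI_k$ is closed in $L^2(\Lambda_T)$, and once existence is established, uniqueness of the kernels is automatic.

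For existence, I would iterate Itô's martingale representation theorem, as sketched just before the statement. Setting $\phi_0 = \E^{\Q}[g(Y_T)]$, write $g(X_T) = \phi_0 + \int_0^T \varphi_1(X_t)\, dx_t$; applying the representation theorem to $\varphi_1(X_t) \in L^2(\Lambda_t)$ produces an adapted integrand $\varphi_2(t,\cdot)$ and, upon taking expectation, $\phi_1(t) = \E^{\Q}[\varphi_1(Y_t)]$. Continuing for $K$ rounds one arrives at $g(X_T) = \sum_{k=0}^{K-1} J_k\phi_k(X_T) + R_K(X_T)$, with $R_K$ a $K$-fold adapted iterated Itô integral that, by the same isometry argument, is orthogonal in $L^2(\Lambda_T)$ to every term of the partial sum. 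Bessel's inequality $\sum_{k=0}^{K-1} \lVert \phi_k \rVert^2_{L^2(\triangle_{k,T})} \le \lVert g \rVert^2_{L^2(\Lambda_T)}$ follows, so $\sum_k J_k\phi_k$ converges in $L^2(\Lambda_T)$ to some $\tilde g$ lying in the closure $\overline{\bigoplus_k \frakI_k}$.

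The main obstacle is upgrading this to $\tilde g = g$, equivalently showing $\overline{\bigoplus_k \frakI_k} = L^2(\Lambda_T)$ so that $R_K \to 0$. I would exhibit a dense family already contained in that closure, namely the linear span of the stochastic exponentials $\calE(h)(Y_T) = \exp\bigl(\int_0^T h(s)\, dy_s - \tfrac12 \int_0^T h(s)^2\, ds\bigr)$ with deterministic $h \in L^2([0,T])$. Applying Itô's formula to $t \mapsto \calE(h)|_t$ gives the SDE $d\calE(h)|_t = \calE(h)|_t\, h(t)\, dx_t$; iterating this equation and collecting like powers identifies $\calE(h) = \sum_{k\ge 0} J_k(h^{\otimes k})$, so each $\calE(h)$ lies in $\overline{\bigoplus_k \frakI_k}$. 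Density of the span of $\{\calE(h)\}$ in $L^2(\Lambda_T)$ is the classical fact: if $g \perp \calE(h)$ for every $h \in L^2([0,T])$, then $\E^{\Q}[g(Y_T)\exp(\int_0^T h(s)\, dy_s)] = 0$, and specialising $h$ to step functions with arbitrary (possibly complex) coefficients reduces this to the vanishing of the joint characteristic function of finitely many Brownian marginals against $g$, forcing $g = 0$. Combining the three paragraphs gives $\tilde g = g$, $R_K \to 0$ in $L^2(\Lambda_T)$, and Bessel's inequality upgrades to the Parseval identity $\lVert g \rVert^2_{L^2(\Lambda_T)} = \sum_{k\ge 0} \lVert \phi_k \rVert^2_{L^2(\triangle_{k,T})}$.
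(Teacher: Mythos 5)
Your proof is correct. Be aware, though, that the paper does not actually prove this theorem: it is presented as a restatement of a classical result (the citation is to Theorem 1.10 of Di Nunno--{\O}ksendal--Proske), preceded only by the heuristic of iterating It\^o's representation theorem --- which is precisely your existence step. Your proposal therefore supplies exactly what the paper delegates to the reference: the pairwise orthogonality of the chaoses via the It\^o isometry, and above all the completeness argument. You correctly identify the real gap in the naive iteration, namely that orthogonality only yields Bessel's inequality and convergence of the partial sums to \emph{some} element of $\overline{\bigoplus_k \frakI_k}$, and that $R_K \to 0$ requires the density of $\mathrm{span}\{\calE(h) : h \in L^2([0,T])\}$; the identification $\calE(h) = \sum_k J_k(h^{\otimes k})$ by iterating $d\calE(h)|_t = \calE(h)|_t\, h(t)\, dx_t$ is the standard and correct way to place the exponentials inside the closed span. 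One small point worth making explicit in the density step: the hypothesis $g \perp \calE(h)$ for all real $h$ gives $\E^{\Q}\bigl[g(Y_T)\exp\bigl(\sum_i \lambda_i\, \delta y_{s_i}\bigr)\bigr] = 0$ for \emph{real} $\lambda$ only, and passing to purely imaginary $\lambda$ to invoke uniqueness of the Fourier transform requires observing that this expectation is an entire function of $\lambda \in \C^n$ (which holds since Gaussian increments have all exponential moments and $g \in L^2$). With that remark your argument is complete, including the Parseval identity, which follows from the exact isometry $\lVert J_k\phi_k\rVert^2_{L^2(\Lambda_T)} = \lVert \phi_k\rVert^2_{L^2(\triangle_{k,T})}$ --- exact with no factorial factors precisely because the kernels live on the simplex rather than on the symmetrized cube.
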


As shown by \citet{Stroock} (see also Theorem 1.32 in \cite{Malliavin}), the kernels can be made explicit when  $g$ is smooth in the sense of Mallavin, 
namely $g\in \D^{K,2}(\Lambda_T)$ for all $K\in \N$. We recall that $g\in \D^{K,2}(\Lambda_T)$ if for all $0\le k\le K$, the iterated Malliavin derivative $D_{t_1 \cdots t_k}  g = (D_{t_1} \cdots D_{t_k})g$ exists and satisfies 
\begin{equation}\label{eq:D_l_2}
    \E^{\Q}\left [\int_{\triangle_{k,T}} |D_{t_1 \cdots t_k}  g(Y_T)|^2 \ dt^{\otimes k }\right] < \infty. 
\end{equation}

If $k=0$, \eqref{eq:D_l_2} simply requires that $g\in L^2(\Lambda_T)$. 
The 
expression for $(\phi_k)_{k=1}^{\infty} $ can be derived recursively from the celebrated  
Clark-Ocone formula 
\cite{Clark,Ocone} refining It\^o's representation theorem when $g$ belongs to $\D^{1,2}(\Lambda_T)$.  
\begin{theorem}\label{thm:ClarkOcone} \textnormal{\bf (Clark-Ocone Formula)}
If $g\in \D^{1,2}(\Lambda_T)$, then  the integrand in \eqref{eq:MRT} is $\Q-$a.s. equal to $\varphi_1(X_t) = \E^{\Q}[D_t g(Y_T) | X_t]$, that is 
\begin{equation}\label{eq:ClarkOcone} 
    g(X_T) = \E^{\Q}[g(Y_T)] + \int_0^T \E^{\Q}[D_t g(Y_T) | X_t] \ dx_t, \quad \Q-\text{a.s.} 
\end{equation}
\end{theorem}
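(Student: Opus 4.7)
The strategy is to invoke the It\^o representation theorem \eqref{eq:MRT}, which already guarantees \emph{existence and uniqueness} of an $L^2$-integrand $\varphi_1$, and then identify $\varphi_1(X_t)$ with the candidate $\E^{\Q}[D_t g(Y_T) \mid X_t]$. Both sides of \eqref{eq:ClarkOcone} are continuous in $g$ with respect to the $\D^{1,2}$-norm (trivially on the left; on the right by Jensen's inequality followed by It\^o's isometry), and finite sums $\sum_{k=0}^N J_k \phi_k$ of multiple Wiener integrals with smooth kernels are dense in $\D^{1,2}(\Lambda_T)$. So it suffices to verify \eqref{eq:ClarkOcone} for a single chaos component $g = J_k \phi_k$ with $k \ge 1$; the case $k = 0$ is trivial since then $g$ is constant and $Dg \equiv 0$.

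For $g = J_k \phi_k$, peeling off the outermost integration variable produces the natural martingale representation
\[
J_k \phi_k(X_T) = \int_0^T \psi_k(t, X_t) \, dx_t, \qquad \psi_k(t, X_t) := \int_{\triangle_{k-1,t}} \phi_k(t_1, \ldots, t_{k-1}, t) \, dx_{t_1} \cdots dx_{t_{k-1}}.
\]
By the uniqueness part of the It\^o representation, it then remains to check $\psi_k(t, X_t) = \E^{\Q}[D_t J_k \phi_k(Y_T) \mid X_t]$ a.e. Applying $D_t$ to the iterated integral yields a sum of $k$ contributions, one for each position into which the time $t$ can be inserted among $t_1 < \cdots < t_k$. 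All but one of these terms involve an outer integrand of the form $\int_t^{\cdot}(\cdots)\, dx_s$, which is a centered $\F$-martingale at time $t$ and therefore has vanishing conditional expectation given $X_t$. The surviving term---where $t$ occupies the outermost slot---is precisely $\psi_k(t, X_t)$.

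The principal technical hurdle is the rigorous computation of $D_t J_k \phi_k$ on the non-symmetric simplex formulation. The cleanest route is to pass through the symmetric kernel $\tilde{\phi}_k$ on $[0,T]^k$ (for which $J_k \phi_k = I_k(\tilde{\phi}_k)$ holds), use the classical identity $D_t I_k(\tilde{\phi}_k) = k \, I_{k-1}(\tilde{\phi}_k(\cdot, t))$, and translate the resulting conditional expectation back into simplex form. Alternatively, one verifies the identity first on simple tensor kernels $\phi_k = \bigotimes_{i=1}^k \mathds{1}_{A_i}$ with disjoint intervals $A_i \subset [0,T]$---where both sides collapse to finite products of Brownian increments---and extends by the closability of $D$ together with $L^2$-density. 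Once this per-chaos identity is secured, the density-and-continuity argument of the first paragraph closes the proof.
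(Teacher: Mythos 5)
The paper does not actually prove this statement: Theorem~\ref{thm:ClarkOcone} is quoted as a classical result with citations to Clark and Ocone, so there is no in-paper argument to compare against. Your proposal is the standard chaos-expansion proof of the Clark--Ocone formula (the one found, e.g., in Nualart or Di Nunno--{\O}ksendal--Proske), and it is correct: reduction to a single chaos $J_k\phi_k$ by density and $\D^{1,2}$-continuity, the peeling-off identity $J_k\phi_k(X_T)=\int_0^T\psi_k(t,X_t)\,dx_t$, and the observation that of the $k$ terms in $D_tJ_k\phi_k$ only the one with $t$ in the outermost slot survives conditioning on $\calF_t$ (the others are stochastic integrals over $(t,T]$ and hence centered). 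This route fits well with the surrounding text, which itself works chaos-by-chaos (Proposition~\ref{prop:MRT_Chaos}, Stroock's formula). Two points you should make explicit to be fully rigorous: first, the density step needs the truncated chaos sums to converge to $g$ in the $\D^{1,2}$-norm, not merely in $L^2$, which rests on the chaos characterization of $\D^{1,2}$ (namely $\sum_k k\,k!\,\lVert\tilde\phi_k\rVert^2<\infty$ and $D_tg=\sum_k kI_{k-1}(\tilde\phi_k(\cdot,t))$) --- this is standard but is itself a theorem; second, the continuity bound on the right-hand side should be recorded as $\lVert\int_0^T\E^{\Q}[D_tg\,|\,X_t]\,dx_t\rVert_{L^2}^2=\int_0^T\E^{\Q}[(\E^{\Q}[D_tg\,|\,X_t])^2]\,dt\le\lVert Dg\rVert_{L^2(\Q\otimes dt)}^2$ via It\^o isometry and Jensen, exactly as you indicate. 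With those two remarks spelled out, the argument is complete.
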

We therefore conclude from \cref{thm:ClarkOcone} that $\phi_1(t) = \E^{\Q}[D_t g(Y_T)]$.  When $g$ is smooth, iterating the argument together with the tower property leads to Stroock's representation of the Wiener kernels \cite{Stroock}:
 \begin{equation}\label{eq:stroock}
     \phi_k(t_1,...,t_k) = \E^{\Q}[D_{t_1 \cdots t_k}  g(Y_{T})],  \quad k\ge 0.
 \end{equation}

\begin{example}
 \label{ex:pricefunctional} \textbf{(Price functional)} Similarly to \cref{ex:nullfunctional,ex:intrinsicfunctional}, we truncate the simplexes in the chaos expansion to obtain a functional. 
Let $g\in L^2(\Lambda_T)$ with chaos expansion  $g(X_T) = \sum_{k= 0}^{\infty} J_k\phi_k(X_T)$ and define 
$f(X_t) = \sum_{k= 0}^{\infty} J_k\phi_k(X_t). $ Then $f\in L^2(\Lambda)$ and we claim that $f$ coincides with the \textit{price functional} $\iota_{\Q}g(X_t) =  \E^{\Q}[g(Y_T) \ | \ X_t]$. Indeed, the martingality of It\^o iterated  integrals  
gives
 $$\iota_{\Q}g(X_t)= \sum_{k =0}^{\infty} \iota_{\Q}J_k \phi_k( X_t) = \sum_{k =0}^{\infty} \E^{\Q}[J_k \phi_k  (Y_T)\ | \ X_t] = \sum_{k =0}^{\infty} J_k \phi_k  (X_t).$$
The price functional is therefore the natural embedding associated to the chaos expansion. 
\end{example}

\subsubsection{Expressing Wiener kernels with  functional Derivatives}\label{sec:MROperator}
The Clark-Ocone formula (\cref{thm:ClarkOcone}) allows to express a $T-$functional,  up to a constant,  as a stochastic integral.
The integrand is given by the  the optional projection 
of the (anticipative) Malliavin derivative of $g$. 
In contrast, the functional Itô calculus permits to rewrite this integrand 
directly in terms of a non-anticipative functional. This is  summarized in the next result, taken from \cite{Dupire}. 
\begin{theorem}\label{thm:FMRT}
 For every $g \in L^1(\Lambda_T)$
 such that  $f(X_{t})= \iota_{\Q}g(X_t) =  \E^{\Q}[g(Y_T) \,|\, X_{t}]$ is {spatially differentiable},
then 
\begin{equation}\label{eq:FMRT}
   g(X_T) = \E^{\Q}[g(Y_T)] + \int_0^T \Delta_x f(X_t) dx_t, \quad \Q-\text{a.s.}
\end{equation}
\end{theorem}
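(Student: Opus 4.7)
The plan is to combine three ingredients: the martingale property of the conditioned expectation $f:=\iota_\Q g$, the Brownian martingale representation theorem, and identification of the representation integrand with $\Delta_x f$ via a direct perturbation argument.

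First, I would verify that $(f(X_t))_{t\in[0,T]}$ is a $\Q$-martingale. For $0\le s\le t\le T$, the tower property applied to \eqref{eq:conditionedExp} gives $\E^\Q[f(Y_t)\,|\,X_s] = \E^\Q[\E^\Q[g(Z_T)\,|\,Y_t]\,|\,X_s] = f(X_s)$, with boundary values $f(X_0)=\E^\Q[g(Y_T)]$ and $f(X_T)=g(X_T)$ $\Q$-a.s. Since $\Q$ is Wiener measure and $\F$ is the Brownian filtration, the classical martingale representation theorem then yields a predictable process $H$ with $\int_0^T H_t^2\, dt<\infty$ $\Q$-a.s. such that $g(X_T) = \E^\Q[g(Y_T)] + \int_0^T H_t\, dx_t$ $\Q$-a.s. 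It remains to identify $H_t=\Delta_x f(X_t)$, $\Q\otimes dt$-a.e.

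To carry out this identification, I would exploit the Markov property of Brownian motion: bumping $x_t$ by $h$ at the endpoint $t$ corresponds in law to translating the entire continuation by $h$. Concretely, writing the continuation from $(t,x_t)$ as $X_t\oplus Z$ with $Z$ an independent Brownian path of length $T-t$ starting at $0$, one has $f(X_t^h)=\E^\Q[g((X_t\oplus Z)+h\,\mathds{1}_{[t,T]})\,|\,X_t]$, so differentiating at $h=0$ yields, for Malliavin-smooth $g$, $\Delta_x f(X_t) = \E^\Q[D_t g(Y_T)\,|\,X_t]$. This coincides with the Clark--Ocone integrand from \cref{thm:ClarkOcone}, establishing $H_t=\Delta_x f(X_t)$ on a dense class of smooth cylinder $T$-functionals. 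A standard $L^1$-approximation argument then extends the identity to all $g\in L^1(\Lambda_T)$ whose price functional is spatially differentiable.

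The main obstacle is this identification step under the bare hypothesis of spatial differentiability. Without existence of $\Delta_t f$ or $\Delta_{xx}f$, one cannot invoke the full functional It\^o formula to read off the integrand of the martingale decomposition (which would, incidentally, also force a functional Black--Scholes equation $\Delta_t f + \tfrac12\Delta_{xx}f\equiv 0$). A cleaner alternative that avoids the detour through smoothness of $g$ proceeds via quadratic covariation: one would show directly, from the definition of $\Delta_x$ together with the quadratic variation of Brownian motion along dyadic partitions, that $\langle f(X),x\rangle_t = \int_0^t \Delta_x f(X_s)\, ds$ almost surely, and then identify $H_s=\Delta_x f(X_s)$ by comparing with $\langle\int_0^\cdot H_s\, dx_s,\, x\rangle_t = \int_0^t H_s\, ds$.
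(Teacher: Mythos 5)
First, note that the paper does not actually prove this statement: it is imported verbatim from \cite{Dupire} ("taken from \cite{Dupire}"), and no argument for it appears in \cref{app:Proofs}. So the comparison here is against the standard proof in the cited source, whose strategy your skeleton correctly reproduces: martingale property of $f(X)$ via the tower property, Brownian martingale representation to obtain an integrand $H$, and identification $H_t=\Delta_x f(X_t)$. Your first two steps are fine (for an $L^1$ martingale of the Brownian filtration the representation with $\int_0^T H_t^2\,dt<\infty$ a.s. is available), and you correctly locate the entire difficulty in the identification step.

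The gap is that you do not close that step. Your primary route establishes $H_t=\E^{\Q}[D_t g(Y_T)\,|\,X_t]=\Delta_x f(X_t)$ only for Malliavin-smooth $g$ (consistent with the paper's remark that $\Delta_x\iota_{\Q}=\iota_{\Q}D_t$ on $\D^{1,2}(\Lambda_T)$), and then appeals to "a standard $L^1$-approximation argument" to reach all $g\in L^1(\Lambda_T)$ with spatially differentiable price functional. This argument does not exist in standard form: if $g_n\to g$ in $L^1(\Lambda_T)$, then $f_n(X_t)\to f(X_t)$ in $L^1$ and the stochastic integrals converge to $\int_0^\cdot H_t\,dx_t$, but nothing forces $\Delta_x f_n(X_t)\to\Delta_x f(X_t)$ — the pointwise vertical derivative of the price functional is not a continuous (or even closable, without extra structure) operation in the payoff variable, and the hypothesis that $\Delta_x f$ exists for the limit says nothing about the derivatives of the approximants. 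Identifying the pointwise vertical derivative with the weak/$L^2$-limit object is precisely the content of the theorem, so the approximation step is circular. Your "cleaner alternative" via $\langle f(X),x\rangle_t=\int_0^t\Delta_x f(X_s)\,ds$ is indeed the right route (it is essentially how the result is obtained in \cite{Dupire} and in Cont--Fourni\'e's work), but you assert this covariation identity rather than prove it; deriving it from the bare existence of $\Delta_x f$ requires justifying the interchange of the $h\to 0$ limit in the difference quotient with a conditional expectation over the Brownian increment, which needs a domination or local boundedness argument you have not supplied. As written, the proposal is a correct outline with the decisive lemma missing.
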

We shall see that the Wiener chaos expansion of $g$ can be recovered
by  successive applications of \cref{thm:FMRT}. 
To this end, consider the family of "martingale representation operators" 
\begin{equation} \label{eq:opA}
    \calM = (\calM_t)_{t\in [0,T]}, \quad   \calM_{t}\! f(X_s) = \Delta_x\E^{\Q}[f(Y_t) | X_s], \quad 0 \le s\le t, \quad f\in L^1(\Lambda_t),
\end{equation}
when the above expression is well-defined.  
It is worth noting that the domain of $\calM_T$ contains $\D^{1,2}(\Lambda_T)$, namely the set of $T-$functionals with Clark-Ocone representation \eqref{eq:ClarkOcone}. 
This follows from the general fact that integrating 
 before differentiating gives  more regularity than the other way around.  Besides, the uniqueness of the integrand in Itô's representation theorem implies that $\calM_Tg(X_t) = \E^{\Q}[D_t g(Y_T) \ | \ X_t ]$ $\Q-$a.s.  whenever $g \in \D^{1,2}(\Lambda_T)$. In terms of the price embedding $\iota_{\Q}$, this reads 
 $$ \Delta_x \iota_{\Q} = \iota_{\Q} D_t \;\; \text{ on } \D^{1,2}(\Lambda_T).  $$
 
Next, set $\calM_{t_1...t_k} = \calM_{t_1} \cdots \calM_{t_k}$, $(t_1,...,t_k) \in \triangle_{k,T}$, creating an entanglement of spatial derivatives and conditioned expectations. If $k=2$, this reads
$\calM_{t_2t_3}\!  f(X_{t_1}) = \Delta_x \E^{\Q}\left[ \Delta_x \E^{\Q}[ f(Z_{t_3}) \, | \, Y_{t_2}]  \, \big | \, X_{t_1} \right] \!.$  
To safely apply the operator over and over, we introduce the \text{domain of $\calM$}, 
\begin{equation}\label{eq:domain}
    \calD_{\! \calM} = \bigcap_{k=1}^\infty \ \{f:\Lambda \to \R \ : \ 
\, \calM_{t_1...t_k} \! f \text{ is well-defined } \forall \,  (t_1,...,t_k) \in \triangle_{k,T} \}.
\end{equation}
Hence $\calM_t(\calD_{\! \calM}) \subseteq \calD_{\calM}$.
For a $T-$functional $g$,  the  operator $\calM$ is only valid for $t=T$.  
The recursive use of $\calM$ is 
allowed as long as  $\calM_T \! g$, henceforth a running functional, belongs to  $\calD_{\! \calM}$.  
The next proposition highlights the role of the martingale representation operator to recover the  Wiener-Itô chaos expansion.

\begin{proposition} \label{prop:MRT_Chaos}
 Let $g$ be a $T-$functional such that $\calM_T g \in \calD_{\! \calM}$. Then the kernels in the Wiener chaos expansion of $g$ can be expressed as $\phi_0 = \calM_T \!  g(X_0)$ and 
\begin{equation}\label{eq:integrandFunc}
   \phi_k(t_1,...,t_k) = \E^{\Q}[\calM_{t_2...t_kT} \! g(Y_{t_1})]. 
\end{equation}
\end{proposition}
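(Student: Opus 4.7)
The plan is to iterate the functional martingale representation theorem (\cref{thm:FMRT}) and read off the kernels as the successive integrands; the blanket hypothesis $\calM_T g \in \calD_{\!\calM}$ is included precisely to guarantee that every recursive call of $\calM$ is well-posed at every depth.

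First, I would apply \cref{thm:FMRT} directly to $g$, obtaining
\begin{equation*}
g(X_T) \;=\; \E^{\Q}[g(Y_T)] \;+\; \int_0^T \calM_T g(X_{t_1})\,dx_{t_1}.
\end{equation*}
This identifies the constant term as $\phi_0 = \E^{\Q}[g(Y_T)]$ (which is what $\calM_T g(X_0)$ denotes under the paper's convention for evaluation at the trivial initial path). Because $\calM_T g \in \calD_{\!\calM}$, the integrand, viewed for each fixed $t_1$ as a functional on $\Lambda_{t_1}$, is spatially differentiable, so FMRT applies again on the shortened horizon $[0,t_1]$.

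I would then proceed by induction on the depth $n$. Suppose that after $n$ applications we have reached
\begin{equation*}
g(X_T) \;=\; \sum_{j=0}^{n-1} \int_{\triangle_{j,T}} \phi_j\,dx^{\otimes j} \;+\; \int_{\triangle_{n,T}} \calM_{t_2\cdots t_n T}\,g(X_{t_1})\,dx^{\otimes n},
\end{equation*}
with $\phi_j(t_1,\ldots,t_j) = \E^{\Q}[\calM_{t_2\cdots t_j T}\,g(Y_{t_1})]$ for $1 \le j \le n-1$ (no operators appearing when $j \le 1$). Applying \cref{thm:FMRT} to the innermost integrand $X_{t_1} \mapsto \calM_{t_2\cdots t_n T}\,g(X_{t_1})$ as a $t_1$-functional on $\Lambda_{t_1}$ yields
\begin{equation*}
\calM_{t_2\cdots t_n T}\,g(X_{t_1}) \;=\; \E^{\Q}[\calM_{t_2\cdots t_n T}\,g(Y_{t_1})] \;+\; \int_0^{t_1} \calM_{t_1}\calM_{t_2\cdots t_n T}\,g(X_s)\,dx_s,
\end{equation*}
where I use the concatenation rule $\calM_{t_1}\calM_{t_2\cdots t_n T} = \calM_{t_1 t_2\cdots t_n T}$. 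Substituting back and relabeling $s \to t_1,\ t_i \to t_{i+1}$ shows that the new kernel is indeed $\phi_n(t_1,\ldots,t_n) = \E^{\Q}[\calM_{t_2\cdots t_n T}\,g(Y_{t_1})]$, matching \eqref{eq:integrandFunc}, and that the residual has the same shape as the inductive hypothesis with $n$ replaced by $n+1$.

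Finally, letting $n \to \infty$ produces a chaos representation of $g$ whose kernels are those in \eqref{eq:integrandFunc}; uniqueness of the Wiener coefficients from \cref{thm:Wiener1} then pins them down. The main difficulty is not any single algebraic manipulation but sustaining the recursion: each invocation of FMRT demands that its integrand be spatially differentiable in the functional It\^o sense, and the assumption $\calM_T g \in \calD_{\!\calM}$ is engineered exactly so that this regularity propagates to arbitrary depth. A secondary, purely $L^2$ concern$-$the vanishing of the residual$-$is actually not needed for the identification, since uniqueness of the chaos decomposition already fixes the $\phi_k$.
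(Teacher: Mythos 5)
Your proposal is correct and follows essentially the same route as the paper: iterate the functional martingale representation theorem (\cref{thm:FMRT}), peel off each kernel as the expectation of the successive integrand, and identify the result with the chaos kernels. Your added remarks—on the convention behind $\phi_0 = \calM_T g(X_0)$ and on why uniqueness of the chaos decomposition makes the vanishing of the residual unnecessary for the identification—are accurate refinements of the same argument rather than a different approach.
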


\begin{proof} 
The martingale representation of $g(X_T)$ can be rewritten as  
$ g(X_T) = \phi_0 +\int_0^T \calM_{T}\!  g(X_{t}) dx_{t}.$ 
Next, for fixed $t \le T$, the functional $\calM_{T}\!  g(X_{t})$ admits a martingale representation as well, namely
\begin{align*}
    \calM_{T}\!  g(X_{t}) &= \E^{\Q}[\calM_{T}\!  g(Y_{t})\, | \, X_0] + \int_0^{t} \Delta_x \E^{\Q}[\calM_{T}\!  g(Y_{t})\, | \, X_{s}]  dx_{s}
    = \phi_1(t) + \int_0^{t} \calM_{t T}\!  g(X_{s}) dx_{s}.
\end{align*}
 Hence, a simple integration gives
$  g(X_T) 
          = \phi_0 +  J_1\phi_1(X_T)  + \int_0^T \int_0^{t}  (\calM_{t T} g)(X_{s}) dx_{s}  dx_{t}.$
Iterating the argument 
with the adequate  
time variable renaming yields the claim.
\end{proof} 
The expression in  $\eqref{eq:integrandFunc}$ is in appearance similar to Stroock's representation $\eqref{eq:stroock}$ involving the Malliavin derivative $D$.  
 However, the operators $D$, $\calM$ are drastically different in nature:  $(D_t g(Y_{T}))_{t\in [0,T]}$ is a family of $\calF_T-$measurable random variables while $(\calM_T g(Y_t))_{t\in [0,T]}$ is an adapted process. 
\subsection{Summary}
\label{sec:summary}

We recapitulate 
the static expansions seen so far in 
\cref{tab:summary}. The second column recalls the natural embeddings discussed in \cref{ex:nullfunctional,ex:intrinsicfunctional,ex:pricefunctional} and  are also displayed in \cref{fig:embedding}.  
Note that the Wiener series and intrinsic value are very similar since the latter generalizes the former.  Also, there is an important difference between the constant terms: in the deterministic case, it is the image of  the flat path $X_{0,T}$ through $g$. On the other hand, in the chaos expansion, it is replaced by the expected value of $g$ over all Brownian paths. 
 
 \begin{table}[H]
    \centering
    \caption{Classification of static expansions, i.e. for $T-$functionals $g:\Lambda_T  \to \R$ } 
    \begin{tabular}{ccccc}
    \hline \hline 
    Expansion  & Constant term & Kernel   & Integral & Embedding \\ \hline \hline 
    & & & \\[-0.6em]
    Volterra series  & $g(X_{0,T})$ & $F_{t_1 \cdots t_k}g(X_{0,T})$  & Riemann &  $g(X_{t} \oplus \boldsymbol{0}_{T-t})$ \\[0.25em] \hline
  & & & \\[-0.8em]
     Wiener series  & $g(X_{0,T})$ & $D_{t_1 \cdots t_k}g(X_{0,T})$   & Stieltjes   & $g(X_{t,T-t})$\\[1em]
  IVE  &    $"$ & $"$     & Stratonovich & $"$ \\[0.25em] \hline
  & & & \\[-0.8em]
         Wiener chaos  & $\E^{\Q}[g(Y_{T})]$ & $\E^{\Q}[D_{t_1 \cdots t_k}  g(Y_{T})]$   & It\^o & $\E^{\Q}[g(Y_{T}) \ | \ X_t]$\\[0.2em] \hline  
    \end{tabular}
    \label{tab:summary}
\end{table}

Finally, let us compare the intrinsic value expansion and Wiener chaos expansion in further depth. For convenience,  the expansions are recalled: 
\begin{align}
 	g(X_T)\hspace{0.1cm}  &= \hspace{0.1cm} g(X_{0,T})  &&+ \hspace{0.6cm} \sum_{k= 1}^{\infty} \int_{\triangle_{k,T}}  D_{t_1 \cdots t_k}g(X_{0,T})  \, \circ\, dx^{\otimes k} , \hspace{1cm} \quad &&\text{(IVE)} \nonumber \\[0.5em] 
 	g(X_T) \hspace{0.1cm} &= \hspace{0.1cm} \E^{\Q}[g(Y_{T})]  &&+ \hspace{0.6cm} \sum_{k= 1}^{\infty} \int_{\triangle_{k,T}} \E^{\Q}[D_{t_1 \cdots t_k}  g(Y_{T})]  \; dx^{\otimes k}, \quad \Q-a.s. \quad &&\text{(Wiener chaos)} \quad \label{eq:WienerChaos2}
 \end{align}
  First, the nature of iterated integrals differ (Stratonovich versus It\^o sense). Second, although both expansions involve iterated Malliavin derivatives of $g$, the kernels are computed according to the corresponding natural embedding 
 (intrinsic  and price functional, respectively). 
But the main difference comes from the paths themselves: In the intrinsic value expansion, the paths can be quite arbitrary, as long as their roughness is controlled along some given sequence of partitions. In the Wiener chaos expansion, they must be  "typical" Brownian paths. 

To bring the expansions closer,  we compute the IVE of $X_T\to g(X_T + Y_T)$ 
as in \cref{thm:IVE}  and thereafter take expectation with respect to $Y_T$ under $\Q$. This yields 
\begin{equation}\label{eq:intrinsicVsChaos}
\E^{\Q}[g(X_T + Y_T)]  = \E^{\Q}[g(Y_T)] + \sum_{k= 1}^{\infty} \int_{\triangle_{k,T}}  \E^{\Q}[D_{t_1 \cdots t_k}g(Y_T)] \, \circ dx^{\otimes k}.
\end{equation}
The right side of \eqref{eq:intrinsicVsChaos} closely resembles the Wiener chaos expansion of $g(X_T)$ \eqref{eq:WienerChaos2}, except that the integrals are in the sense of Stratonovich instead of Itô. If $X_T$ is a typical Brownian  path and admissible for the IVE, subtracting \eqref{eq:WienerChaos2} to \eqref{eq:intrinsicVsChaos} yields the   bias representation 
\begin{align}\label{eq:Bias}
    \E^{\Q}[g(X_T + Y_T)] -g(X_T) = \sum_{k= 1}^{\infty} \int_{\triangle_{k,T}}  \E^{\Q}[D_{t_1 \cdots t_k}g(Y_T)] \, [\circ \ dx^{\otimes k} -  dx^{\otimes k}], \quad \Q-a.s.
\end{align}
We remark that the first  term on the right side of \eqref{eq:Bias} vanishes 
when $t \mapsto \E^{\Q}[D_{t}g(Y_T)]$ (equal to  $\phi_1$) is of finite variation.  Also, the second term  simply reads  $\frac{1}{2}\int_0^T \E^{\Q}[D_{t T}g(Y_T)] dt$ using Itô's isometry. 
 
\begin{figure}[H]
\caption{Natural embeddings of static expansions. We recall that (b) is the intrinsic functional, while  (c) is  the  price functional under the Wiener measure.}
\begin{subfigure}[b]{0.33\textwidth}
    \centering
    \caption{Volterra}
    \includegraphics[height=1.65in,width=1.95in]{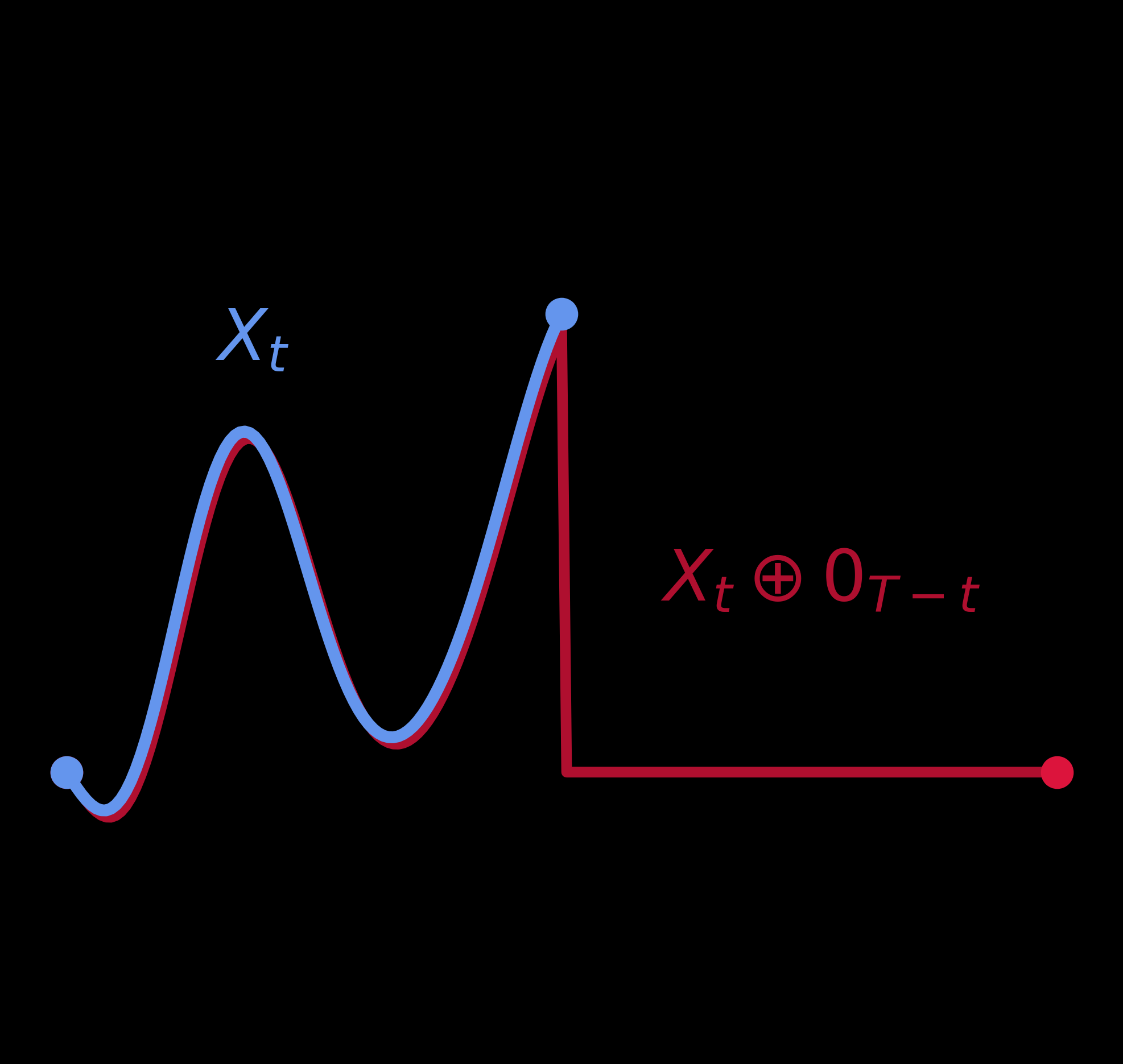}
    \label{fig:VolterraEmbedding}
\end{subfigure}
\begin{subfigure}[b]{0.33\textwidth}
    \centering
    \caption{Wiener series/IVE}
    \includegraphics[height=1.65in,width=1.95in]{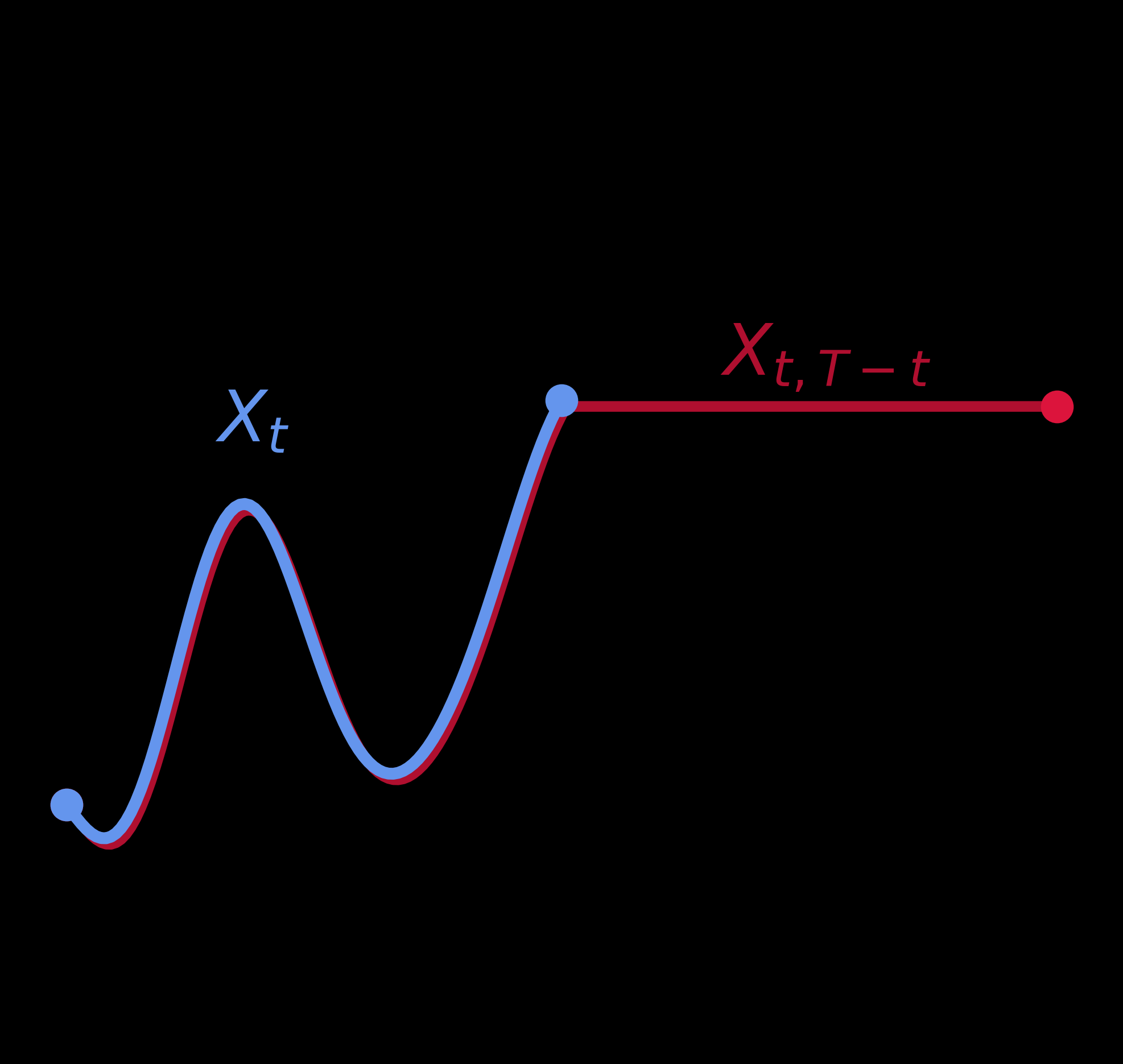}
    \label{fig:WienerEmbedding}
\end{subfigure}
\begin{subfigure}[b]{0.33\textwidth}
    \centering
    \caption{Wiener chaos}
\includegraphics[height=1.65in,width=1.95in]{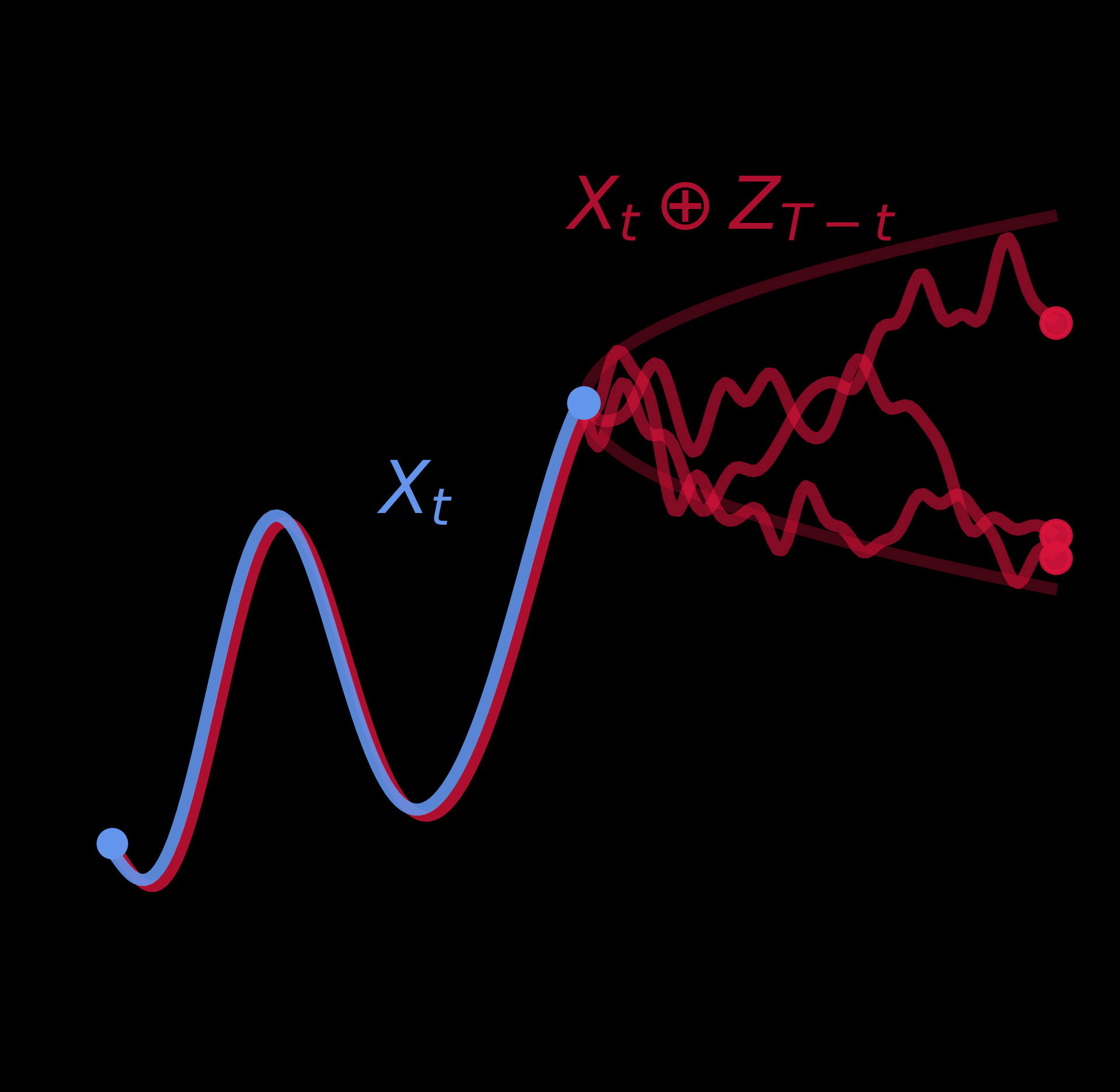}
    \label{fig:chaosEmbedding}
\end{subfigure}
\label{fig:embedding}
\end{figure}

\label{sec:dynamic}
\section{Dynamic Expansions}\label{sec:DynamicExpansions} 

\subsection{Functional Taylor Expansion}\label{sec:FTE}
The  Taylor expansion  
 lies at the core of calculus and finds applications  in countless fields. It provides an explicit 
 approximation of smooth functions by polynomials. A natural question is whether the Taylor expansion 
 can be extended to more general spaces  such as $\Lambda$. 
 In particular, one may wonder whether there are functionals echoing 
 the  monomials. 
A natural candidate is the family of \textit{signature functionals}. The path signature, first studied by \citet{Chen}, has gained much attention in recent years \cite{Arribas,Szpruch,CuchieroSig,LyonsNum}, due notably   
to its universal approximation property. 
Indeed, density results à la Stone-Weierstrass exist  \cite{LittererOberhauser,CuchieroSig} 
allowing, financially speaking, to both price and replicate exotic payoffs with signature elements; see \cite{LyonsNum,Szpruch}. However, the Stone-Weierstrass theorem is existential instead of constructive.  As such, a procedure that  explicitly generates the replication of a claim would be preferred. 

In response to this appeal,  
we propose a pathwise 
Taylor expansion for  path-dependent payoffs: 
the \textit{functional Taylor expansion} (FTE). 
 To the best of our knowledge, this generalization has been first proposed by \citet{Fliess81,Fliess83,Fliess86},  termed the \textit{causal  Taylor expansion},  for $T-$functionals and continuously differentiable paths. Interestingly, the \textit{causal derivative} in \cite{Fliess83} 
shares similarities with the spatial derivative from the functional Itô calculus \cite{Dupire}. The main difference is that the discontinuous "bump" in the latter is replaced by the vertical segment joining the terminal value of the original path and the bumped one. Moreover, the causal derivative were defined on the whole horizon, i.e. for $T-$functionals only. 
 More recently, using the functional derivatives in \cite{Dupire}, 
 \citet{LittererOberhauser}  establish a functional Taylor expansion$-$called \textit{Chen-Fliess approximation}$-$where the underyling path 
comes from the strong solution of a stochastic differential equation with smooth vector fields. 
 
Another 
approach is given in  
 \citet{BMZ} where the authors build a Taylor expansion for 
 stochastic processes 
suitable to characterize viscosity solutions of path-dependent partial differential equations (PPDEs). In particular, a pathwise estimate of the remainder is given in \cite{BMZ}, although still requiring  
a stochastic framework. 
It is worth noting that the spatial functional derivative in  \citet{BMZ}  
is slightly different  from the ones in \cite{Dupire} and the present article (in fact, weaker) and arises from the   martingale part in the Doob-Meyer decomposition. 
In a subsequent work, 
\citet{BKMZ} introduce the rough functional Taylor expansion  where the path   derivative is in the sense of  \citet{Gubinelli}. The signature is thus given by iterated rough integrals either defined pathwise in the one-dimensional case or using higher order "lifts" of the  path otherwise; see \citet{Lyons,FrizHairer}. 

In this work, we adopt the pathwise framework pioneered by \citet{Follmer}   
combined with  tools from the functional Itô calculus \cite{Dupire}.  
 Consequently, our Taylor expansion is made "$\omega$ by $\omega$" and  thus model-free.  
Choosing a similar avenue as \citet{ContBally} where pathwise functional It\^o formulae are derived, 
we establish a pathwise functional Stratonovich formula (\cref{thm:FSF}) for one-dimensional continuous paths of finite quadratic variation along a given sequence of partition.  This provides, in passing, a way to construct Stratonovich integrals as well as the signature in a pathwise manner. We stress that our construction of the signature relies on limits of simple Riemann sums and thus differs from the more 
abstract definition  in rough path theory \cite{Lyons,FrizHairer} given by imposed algebraic relations. 
In \cref{sec:results}, we eventually state  the functional Taylor expansion 
in various forms.  
In particular, the flexibility of the proposed framework allows, in \cref{thm:FTE}, to expand a functionals 
after a path which may not be regular at all.

The FTE presents several applications in control theory, engineering and finance. 
In \cref{sec:FTEApp}, we focus on the latter and outline the main use of the FTE when pricing or hedging derivatives. 
As also pointed out in \cite{LittererOberhauser}, the FTE is of great relevance 
in \textit{cubature methods} (see \citet{LyonsVictoir,Crisan} and the references therein). 
In short, cubature  is a  generalization of  Gaussian quadrature  to the path space; 
the exact integration of  polynomials 
turns into a perfect fit of expectation of signature elements.  
This goes, however, beyond the scope of this work.

\subsubsection{Functional Stratonovich Formula }\label{sec:FSf}

One of the main ingredients in the FTE is a variation of the Functional It\^o formula \cite{Dupire}.  
In \cref{thm:FSF}, we simultaneously  employ a pathwise setting similar to \citet{Follmer}   and Stratonovich integrals as in  \citet{LittererOberhauser}. 
See also   \citet[Chapter 5]{ContBally}  and the references therein for   pathwise functional It\^o  formulae, also along the lines of F\"ollmer's approach. %
First, we define a suitable class of integrators. 
\begin{definition} \label{def:QV}
 Let $\Pi = (\Pi^N)$ be  a sequence  of  partitions of  $[0,T]$ with vanishing mesh size, that is  $|\Pi^N| := \max_{t_n \in \Pi^N} |t_{n}-t_{n-1}| \overset{N \uparrow \infty}{\longrightarrow} 0 $. For fixed  $t\in [0,T]$, 
we say that $X \in \Lambda_t$ is a \textit{$\Pi$-integrator}  if $X$ is continuous and such that the quadratic variation function along $\Pi$, i.e.,  
\begin{equation}\label{eq:QV}
   [0,t] \ni s \mapsto  \langle X \rangle^{\Pi}_s := \lim_{| \Pi_s^N | \downarrow 0}\sum_{t_n\in \Pi^N_s} (x_{t_n}-x_{t_{n-1}})^2, \qquad \Pi^N_s := \{t_n \wedge s \ | \ t_n \in \Pi^N\},
\end{equation} 
exists and is continuous.
We write $\Omega^{\Pi}_t$ for the set of $\Pi$-integrators in $\Lambda_t$  and   $\Omega^{\Pi} := \bigcup_{t\in [0,T]} \Omega^{\Pi}_t$. 

\end{definition}

Notice that if $X\in \Omega^{\Pi}_t$, then $X_s \in \Omega^{\Pi}_s$ for all $s\le t$ using  the sequence of "stopped"  
partitions $\Pi_s := (\Pi_s^N)$ defined in $\eqref{eq:QV}$. The choice of $\Pi$ may considerably impact the richness of $\Omega^{\Pi}$. Good candidates are refining partitions, that is $\Pi^{1} \subseteq \Pi^{2} \subseteq \ldots $ such as the dyadic partition $\Pi^{N} = \{T n 2^{-N} \ : \ n=0,\ldots , 2^N\}$. In effect, having a nested structure in the sequence of partitions  guarantees the almost sure convergence of the quadratic variation for some 
stochastic processes, e.g. Brownian motion \cite[Proposition 2.12.]{RevuzYor}.  

 To compute the spatial derivatives of $f$ at a continuous path $X \in \Omega^{\Pi}_t$, we need to make sure that $f$ is well-defined when adding vertical bumps to $X$. To this end, we consider the enlargement   
 $\bar{\Omega}^{\Pi}_t = \Omega^{\Pi}_t + \calJ_t$,  where $\calJ_t$ contains all piecewise constant  paths in $\Lambda_t$ of finite variation. 
 Defining as usual  $\bar{\Omega}^{\Pi} = \bigcup_{t\in [0,T]} \bar{\Omega}^{\Pi}_t$ gives the sequence of strict inclusions $\Omega^{\Pi} \subset \bar{\Omega}^{\Pi} \subset \Lambda$.

As is customary,  regularity conditions are required on the function to expand. More specifically, consider the classes of  functionals defined on  some subset $\Lambda' \subseteq \Lambda$,
$$\C^{K,L}(\Lambda') = \{f:\Lambda' \to \R \, : \, \Delta_{\alpha}f \text{ exists and is $\Lambda-$continuous } \, \forall \, 
\alpha \text{ s.t. }\, |\alpha|_0 \le K, \,  |\alpha|_1 \le L\}, \quad K,L \ge 0.$$ 
Taking $K=L=\infty$ gives the class $\C^{\infty,\infty}(\Lambda')$ of \textit{smooth functionals} in $\Lambda'$. 

\begin{theorem}\textnormal{(\textbf{Functional Stratonovich formula})}\label{thm:FSF}  
Let  $X\in \Omega^{\Pi}_t$ for some $t \in [0,T]$  and $f\in \C^{1,2}(\bar{\Omega}^{\Pi})$. 
Then, 
\begin{align}\label{eq:FSF}
    f(X_t) &= f(X_0) + \int_0^t \Delta_t f(X_s)  ds + \int_0^t\Delta_x f(X_s) \circ dx_s,
\end{align}
where the last integral implicitly depends on the sequence of partitions $\Pi$ associated to $X$. 

\end{theorem}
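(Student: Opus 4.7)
I would follow the classical partition-refinement strategy of F\"ollmer, adapted to the functional setting as in Dupire and Cont--Fourni\'e, and then convert from an It\^o to a Stratonovich formulation. Fix the sequence $\Pi = (\Pi^N)$ with $\Pi^N = \{0 = t_0^N < \cdots < t_{M_N}^N = t\}$ associated with $X$, and telescope
\begin{align*}
    f(X_t) - f(X_0) = \sum_n \bigl[f(X_{t_{n+1}^N}) - f(X_{t_n^N})\bigr],
\end{align*}
splitting each summand into a pure time piece and a pure space piece. Letting $Z_n^N = (X_{t_n^N})_{t_n^N,\,t_{n+1}^N - t_n^N} \in \bar{\Omega}^{\Pi}_{t_{n+1}^N}$ be the flat extension of $X_{t_n^N}$ up to $t_{n+1}^N$, the decomposition is
\begin{align*}
    f(X_{t_{n+1}^N}) - f(X_{t_n^N}) = \underbrace{[f(Z_n^N) - f(X_{t_n^N})]}_{\text{time}} + \underbrace{[f(X_{t_{n+1}^N}) - f(Z_n^N)]}_{\text{vertical bump}}.
\end{align*}

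The time piece, by the definition of $\Delta_t f$ and the $\Lambda$-continuity of $\Delta_t f$, equals $\Delta_t f(X_{t_n^N})(t_{n+1}^N - t_n^N) + o(t_{n+1}^N - t_n^N)$ uniformly in $n$; after summing, this converges to $\int_0^t \Delta_t f(X_s)\,ds$. For the vertical bump, the $\C^{1,2}$ regularity of $f$ yields a one-variable second-order Taylor expansion in the bump height $\delta_n x := x_{t_{n+1}^N} - x_{t_n^N}$,
\begin{align*}
    f(X_{t_{n+1}^N}) - f(Z_n^N) = \Delta_x f(Z_n^N)\,\delta_n x + \tfrac{1}{2}\Delta_{xx}f(Z_n^N)(\delta_n x)^2 + \rho_n^N,
\end{align*}
with $\rho_n^N = o((\delta_n x)^2)$. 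Summing and using the existence of the continuous quadratic variation $\langle X \rangle^{\Pi}$ together with the $\Lambda$-continuity of $\Delta_x f$ and $\Delta_{xx} f$, I would pass to the limit $|\Pi^N| \downarrow 0$ to recover the pathwise functional It\^o formula
\begin{align*}
    f(X_t) = f(X_0) + \int_0^t \Delta_t f(X_s)\,ds + \int_0^t \Delta_x f(X_s)\,dx_s + \tfrac{1}{2}\int_0^t \Delta_{xx}f(X_s)\,d\langle X\rangle^{\Pi}_s.
\end{align*}

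To obtain \eqref{eq:FSF}, I would define the pathwise Stratonovich integral as the limit along $\Pi$ of the symmetric Riemann sums $\sum_n \tfrac{1}{2}[\Delta_x f(Z_n^N) + \Delta_x f(X_{t_{n+1}^N})]\,\delta_n x$ and verify the pathwise It\^o--Stratonovich correction
\begin{align*}
    \int_0^t \Delta_x f(X_s)\circ dx_s = \int_0^t \Delta_x f(X_s)\,dx_s + \tfrac{1}{2}\int_0^t \Delta_{xx}f(X_s)\,d\langle X\rangle^{\Pi}_s,
\end{align*}
which follows from applying the same local expansion to the $\C^{0,1}$ functional $\Delta_x f$; combining with the It\^o formula above gives \eqref{eq:FSF}. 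The main obstacle is the uniform control of the remainders $\rho_n^N$ as $|\Pi^N|\downarrow 0$: pointwise smoothness of $f$ alone is insufficient, and one must exploit that $\{X_s : s \in [0,t]\}$ is $d_{\Lambda}$-compact to upgrade the $\Lambda$-continuity of $\Delta_x f$, $\Delta_{xx} f$ and of the quadratic Taylor remainder to uniform continuity on a $d_{\Lambda}$-neighbourhood containing all the auxiliary paths $Z_n^N$; together with the continuity of $s \mapsto \langle X\rangle^{\Pi}_s$, this is what justifies the passage from Riemann sums to the claimed integrals.
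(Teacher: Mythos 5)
Your proposal is correct in outline but reaches the Stratonovich integral by a genuinely different route than the paper. The common skeleton is identical: telescope along $\Pi^N$ through piecewise-constant/flat-extended approximants, split each increment into a flat-extension (time) piece and a vertical-bump (space) piece, and use uniform $\Lambda$-continuity of the derivatives on a compact set of auxiliary paths together with $\sum_n (\delta x_{t_n})^2 \to \langle X\rangle^{\Pi}_t$ to control remainders. Where you diverge is the bump. You expand it one-sidedly to second order, pass first to the pathwise functional It\^o formula — which requires the extra F\"ollmer-type step that the discrete measures $\sum_n (\delta x_{t_n})^2\,\delta_{t_n}$ converge weakly to $d\langle X\rangle^{\Pi}$, so that $\sum_n \Delta_{xx}f(\cdot)(\delta x_{t_n})^2 \to \int_0^t \Delta_{xx}f(X_s)\,d\langle X\rangle^{\Pi}_s$ — and then convert to Stratonovich via the correction $\int \Delta_x f \circ dx = \int \Delta_x f\,dx + \tfrac12\int \Delta_{xx}f\,d\langle X\rangle^{\Pi}$. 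The paper instead Taylor-expands the same bump twice, upward from the pre-bump path and downward from the post-bump path, and subtracts: the first-order terms average into exactly the trapezoidal sums $\tfrac12(\Delta_x f(X^N_{t_n-})+\Delta_x f(X^N_{t_n}))\,\delta x_{t_n}$ whose limit \emph{defines} the Stratonovich integral, while the two Lagrange second-order terms differ only by a modulus of continuity of $\Delta_{xx}f$ and are killed by the bounded quadratic variation. That symmetrization never constructs the F\"ollmer--It\^o integral or the quadratic-variation integral as separate objects, so it is shorter and needs no weak-convergence argument; your route is longer but yields the pathwise functional It\^o formula as a by-product. One step of yours needs care: in the correction term $\tfrac12\sum_n [\Delta_x f(X_{t^N_{n+1}}) - \Delta_x f(Z^N_n)]\,\delta_n x$, the two evaluation paths as written differ by the whole oscillation of $X$ on $[t^N_n,t^N_{n+1}]$, not by a pure vertical bump, and the resulting discrepancy is only bounded by $\sum_n \omega(\mathrm{osc}_n)\,|\delta_n x|$, which is not negligible for paths of infinite variation. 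Evaluate both trapezoid endpoints on the piecewise-constant approximant (pre- and post-bump), as the paper does, so that their difference is an exact vertical bump and the first-order expansion of $\Delta_x f$ gives $\Delta_{xx}f(\cdot)(\delta x_{t_n})^2$ up to an error that is $o((\delta x_{t_n})^2)$ uniformly in $n$.
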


\begin{proof}
	See \cref{app:FSF}.  
\end{proof}

\begin{remark}
Although it is necessary to work with c\`adl\`ag trajectories to define the spatial derivative, \cref{thm:FSF} and in turn, the functional Taylor expansion, is established for continuous paths. 
Indeed, doing so prevents the presence of jump terms which would make the FTE more general but   much more intricate. 
\end{remark}

\begin{example}
Let $X\in \Omega^{\Pi}_t$ and $f(X_s) = h(x_s)$ for some real function $h\in \calC^2(\R)$. Then clearly $f\in \C^{1,2}(\bar{\Omega}^{\Pi}_t)$ since $\Delta_t f = 0 $, $\Delta_{x}f = h'$, and  $\Delta_{xx}f = h''$. We can thus apply the pathwise Stratonovich formula to obtain
\begin{equation*}
    h(x_t) = h(x_0) + \int_0^t h'(x_s) \circ dx_s. 
\end{equation*}
We here recover the well-known fact that Stratonovich integrals yield a first order calculus. 
\end{example}

\begin{example}\label{ex:IntDev}
Let us verify the pathwise Stratonovich formula the time integral $f(X_t) = \int_0^t \varphi(X_s) ds$ where the pair $(\varphi,X)$ satisfies $\int_0^t |\varphi(X_s)| ds < \infty$. 
Since $\{ s\in [0,t] \, : \, X_t(s) \ne X^{h}_t(s)\} = \{ t \}$ has Lebesgue measure zero for all $h \neq 0$, we conclude that $\Delta_xf \equiv 0$. For the time derivative, we  obtain  
    $  \Delta_t f(X_t) = \lim_{\delta t\downarrow 0} \,  \frac{1}{\delta t}\int_0^{\delta t} \varphi(X_{t,u})  du =  \varphi(X_{t}). 
    $
As expected, this gives
$$f(X_t) = \underbrace{f(X_0)}_{= \ 0} + \int_0^t \underbrace{\Delta_tf(X_s)}_{\varphi(X_s)} ds  + \int_0^t \underbrace{\Delta_xf(X_s)}_{= \ 0} \circ \, dx_s = \int_0^t \varphi(X_s) ds. $$
We also remark that  $\Delta_t \Delta_x f = 0$ and   $\Delta_x \Delta_t f = \Delta_x \varphi$ provided that  $\Delta_x \varphi$ exists. Hence $\Delta_t, \Delta_x$ do not commute when $\Delta_x \varphi \ne 0$. 
\end{example}

\subsubsection{Spatial Anti-derivative and Pathwise Integration} 
\label{sec:pathStrat}
In the same spirit as Föllmer's pathwise It\^o calculus 
\cite{Follmer}, \cref{thm:FSF} allows to incidentally define pathwise Stratonovich integrals $\int y \circ dx$ where the integrator may be of infinite variation.  However, there are constraints on the integrand: $Y$ must be the spatial derivative of some functional $f\in \C^{1,2}$ evaluated at $X$. In other words, the admissible integrands correspond to all delta hedging strategies where $f$  is interpreted as the price functional of an option. 

In the path-independent case 
as in \citet{Follmer}, the integrands 
are simply the continuously differentiable 
 functions (both in time and space) of the path, i.e. $y_t = h(t,x_t), \ h \in \calC^{1,1}(\R)$. Indeed, it suffices to apply the Stratonovich formula to $H(t,x_t) := \int^{x_t}_{x_0} h(t,y) dy$. 
 Back to our framework, we construct a spatial anti-derivative  as follows. 
 Given $\varphi\in \C^{1,1}(\bar{\Omega}^{\Pi})$, write $X_t^{(\varepsilon)} := X_t^{\varepsilon-x_t}$  
 ($\Longrightarrow  X_t^{(\varepsilon)}(t) = \varepsilon$)  and define 
 \begin{equation}\label{eq:anti-der}
     \Phi(X_t) = \int_{x_0}^{x_t} \varphi(X_t^{(\varepsilon)}) \ d \varepsilon. 
 \end{equation} 
 Then,  we easily see that 
 $\frac{\Phi(X^{\delta x}_t) - \Phi(X_t)}{\delta x} = \frac{1}{\delta x}\int_{0}^{\delta x}\varphi(X_t^{(\varepsilon)}) \ d \varepsilon \,  \overset{\delta x \to 0}{\xrightarrow{\hspace{0.6cm}}} \,  \varphi(X_t)  $ 
and $\Phi$ is a spatial anti-derivative  as desired.  
Moreover,  $\Delta_t \Phi(X_t)  = \int_{x_0}^{x_t} \Delta_t \varphi(X_t^{(\varepsilon)}) d \varepsilon$. 
Since $\Phi(X_0) = 0$, we can invoke and rearrange the functional Stratonovich formula to obtain the pathwise  integral 
 \begin{equation}\label{eq:pathStrat}
     \int_0^t \varphi(X_s) \circ dx_s = \int_{x_0}^{x_t} \varphi(X_t^{(\varepsilon)}) \ d \varepsilon - \int_0^t \int_{x_0}^{x_s} \Delta_t \varphi(X_s^{(\varepsilon)}) d \varepsilon \ ds. 
 \end{equation}
 In particular, when $\Delta_t \varphi = 0$, we simply have $   \int_0^t \varphi(X_s) \circ dx_s = \int_{x_0}^{x_t} \varphi(X_t^{(\varepsilon)}) \ d \varepsilon. $
 Thus, 
 the Stratonovich integral 
 can be made pathwise for \textit{all} functionals in $\C^{1,1}(\bar{\Omega}^{\Pi})$.  Incidentally, this allows us to properly define signature functionals in $\bar{\Omega}^{\Pi}$; 
 see \cref{prop:Sig}. It is important to point out that this simple construction  works only for one-dimensional paths. 
 \begin{remark}
 From the pathwise functional It\^o formula \cite{ContAnanova,ContBally}, note that 
 \begin{align}\label{eq:pathwiseIto}
      \int_0^t \varphi(X_{s-}) dx_s &= \int_{x_0}^{x_t} \varphi(X_t^{(\varepsilon)}) \ d \varepsilon - \int_0^t \left(\Delta_t + \frac{1}{2}\Delta_{xx}\right)\Phi(X_s) \ ds. 
  \end{align}
  The right side of \eqref{eq:pathwiseIto} thus provides a  pathwise definition of It\^o integrals.  
 \end{remark}

 \begin{example}\label{ex:StratDev}Similar to \cref{ex:IntDev}, we  compute the functional derivatives of  the  Stratonovich integral $f(X_t) = \int_0^t \varphi(X_s) \circ dx_s$ defined on $\bar{\Omega}^{\Pi}$. From the above discussion, we obtain as expected that 
 $$\Delta_t f(X_t)  = 0, \quad \Delta_x f(X_t)  = \Delta_x \int_{x_0}^{x_t} \varphi(X_t^{(\varepsilon)}) \ d \varepsilon = \varphi(X_t). $$ 
\end{example}

\subsubsection{Signature Functionals and Properties} \label{sec:sig}

The FTE entails a collection of objects that characterizes the underlying path in its entirety. This is given by the \textit{signature} \cite{Chen,Lyons} which can be seen as the infinite skeleton of a path. 

\textbf{Notations.} A \textit{word} is a sequence  $\alpha = \alpha_1 \ldots \alpha_k$ of letters 
from the alphabet $\{0,1\}$.  The number of $0$'s and $1$'s in $\alpha$ is denoted by $|\alpha|_0$, $|\alpha|_1$, respectively. The \textit{length} of $\alpha$ is therefore $|\alpha| := |\alpha|_0 + |\alpha|_1$.  
We will often use the special words $\boldsymbol{0}_k := 0\ldots 0$ such that  $|\boldsymbol{0}_k| = |\boldsymbol{0}_k|_0=k$ and $\mathds{1}_k := 1\ldots 1$ such that  $|\mathds{1}_k| = |\mathds{1}_k|_1=k$.
Also, for convenience, we write  
$\alpha - j := \alpha_1 \ldots \alpha_{k-j}$ where $|\alpha|=k$ and $0\le j \le k-1$. Finally,   $\A$ denotes the set containing all the words. 

Next, we enlarge a path $X \in \Lambda$ with the time itself
and henceforth set 
$x^0_{t} = t$, $x^1_{t} = x_t$. This is a common procedure used to enrich the signature, see  \cite{Szpruch,CuchieroSig,LyonsNum} and \cref{rem:timeResconstruction} below. 
The letters $0,1$ are therefore identified with the time $t$ and path $x$, respectively. The definition of the signature is a byproduct of the following  proposition. 

 \begin{proposition}\label{prop:Sig}
 Let $\Pi$ be a sequence of  partitions  as in \cref{def:QV}. Recall also the simplexes $(\triangle_{k,t})$ defined in $\eqref{eq:simplex}$.  
 Then for all  $X_t \in \bar{\Omega}^{\Pi}$ and $\alpha \in \A$,  
 the iterated  Stratonovich  integral
     \begin{align}\label{eq:sigDefStrat}
       \calS_{\alpha}(X_t) := \int_{\triangle_{k,t}} \circ \, dx^{\alpha} =\int_{0}^{t} \int_{0}^{t_k} \cdots \int_{0}^{t_2} \circ \, dx^{\alpha_1}_{t_1} \cdots \circ dx^{\alpha_k}_{t_k}, \qquad (k=|\alpha|)   
     \end{align}
     is well-defined pathwise.   
     Moreover,   $\calS_{\alpha} \in \C^{\infty,\infty}(\bar{\Omega}^{\Pi})$ for all $\alpha \in \A$, 
     and satisfies the recursion, 
     \begin{equation}\label{eq:derivSig}
              \Delta_{t} \calS_{\alpha} = \begin{cases}\calS_{\alpha-1}, & \alpha_k=0\\
              0, & \alpha_k=1\\\end{cases}, \qquad  \Delta_{x} \calS_{\alpha} =\begin{cases}\calS_{\alpha-1}, & \alpha_k=1\\
              0, & \alpha_k=0\\\end{cases}. \qquad  (k=|\alpha|)  
     \end{equation}
 \end{proposition}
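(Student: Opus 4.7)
The plan is to prove the statement by strong induction on the word length $k = |\alpha|$, exploiting the recursive structure $\calS_{\alpha}(X_t) = \int_{0}^{t} \calS_{\alpha - 1}(X_s) \circ dx_{s}^{\alpha_k}$. The base case $k=0$ (empty word) is the constant functional $\calS_{\emptyset}\equiv 1$, which is trivially smooth with vanishing derivatives; the case $k=1$ gives $\calS_0(X_t)=t$ and $\calS_1(X_t)=x_t-x_0$, both well-defined pathwise and smooth, with the claimed derivatives.

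For the inductive step, suppose $\calS_{\beta} \in \C^{\infty,\infty}(\bar{\Omega}^{\Pi})$ is pathwise defined for every word $\beta$ with $|\beta| < k$. I would split into two subcases according to the last letter $\alpha_k$.

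\textbf{Case $\alpha_k = 0$:} Here $\calS_{\alpha}(X_t) = \int_0^t \calS_{\alpha-1}(X_s)\,ds$ is an ordinary Riemann integral, well-defined because $s \mapsto \calS_{\alpha-1}(X_s)$ is continuous by the induction hypothesis ($\Lambda$-continuity plus continuity of the stopped-path map $s\mapsto X_s$). The argument of \cref{ex:IntDev} applied to $\varphi = \calS_{\alpha-1}$ gives $\Delta_t\calS_{\alpha}=\calS_{\alpha-1}$ and $\Delta_x\calS_{\alpha}=0$, which is exactly the claimed recursion.

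\textbf{Case $\alpha_k = 1$:} Since $\calS_{\alpha-1}\in \C^{\infty,\infty}(\bar{\Omega}^{\Pi})\subseteq \C^{1,1}(\bar{\Omega}^{\Pi})$ by induction, the pathwise construction of \cref{sec:pathStrat} applies. Specifically, I would define $\calS_{\alpha}(X_t) = \int_0^t \calS_{\alpha-1}(X_s)\circ dx_s$ through the spatial-antiderivative formula \eqref{eq:pathStrat} applied to $\varphi = \calS_{\alpha-1}$. This simultaneously establishes that the iterated Stratonovich integral exists pathwise and, by the computation in \cref{ex:StratDev}, that $\Delta_t\calS_{\alpha}=0$ and $\Delta_x\calS_{\alpha}=\calS_{\alpha-1}$.

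Finally, to confirm $\calS_{\alpha}\in \C^{\infty,\infty}(\bar{\Omega}^{\Pi})$, I would observe that the recursion \eqref{eq:derivSig} allows any higher-order functional derivative $\Delta_{\gamma}\calS_{\alpha}$ to be reduced, after $|\gamma|$ steps, to either $\calS_{\beta}$ for a shorter word $\beta$ or identically zero; smoothness and $\Lambda$-continuity then follow from the inductive hypothesis together with the $\Lambda$-continuity inherited from the integral representations above. The main delicate point is the pathwise construction in the Stratonovich case: it hinges on invoking \eqref{eq:pathStrat}, which in turn rests on the functional Stratonovich formula \cref{thm:FSF}, so one must verify that the inductive regularity $\calS_{\alpha-1}\in \C^{1,1}(\bar{\Omega}^{\Pi})$ is enough to construct the spatial anti-derivative and to legitimize the formula, and that the resulting object extends from $\Omega^{\Pi}$ to $\bar{\Omega}^{\Pi}$ (the latter follows by noting that adding a piecewise constant jump path to $X$ preserves the quadratic-variation hypothesis and that the construction \eqref{eq:anti-der} is manifestly defined on $\bar{\Omega}^{\Pi}$).
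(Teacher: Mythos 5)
Your proposal is correct and follows essentially the same route as the paper's proof: induction on the word length, with the case $\alpha_k=0$ handled as a Riemann integral (giving $\Delta_t\calS_\alpha=\calS_{\alpha-1}$, $\Delta_x\calS_\alpha=0$) and the case $\alpha_k=1$ handled by constructing the spatial anti-derivative $\Phi(X_t)=\int_{x_0}^{x_t}\calS_{\alpha-1}(X_t^{(\varepsilon)})\,d\varepsilon$ and invoking the functional Stratonovich formula, exactly as in \cref{app:propSig}. The only cosmetic difference is that the paper spells out the two subcases $\alpha_{k-1}=1$ versus $\alpha_{k-1}=0$ of the identity \eqref{eq:pathStrat} explicitly, whereas you cite \eqref{eq:pathStrat} and \cref{ex:StratDev} directly; your closing remarks on the $\C^{1,1}$ regularity needed to legitimize the construction and on the extension to $\bar\Omega^{\Pi}$ are appropriate points of care that the paper leaves implicit.
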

 
 \begin{proof}
   See \cref{app:propSig}
 \end{proof}
   Together with the constant functional $\calS_{\varnothing} \equiv 1$, we call the collection $\calS = \{\calS_{\alpha}\in \C^{\infty,\infty}(\bar{\Omega}^{\Pi}) \ : \  \alpha \in \A \}$ the \textit{signature} in $\Omega^{\Pi}$. 
   When we are given a specific path $X\in \Omega^{\Pi}_t$, then the \textit{signature of $X$} is understood as the image $\calS(X) =  \{\calS_{\alpha}(X_u) \ : \  u\in [0,t], \ \alpha \in \A \}$.  
Notice that the integrals in the above definition are in the Riemann-Stieltjes  (respectively It\^o) sense when  the integrator (respectively integrand) is of bounded variation. In such situations, the symbol $\circ$ may be removed. 
Assuming $x_0=0$ and writing $\calS_{\alpha}$ instead of $\calS_{\alpha}(X_t)$ for brevity, the first signature functionals are given by 
\begin{align}\label{eq:tree}
  \begin{pmatrix} 
& &  \calS_{\varnothing}& &  \\
&  \calS_{0} & &   \calS_{1} &\\
\calS_{00} &    \calS_{01} & &  \calS_{10}  & \calS_{11}\\
\vdots & \vdots & &\vdots & \vdots
 \end{pmatrix} \;=\; \begin{pmatrix} 
& &  1 & &  \\
&  t & &  x_t &\\
\frac{t^2}{2} &  
 \int_0^t s \, dx_s & & \int_0^t  x_s ds & \frac{x_t^2}{2}
 \\
\vdots & \vdots & &\vdots & \vdots
 \end{pmatrix}.
\end{align} 
As can be seen, 
each entry in the above infinite pyramid 
 generates two descendants by either integrating the former with respect to $t \sim 0$ or $x \sim 1$. Inversely, we gather from \eqref{eq:derivSig} that the functional derivatives permit to retrieve the antecedent by removing the rightmost letter of a given word. 
 \begin{remark}\label{rem:timeResconstruction}
It is essential to enlarge one-dimensional paths with the time itself. Otherwise, the signature would only consists of the rightmost diagonal of the signature tree $\eqref{eq:tree}$, namely 
$\calS_{\mathds{1}_k}(X_t) = \frac{\delta x_t^{k}}{k!}$, $k\ge 0$. Hence, the only information of the path contained in the  signature would be its endpoints. 
 Moreover, adding the increasing path $t\mapsto t$ ensures that the signature map $X \mapsto \calS(X)$ is invertible; see  \cite{Hambly}. In fact, knowing the words $\alpha^{(k)} :=1\boldsymbol{0}_{k+1}\,$,  $k \ge 0$, is enough to show the injectivity of the signature. Indeed, observe that 
 $\calS_{\alpha^{(k)}}(X_t) = \int_0^{t} x_s \frac{(t -s)^{k}}{k!}ds,$ 
 which are scaled $L^2([0,t])$ product of $X_t$ with the time-reversed monomials $s \mapsto (t -s)^k$. As the latter form a complete basis of $L^2([0,t])$, the path can thus be uniquely retrieved from $(\calS_{\alpha^{(k)}}(X_t))_{k\ge 0}$. See Section 2.3 in  \cite{Tissot}
 for further details. 
\end{remark}
 
 \begin{remark}
Contrary to rough path theory \cite{Lyons,FrizHairer} where iterated integrals arise from an imposed algebraic property$-$namely, Chen's identity \cite{Chen}$-$our framework allows to express the signature as  limits of trapezoidal Riemann sums. This exhibits a clear practical advantage as the integrals are directly interpretable. Admittedly, a fair comparison is  difficult  as we here assume one space dimension
and different conditions on $X$  (finite quadratic variation along a sequence versus $p-$variation metric in rough path theory).  
\end{remark}
 
We now collect several properties of signature functionals. 
 
 \begin{proposition} \label{prop:sigProperties}
 Let $\calS$ be the signature in $\Omega^{\Pi}$ and fix two words $\alpha, \gamma$. 
 Then the following properties holds: 
 \begin{itemize}
     \item[(i)]  (\textnormal{\bf Higher derivatives}) $\Delta_{\gamma} \calS_{\alpha} \ne 0$ if and only if  $\alpha = \beta \gamma$ for some $\beta \in \A$. If so, then $\Delta_{\gamma} \calS_{\alpha} = \calS_{\beta}$. 
     \item[(ii)] (\textnormal{\bf Linear independence of signature functionals}) The signature functionals are linearly independent in the sense that   
     $\forall \ \A' \subseteq \A$, $|\A'| < \infty$,  and coefficients $(c_{\alpha})_{\alpha\in \A'}$ such that $ \sum_{\alpha \in \A'} c_{\alpha} \calS_{\alpha} \equiv 0$,  
     then $c_{\alpha}= 0$ $\forall \ \alpha \in \A'$. 
     \item[(iii)] (\textnormal{\bf Linear dependence of signature $T-$functionals}) The signature functionals restricted to $\Omega_T^{\Pi}$ are linearly dependent. 
         \item[(iv)] 
         (\textnormal{\bf Linear independence of signature $T-$functionals}) 
         The family 
     $\{ \calS_{\beta 1}|_{\Omega_T^{\Pi}} \ : \ \beta \in \A \}$ forms a linearly independent system.  
 \end{itemize}
 \end{proposition}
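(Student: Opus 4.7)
My plan is to dispatch the four items sequentially. The engine throughout is the derivative recursion \eqref{eq:derivSig}, together with the fact that $\calS_\varnothing \equiv 1$.

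For (i), I would induct on $m := |\gamma|$. Writing $\Delta_\gamma = \Delta_{\gamma_1}\cdots\Delta_{\gamma_m}$ (with $\Delta_{\gamma_m}$ applied first), the base case $m=1$ is exactly \eqref{eq:derivSig}. For the inductive step, applying $\Delta_{\gamma_m}$ to $\calS_\alpha$ yields $\calS_{\alpha - 1}$ when the last letter of $\alpha$ equals $\gamma_m$, and $0$ otherwise. Iterating peels one letter at a time off the right of $\alpha$: non-vanishing forces the last $m$ letters of $\alpha$ to coincide with $\gamma$, i.e.\ $\alpha = \beta\gamma$, in which case $\Delta_\gamma \calS_\alpha = \calS_\beta$.

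For (ii), I pick $\alpha^\ast \in \A'$ of maximal length. By (i), $\Delta_{\alpha^\ast}\calS_\alpha$ is nonzero only when $\alpha = \beta\alpha^\ast$, and maximality of $|\alpha^\ast|$ in $\A'$ forces $\beta = \varnothing$, so $\Delta_{\alpha^\ast}\calS_{\alpha^\ast} \equiv 1$ while every other $\Delta_{\alpha^\ast}\calS_\alpha$ vanishes. Applying $\Delta_{\alpha^\ast}$ to $\sum_{\alpha \in \A'} c_\alpha \calS_\alpha \equiv 0$ then extracts $c_{\alpha^\ast} = 0$; remove $\alpha^\ast$ from $\A'$ and iterate. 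For (iii) it suffices to exhibit one nontrivial relation on $\Omega_T^\Pi$: since $\calS_0(X_T) = T$ is a constant on that set, $\calS_0 - T\,\calS_\varnothing \equiv 0$ on $\Omega_T^\Pi$.

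For (iv), I would reduce the $T$-functional question back to (ii) by flat extension. Assume $\sum_{\beta \in \A'} c_\beta \calS_{\beta 1}(X_T) = 0$ for every $X \in \Omega_T^\Pi$. Given any $X \in \Omega_t^\Pi$ with $t \le T$, the concatenation $Y := X_t \oplus \boldsymbol{0}_{T-t}$ belongs to $\Omega_T^\Pi$ (continuity and the existence/continuity of quadratic variation survive a constant tail). Because $y$ is constant on $[t,T]$ and the final letter of each word $\beta 1$ is $1$, the pathwise Stratonovich construction from \cref{sec:pathStrat} gives $\int_t^T \calS_\beta(Y_s) \circ dy_s = 0$, hence $\calS_{\beta 1}(Y_T) = \calS_{\beta 1}(X_t)$. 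The hypothesis then reads $\sum_\beta c_\beta \calS_{\beta 1}(X_t) = 0$ on all of $\Omega^\Pi$, and (ii) forces every $c_\beta = 0$. The main obstacle I expect is verifying this flat-extension step cleanly: one must check that $Y$ is a genuine $\Pi$-integrator on $[0,T]$ and that the tail Stratonovich integral vanishes pathwise, which should follow from \cref{def:QV} and the antiderivative construction \eqref{eq:anti-der} applied on $[t,T]$.
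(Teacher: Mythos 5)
Your argument is correct. Items (i) and (ii) match the paper's proof in substance: the paper evaluates $\Delta_\alpha f$ at the length-zero path $X_0$ and uses $\Delta_\alpha \calS_\beta(X_0)=\mathds{1}_{\{\alpha=\beta\}}$ to extract each $c_\alpha$ in one shot, whereas you peel off coefficients by maximal length and iterate — a cosmetic difference. For (iii) the paper exhibits the product-rule relation $\calS_{01}+\calS_{10}=T\,\calS_1$; your $\calS_0 - T\,\calS_\varnothing\equiv 0$ on $\Omega_T^{\Pi}$ is an equally valid and simpler witness. The genuine divergence is in (iv). The paper proves an auxiliary lemma: if $\calI\! f(X_T):=\int_0^T f(X_t)\circ dx_t$ vanishes on $\Omega_T^{\Pi}$ for a continuous functional $f$, then $f$ vanishes on all of $\Omega^{\Pi}$; this is shown by perturbing the flat extension $X_{t,T-t}$ with a small ramp of height $h$ over a window of width $\delta t$ and sending $h\downarrow 0$, then $\delta t\downarrow 0$. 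Applying that lemma to $f=\sum_{\beta}c_{\beta 1}\calS_{\beta}$ and invoking (ii) finishes. You instead exploit that every word in the family ends in the letter $1$, so the outermost integration is against $dy$ and accumulates nothing on a constant tail, giving $\calS_{\beta 1}(X_{t,T-t})=\calS_{\beta 1}(X_t)$; this transports the relation from $\Omega_T^{\Pi}$ down to every $\Omega_t^{\Pi}$ and reduces (iv) directly to (ii). Both routes need the flat extension to remain a $\Pi$-integrator, which is immediate from \cref{def:QV} since $\langle X_{t,T-t}\rangle^{\Pi}_s=\langle X\rangle^{\Pi}_{s\wedge t}$, and you rightly flag this. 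The paper's lemma is stronger — it applies to arbitrary continuous integrands, not only signature combinations — while your argument is shorter and essentially algebraic once the invariance under flat extension is observed; for the purposes of (iv) it is entirely sufficient.
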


 \begin{proof}
 See \cref{app:sigProperties}. 
 \end{proof}

An alternative formulation of assertion \textit{(ii)} is given in \cite{Fliess81}.  Assertion \textit{(iv)} states that essentially "half" of the words are retained in order to obtain an independent system. In fact, more can be said when considering words up to some fixed length, as stated in the next Theorem. 
\begin{theorem} \label{thm:Basis}
The family $\{\calS_{\beta 1}|_{\Omega_T^{\Pi}} :   |\beta| < K\}$ 
is a basis for the space  spanned by $\{ \calS_{\alpha}|_{\Omega_T^{\Pi}} \ : \ |\alpha| \le K \}$. 
\end{theorem}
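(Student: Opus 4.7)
The strategy has two parts: (a) linear independence of $\{\calS_{\beta 1}|_{\Omega_T^{\Pi}} : |\beta|<K\}$, and (b) that this same family spans $\operatorname{span}\{\calS_\alpha|_{\Omega_T^{\Pi}} : |\alpha|\le K\}$. Part (a) is immediate: by \cref{prop:sigProperties}(iv) the infinite family $\{\calS_{\beta 1}|_{\Omega_T^{\Pi}} : \beta\in \A\}$ is linearly independent, so every finite subfamily inherits the property.

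For part (b) I would proceed by induction on the number $j\ge 0$ of trailing zeros of the word $\alpha$, with $|\alpha|\le K$ fixed throughout. If $j=0$ then $\alpha=\beta 1$ with $|\beta|=|\alpha|-1<K$, so $\calS_\alpha$ is already a basis element. If $j\ge 1$, decompose $\alpha=\gamma\boldsymbol{0}_j$ and invoke the shuffle identity for one-dimensional pathwise Stratonovich iterated integrals applied to the letter $0$ and $\mu:=\gamma\boldsymbol{0}_{j-1}$, namely
\begin{equation*}
\calS_0(X_T)\,\calS_{\mu}(X_T) \;=\; \sum_{\nu} \calS_\nu(X_T),
\end{equation*}
where $\nu$ ranges, with multiplicity, over all words obtained by inserting one letter $0$ into $\mu$. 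The key combinatorial observation is that each of the $j$ insertions inside the terminal block $\boldsymbol{0}_{j-1}$ of $\mu$ produces exactly $\gamma\boldsymbol{0}_j=\alpha$ (yielding multiplicity $j$), whereas each of the remaining $|\gamma|$ insertions places a $0$ inside $\gamma$ and then appends $\boldsymbol{0}_{j-1}$, so the resulting words $\gamma_i^{\text{ins}}\boldsymbol{0}_{j-1}$ have exactly $j-1$ trailing zeros and length $|\alpha|\le K$. Because $\calS_0|_{\Omega_T^{\Pi}}\equiv T$, the identity rearranges to
\begin{equation*}
\calS_\alpha|_{\Omega_T^{\Pi}} \;=\; \frac{1}{j}\Bigl(T\,\calS_\mu|_{\Omega_T^{\Pi}} \;-\; \sum_{i=0}^{|\gamma|-1}\calS_{\gamma_i^{\text{ins}}\boldsymbol{0}_{j-1}}|_{\Omega_T^{\Pi}}\Bigr),
\end{equation*}
and the inductive hypothesis (applied to each term on the right, which has strictly fewer trailing zeros and length $\le K$) places $\calS_\alpha|_{\Omega_T^{\Pi}}$ in the span of the proposed basis.

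The main obstacle is justifying the shuffle identity in the present pathwise Stratonovich framework, rather than through its usual algebraic/rough-path derivation. I would establish it by induction on total word length, iterating one-dimensional Stratonovich integration by parts, whose pathwise validity follows from the anti-derivative construction of \cref{sec:pathStrat} together with the signature derivative recursion of \cref{prop:Sig}; equivalently, one may apply the functional Stratonovich formula (\cref{thm:FSF}) to the smooth functional $\calS_\alpha\calS_\beta - \sum_\nu \calS_\nu$ (summed over shuffles of $\alpha,\beta$) and check, via \eqref{eq:derivSig} and the Leibniz rule, that its spatial and temporal derivatives vanish identically on $\bar\Omega^{\Pi}$, so that it is constantly zero. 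A secondary subtlety is the degenerate case $\gamma=\varnothing$, i.e.\ $\alpha=\boldsymbol{0}_k$, for which $\calS_\alpha|_{\Omega_T^{\Pi}}=T^k/k!$ is a constant and thus lies in the target span only once one identifies it with a scalar multiple of $\calS_\varnothing\equiv 1$; the statement is naturally read with this convention, since $\calS_0^k=k!\,\calS_{\boldsymbol{0}_k}$ collapses the all-zero tower onto the constant.
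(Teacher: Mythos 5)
Your proof is correct, but it follows a genuinely different inductive scheme from the paper's. The paper inducts on the truncation level $K$: given $\alpha=\beta 0$ with $|\beta|=K-1$, it writes $\calS_{\alpha}(X_T)=\int_0^T\calS_{\beta}(X_t)\,dt$, expands $\calS_{\beta}$ in the candidate basis by the induction hypothesis, and then uses Fubini together with the shuffle product to convert the resulting iterated integrals into a combination of $T\calS_{\gamma 1}(X_T)$ and signature terms of length $K$ ending in $1$. You instead induct on the number $j$ of trailing zeros at fixed length, and peel off the entire trailing block in one stroke via the shuffle of the single letter $0$ with $\mu=\gamma\boldsymbol{0}_{j-1}$; your multiplicity count ($j$ insertions reproducing $\alpha$, the remaining $|\gamma|$ producing words with $j-1$ trailing zeros) is correct, and the resulting recursion $\calS_{\alpha}|_{\Omega_T^{\Pi}}=\frac{1}{j}\bigl(T\,\calS_{\mu}|_{\Omega_T^{\Pi}}-\sum_i\calS_{\nu_i}|_{\Omega_T^{\Pi}}\bigr)$ closes cleanly. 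Both arguments hinge on the shuffle identity; the paper simply cites Ree for it, whereas you propose to establish the one case you need pathwise via \cref{thm:FSF}, the derivative recursion \eqref{eq:derivSig} and the Leibniz rule, which makes your version more self-contained in the paper's Stratonovich framework (and, as a side benefit, your induction never has to substitute a terminal-time identity inside a further time integral, where the $T$-dependence of the coefficients would require extra care). Your remark on the degenerate words $\boldsymbol{0}_k$, whose restrictions are constants and lie in the span only under the convention that constants are absorbed, matches the paper's own implicit treatment ($\calS_0(X_T)=T$ "regarded as a constant") and is worth stating explicitly as you do. The independence half via \cref{prop:sigProperties}(iv) is identical to the paper's.
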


\begin{proof}
 See \cref{app:thmBasis}.
\end{proof}
\subsubsection{Main Results} \label{sec:results}  
We first state the functional Taylor expansion (FTE) in  general form (\cref{thm:FTE}) and later discuss some important consequences 
(\cref{cor:FTEX,cor:FME}). 
Multiple derivatives are denoted by  $\Delta_{\alpha} = \Delta_{\alpha_1} \cdots \Delta_{\alpha_k}$. For instance, 
$\Delta_{001}f = \Delta_{t} (\Delta_t(\Delta_x f))$, where we stress again that  $0,1$ is identified with $t,x$, respectively. Besides if $\alpha = \varnothing$, then $\Delta_{\alpha}$ is the identity operator, i.e. $\Delta_{\varnothing}f=f$. 

\begin{theorem}\textnormal{(\textbf{Functional  Taylor expansion (FTE)})}
\label{thm:FTE}
 Let $f\in \C^{K,K+1}(\Lambda)$ and fix two paths $X_s \in \Lambda_s$, $Y_u \in \Omega^{\Pi}_u$ with $s,u \in [0,T]$. 
 Then,  we have, 
\begin{align}
    f(X_{s} \oplus Y_u) &= \sum_{|\alpha|< K}  \Delta_{\alpha}f(X_s) \calS_{\alpha}(Y_u) + R_{K}(X_s,Y_u), \label{eq:FTE1}
\end{align}
with the remainder functional, 
\begin{align}
        R_{K}(X_s,Y_u) &= \sum_{|\alpha| = K} \int_{\triangle_{K,u}} \Delta_{\alpha}f(X_{s}\oplus Y_{t_1}) \circ \, dy^{\alpha}.   \label{eq:FTE2}
\end{align}
\begin{proof}
See \cref{app:FTE}.
\end{proof}

\end{theorem}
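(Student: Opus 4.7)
The plan is to prove the identity by induction on $K$, with the pathwise functional Stratonovich formula (\cref{thm:FSF}) as the engine. For fixed $X_s \in \Lambda_s$ and any word $\alpha$, introduce the frozen functional $\tilde f_\alpha(Z) := \Delta_\alpha f(X_s \oplus Z)$. Because $\oplus$ commutes with flat time-extension and with endpoint bumps, a direct unwinding of the definitions of $\Delta_t, \Delta_x$ gives
\begin{equation*}
\Delta_t \tilde f_\alpha(Z) = \Delta_{0\alpha} f(X_s \oplus Z), \qquad \Delta_x \tilde f_\alpha(Z) = \Delta_{1\alpha} f(X_s \oplus Z),
\end{equation*}
where $0\alpha$ (resp.\ $1\alpha$) denotes the word obtained by pre-pending $0$ (resp.\ $1$) to $\alpha$. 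Consequently $\tilde f_\alpha \in \C^{1,2}(\bar\Omega^{\Pi})$ whenever $f \in \C^{|\alpha|+1,|\alpha|+2}(\Lambda)$.

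The base case $K=1$ follows immediately: applying the FSF to $\tilde f_\varnothing$ with integrator $Y \in \Omega^{\Pi}_u$ and using $\tilde f_\varnothing(Y_0) = f(X_s)$ yields
\begin{equation*}
f(X_s \oplus Y_u) = f(X_s) + \int_0^u \Delta_t f(X_s \oplus Y_r)\, dr + \int_0^u \Delta_x f(X_s \oplus Y_r) \circ dy_r,
\end{equation*}
which, identifying $\calS_\varnothing \equiv 1$ and the two length-one words $0,1$ with the $dt$- and $\circ dy$-integrals, is the claimed formula for $K=1$. For the inductive step, assume the identity at level $K$ and, for each $\alpha$ with $|\alpha|=K$, apply the FSF to $\tilde f_\alpha$ to obtain
\begin{equation*}
\Delta_\alpha f(X_s \oplus Y_{t_1}) = \Delta_\alpha f(X_s) + \int_0^{t_1} \Delta_{0\alpha} f(X_s \oplus Y_r)\, dr + \int_0^{t_1} \Delta_{1\alpha} f(X_s \oplus Y_r) \circ dy_r.
\end{equation*}
Substitute into the integrand of $R_K(X_s, Y_u)$. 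The constant boundary term $\Delta_\alpha f(X_s)$ factors out of the outer $K$-fold integration to yield $\Delta_\alpha f(X_s) \calS_\alpha(Y_u)$. In the two remaining terms, the added $r$-variable is smaller than $t_1$; renaming $r \mapsto t_1$ and shifting the existing indices upward converts the $K$-simplex into $\triangle_{K+1,u}$ indexed by the extended words $0\alpha$ and $1\alpha$. Since $\{0\alpha, 1\alpha : |\alpha|=K\}$ partitions $\{\beta : |\beta|=K+1\}$, summation over $|\alpha|=K$ gives $R_K(X_s,Y_u) = \sum_{|\alpha|=K} \Delta_\alpha f(X_s) \calS_\alpha(Y_u) + R_{K+1}(X_s, Y_u)$, closing the induction.

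The main obstacle is a careful accounting of regularity and pathwise admissibility. At each level we need $\tilde f_\alpha \in \C^{1,2}(\bar\Omega^{\Pi})$, which after iterating to $|\alpha| = K-1$ requires derivatives $\Delta_\beta f$ with $|\beta|_0 \le K$ and $|\beta|_1 \le K+1$; this is exactly the hypothesis $f \in \C^{K,K+1}(\Lambda)$. Next, the iterated Stratonovich integrals must be well-defined pathwise, which is guaranteed by $Y \in \Omega^{\Pi}_u$ together with the one-dimensional spatial antiderivative construction from \cref{sec:pathStrat} applied to each $\Delta_\beta f(X_s \oplus \cdot)$, since the relevant quadratic variation of the concatenated path on $[s, s+u]$ is that of $Y$. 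Finally, swapping the innermost $dr$ integral with the outer Stratonovich integrals to re-assemble a $(K+1)$-simplex is justified by the $\Lambda$-continuity of every integrand, itself part of the $\C^{K,K+1}$ hypothesis. Together these three points legitimize the formal induction above.
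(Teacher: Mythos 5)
Your proposal is correct and follows essentially the same route as the paper's own proof in \cref{app:FTE}: induction on $K$ with the base case and inductive step both driven by applying the functional Stratonovich formula to the concatenated functional $Z \mapsto \Delta_\alpha f(X_s \oplus Z)$, the boundary term producing $\Delta_\alpha f(X_s)\calS_\alpha(Y_u)$ and the two integral terms prepending the letters $0,1$ to reassemble the remainder over words of length $K+1$. Your added bookkeeping of why $\C^{K,K+1}(\Lambda)$ supplies exactly the derivatives $\Delta_\beta f$ needed at each stage, and why the iterated Stratonovich integrals are pathwise well-defined, is a welcome elaboration of points the paper leaves implicit, but it does not change the argument.
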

We stress that $X_s$ may be completely different from the added path $Y_u$ $-$ see \cref{fig:FTEFlower}. In particular, the signature of $X_s$ may well be undefined. 
When $X_s\in \Omega^{\Pi}$, we can expand $f$ around a portion of $X_s$ itself. This is the essence of the next corollary. 
If  $X\in \Lambda$ is of length at least $t>0$, write $X |_{[s,t]} \in \Lambda_{t-s}$, $ s\le t $, for  the restriction of $X$  to $[s,t]$, i.e.  $X |_{[s,t]}(u) = x_{s+u}$, $u \in [0,t-s]$. 
Also, when $X$ is continuous, we  can define $X |_{[t,s]}$ as 
the time-reversed version of $X |_{[s,t]}$, that is 
$$X |_{[t,s]}(u) = \overleftarrow{X |_{[s,t]}}(u)  :=  x_{t-u} , \quad  u \in [0,t-s]. $$
Note that the continuity of  $X$ is essential as the time-reversed path would be otherwise strictly càglàd, hence no longer in $\Lambda$. 
We adopt the convention that $X_t \oplus X|_{[t,s]} = X_s$ so concatenating a time-reversed path reduces the length.  

\begin{corollary}\label{cor:FTEX}
Let $X\in \Omega^{\Pi}_t$ and $f\in \C^{K,K+1}(\bar{\Omega}^{\Pi})$. Then for all $ s<t$, 
\begin{align}
    f(X_{t}) &= \sum_{|\alpha|< K}  \Delta_{\alpha}f(X_s) \calS_{\alpha}(X |_{[s,t]}) + \sum_{|\alpha| = K} \int_{\triangle_{K,t-s}} \Delta_{\alpha}f(X_{s+t_1}) \circ \, dx_{s+\cdot}^{\alpha}\; . \label{eq:CorFTE1}
\end{align}
Also, swapping the roles of $s$ and $t$ gives the anticipative expansion,
\begin{align}
    f(X_{s}) &= \sum_{|\alpha|< K}  \Delta_{\alpha}f(X_t) \calS_{\alpha}(X |_{[t,s]}) + \sum_{|\alpha| = K} \int_{\triangle_{K,t-s}} \Delta_{\alpha}f(X_{t-t_1}) \circ \, dx_{t-\cdot}^{\alpha}\; . \label{eq:CorFTE2}
\end{align}
\end{corollary}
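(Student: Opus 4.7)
The plan is to apply the functional Taylor expansion (\cref{thm:FTE}) twice, once with the "added" path being a forward slice of $X$ and once with the added path being a time-reversed slice.

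\textbf{Forward expansion \eqref{eq:CorFTE1}.} Fix $s<t$ and set $Y := X|_{[s,t]} \in \Lambda_{t-s}$. Since $X \in \Omega^{\Pi}_t$ is continuous and the quadratic variation function is additive along its argument, $Y$ lies in $\Omega^{\Pi'}_{t-s}$ where $\Pi'$ denotes the sequence of partitions of $[0,t-s]$ obtained by shifting $\Pi \cap [s,t]$ to the origin. The continuity of $X$ and the definition of $\oplus$ give
\begin{equation*}
    X_s \oplus Y \;=\; X_s \oplus X|_{[s,t]} \;=\; X_t,
\end{equation*}
since gluing continuously at $x_s$ simply recovers $X$ on $[0,t]$. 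Applying \cref{thm:FTE} to the pair $(X_s, Y_{t-s})$, noting that $\Delta_\alpha f(X_s \oplus Y_{t_1}) = \Delta_\alpha f(X_{s+t_1})$ and that $\circ\,dy^{\alpha}$ translates into $\circ\,dx_{s+\cdot}^{\alpha}$ by the definition of $Y$, yields \eqref{eq:CorFTE1}.

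\textbf{Anticipative expansion \eqref{eq:CorFTE2}.} Here we take $Y := X|_{[t,s]} = \overleftarrow{X|_{[s,t]}} \in \Lambda_{t-s}$ and initial path $X_t$. The path $Y$ is again continuous, and its quadratic variation along the reversed partition sequence $\Pi''$ (obtained by reflecting $\Pi'$ about the midpoint of $[0,t-s]$) equals that of $X|_{[s,t]}$ read from right to left, hence exists and is continuous. Thus $Y \in \Omega^{\Pi''}_{t-s}$. By the convention $X_t \oplus X|_{[t,s]} = X_s$, applying \cref{thm:FTE} to $(X_t, Y_{t-s})$ gives precisely \eqref{eq:CorFTE2}, once we observe that $Y_{t_1} = X|_{[t,\,t-t_1]}$ so that $X_t \oplus Y_{t_1} = X_{t-t_1}$ (again by the convention), and that the increments $dy_r$ of $Y$ are exactly $dx_{t-r}$, so $\circ\,dy^{\alpha}$ becomes $\circ\,dx_{t-\cdot}^{\alpha}$.

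\textbf{Main obstacle.} Both steps are essentially bookkeeping once \cref{thm:FTE} is in hand; the only substantive point is verifying that $X|_{[s,t]}$ and its time reversal inherit the $\Pi$-integrator property so that their signatures and iterated Stratonovich integrals are defined pathwise via \cref{prop:Sig}. This is immediate for the forward slice, whereas for the time-reversed slice one uses that $\Pi$-integration depends only on the collection of partition points and the squared increments, both of which are invariant under reversal.
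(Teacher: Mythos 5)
Your treatment of \eqref{eq:CorFTE1} is correct and coincides with the paper's: set $Y_u = X|_{[s,t]}$ with $u=t-s$ and apply \cref{thm:FTE} directly. The gap is in the anticipative expansion \eqref{eq:CorFTE2}. You apply \cref{thm:FTE} to the pair $(X_t, X|_{[t,s]})$ and invoke the convention $X_t\oplus X|_{[t,s]}=X_s$ to identify the left-hand side. But \cref{thm:FTE} is proved for the genuine concatenation $\oplus$ of the Notations section, which always lengthens the path: applied literally to $X_t$ and $Z=X|_{[t,s]}$ it expands $f$ at the retracing path of length $2t-s$, not at $X_s$. The convention is a notational shorthand used to state the result; it is not a property that the proof of \cref{thm:FTE} (which rests on flatly extending and vertically bumping the \emph{added} path) respects, so the theorem cannot be imported wholesale for a time-reversed perturbation.

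The substantive content that is missing is the sign bookkeeping. The paper instead iterates the Stratonovich formula \emph{backward}: each integration step contributes a factor $-1$, yielding (with $W=X|_{[s,t]}$, $u=t-s$, and time-reversed simplexes $\overleftarrow{\triangle}_{k,u}$) the intermediate identity
\begin{equation*}
f(X_s)=\sum_{|\alpha|<K}(-1)^{|\alpha|}\,\Delta_{\alpha}f(X_t)\int_{\overleftarrow{\triangle}_{|\alpha|,u}}\circ\,dw^{\alpha}\;+\;(-1)^K\sum_{|\alpha|=K}\int_{\overleftarrow{\triangle}_{K,u}}\Delta_{\alpha}f(X_s\oplus W_{t_1})\circ dw^{\alpha}.
\end{equation*}
One must then verify, via the change of variables $t_i\mapsto u-t_i$, that each reversed iterated integral equals $(-1)^{|\alpha|}$ times the corresponding forward iterated integral of the reversed path $Y=\overleftarrow{W}$; only then do the signs cancel and produce \eqref{eq:CorFTE2} with no residual factors. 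Your assertion that ``the increments $dy_r$ of $Y$ are exactly $dx_{t-r}$'' hides precisely this point: since $t-r$ decreases as $r$ increases, the orientation of every integral (including the time component $y^0$) flips, and the cancellation of these $(-1)^{|\alpha|}$ factors against those generated by the backward iteration is the actual content of the second half of the corollary. Your observation that the reversed slice inherits the $\Pi$-integrator property is correct and needed, but it is not the main obstacle.
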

\begin{proof} See \Cref{app:FTEX}.
\end{proof}

\begin{figure}[t]
\caption{Illustration of the pertubation paths ($Y$) added to the source path ($X$) appearing in the functional Taylor expansion (\cref{thm:FTE}).}
\vspace{-2mm}
    \centering    \includegraphics[height=1.8in,width=2.5in]{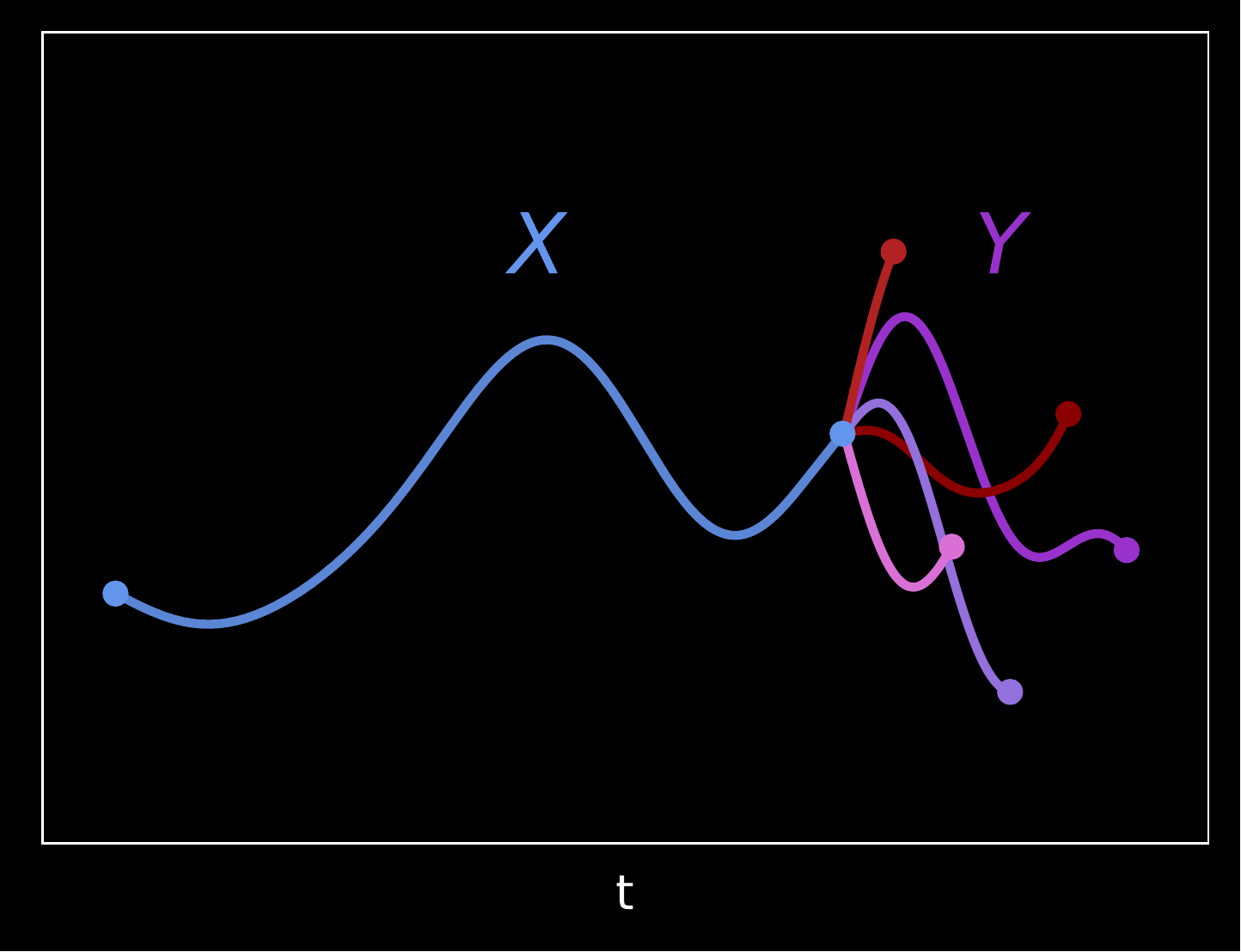}
    \label{fig:FTEFlower3D}
\label{fig:FTEFlower}
\end{figure}

The expansion $\eqref{eq:CorFTE1}$ is in appearance similar to \citet[Theorem 14]{LittererOberhauser}. The main difference is that our expansion is made pathwise.  
Also, the word enumeration differs: \citet{LittererOberhauser} use the weighted length $\lVert \alpha \rVert = 2|\alpha|_0 + |\alpha|_1$ to truncate the expansion. This is to reflect the "$\sqrt{t}$ scale" of It\^o diffusions. This partition of the words will be used in \cref{sec:dynamicChaos} when comparing the chaos expansion with the FTE.

We finally expand a functional around the initial value of the path. Borrowing the terminology from differential calculus, we call such a series a  \textit{functional  Maclaurin expansion}.  

\begin{corollary}\textnormal{(\textbf{Functional  Maclaurin expansion})} \label{cor:FME}
Let $f$ be a functional in $\C^{K,K+1}(\bar{\Omega}^{\Pi})$. Then for all $t\in [0,T]$, the following expansion holds,
\begin{align}
    f(X_t) &= \sum_{|\alpha|< K}  \Delta_{\alpha}f(X_0) \calS_{\alpha}(X_t) + R_{K}(X_t), \quad X_t \in \Omega_t^{\Pi}, \label{eq:FME} \\
    R_{K}(X_t) &= \sum_{|\alpha| = K} \int_{\triangle_{K,t}} \Delta_{\alpha}f(X_{t_1}) \circ \, dx^{\alpha}. \label{eq:FMER}
\end{align}
\end{corollary}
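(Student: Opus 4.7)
The plan is to deduce the Functional Maclaurin expansion as a direct specialization of the Functional Taylor Expansion (\cref{thm:FTE}), by choosing the source path to be the degenerate path of length zero.

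First, I would set $s = 0$ and $u = t$ in the FTE, taking the source path to be $X_0 \in \Lambda_0$ (the single point $x_0$) and the perturbation path to be $X_t \in \Omega^{\Pi}_t$ itself. The key preliminary check is that $X_0 \oplus X_t = X_t$ under the concatenation operator defined in the notations section. Indeed, for $r \in [0,t]$,
\begin{equation*}
(X_0 \oplus X_t)(r) = x_{r \wedge 0} + x_{(r-0)^+} - x_0 = x_0 + x_r - x_0 = x_r,
\end{equation*}
so the concatenation correctly returns the original path. Similarly, $X_0 \oplus X_{t_1} = X_{t_1}$ for every $t_1 \in [0,t]$, which is what is needed inside the remainder integral.

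Next, I would verify that the regularity hypothesis of the corollary is strong enough to invoke \cref{thm:FTE}. The statement of the FTE is nominally given for $f \in \C^{K,K+1}(\Lambda)$, but the only paths at which $f$ and its derivatives are evaluated in this specialization are of the form $X_{t_1} \in \Omega^{\Pi}_{t_1} \subset \bar{\Omega}^{\Pi}$, and the iterated Stratonovich integrals defining $R_K(X_t)$ are already well-defined on $\bar{\Omega}^{\Pi}$ by \cref{prop:Sig}. Hence the weaker requirement $f \in \C^{K,K+1}(\bar{\Omega}^{\Pi})$ stated in the corollary is sufficient.

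Substituting into \eqref{eq:FTE1}--\eqref{eq:FTE2}, the main sum becomes $\sum_{|\alpha|<K} \Delta_\alpha f(X_0)\, \calS_\alpha(X_t)$ and the remainder reduces to
\begin{equation*}
R_K(X_0,X_t) = \sum_{|\alpha|=K} \int_{\triangle_{K,t}} \Delta_\alpha f(X_0 \oplus X_{t_1}) \circ dx^{\alpha} = \sum_{|\alpha|=K} \int_{\triangle_{K,t}} \Delta_\alpha f(X_{t_1}) \circ dx^{\alpha},
\end{equation*}
which is exactly \eqref{eq:FMER}. There is no genuine obstacle here: the only point requiring some care is confirming that the identity $X_0 \oplus Y = Y$ holds verbatim under the author's definition of $\oplus$, so that the FTE evaluated at a length-zero source collapses cleanly to a pure expansion around $x_0$.
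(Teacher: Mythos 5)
Your proposal is correct and matches the paper's (implicit) argument: the corollary carries no separate proof precisely because it is the specialization of \cref{thm:FTE} to a length-zero source path, using $X_0 \oplus Y = Y$ under the definition of $\oplus$. Your additional check that the hypothesis $f\in\C^{K,K+1}(\bar{\Omega}^{\Pi})$ suffices—since every evaluation point $X_0\oplus X_{t_1}=X_{t_1}$ and its vertical bumps lie in $\bar{\Omega}^{\Pi}$—is a worthwhile clarification but does not change the route.
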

\begin{remark}
    The functional Maclaurin expansion (or more broadly, the FTE) presents obvious benefits to approximate functionals. 
Indeed, an estimate of $f \in \C^{K,K+1}(\bar{\Omega}^{\Pi})$ is given by the truncation,  
\begin{equation*}
     f^{K}(X_t) := \sum_{|\alpha|\, \le \, K} \Delta_{\alpha}f (X_0) \calS_{\alpha}(X_t).
\end{equation*}\\[-0.5em]
    In other words, $f^{K}(X)$ approximates the transformed path $ f(X)$ by a linear combination of signature elements. \cref{fig:FTE} provides an illustration. 
    The latter is referred to  as a \textit{polynomial functional}  \cite{LittererOberhauser} and in finance, as a  \textit{signature payoff}  \cite{LyonsNum}   or simply \textit{sig-payoff} \cite{CuchieroSig}. In calculus and differential geometry, $f^K$ may also be regarded  as the $K-$\textit{jet} of $f$. 
By applying the Stone–Weierstrass theorem, it can be shown that signature payoffs are dense in the space of continuous functionals  restricted to a compact subset of $\Lambda$. 
Among others,  the result is proved in \citet{LittererOberhauser} for continuous paths  
of bounded variation and in \citet{CuchieroSig} for paths emanating from a continuous semimartingale.  
However, the Stone-Weierstrass theorem only guarantees the existence of a signature payoff arbitrarily close to a functional. The functional Taylor series, on the other hand, makes the approximation explicit.
\end{remark}

We now discuss several examples.
\begin{example}
  Let $f(X_t) = \sqrt{2}\int_0^t \sin(x_s)ds$ and assume that $x_0 = \pi/4$ for reasons that will become clear below. Following from \Cref{ex:IntDev}, we obtain $\Delta_t f =\sqrt{2}\sin(x_t)$ and $\Delta_x f =0$. The only non-zero,  second order derivative   is therefore  $\Delta_{xt} f = \sqrt{2}\cos(x_t)$. Then \Cref{cor:FME} gives the approximations
  \begin{align*}
      f^1(X_t) &= \Delta_t f(X_0) \calS_0(X_t) = \sqrt{2}\sin(\pi/4) t =  t, \\
      f^2(X_t) &= f^1(X_t) + \Delta_{tx}
      f(X_0) \calS_{10}(X_t) = f^1(X_t) +\sqrt{2}\cos(\pi/4)\calS_{10}(X_t) = \int_0^t (1+x_s) ds.
  \end{align*}
\cref{fig:FTE} provides an illustration in the $(t,X,f(X))$ space for two  trajectories of different lengths.  
\end{example}

\begin{figure}[H]
    \centering
        \caption{Taylor approximations 
        of $f(X_t) = \sqrt{2}\int_0^t \sin(x_s)ds$   of order $1$ and $2$, giving $f^1$, $f^2$, respectively. Two realisations are shown in the $(t,X,f(X))$ space.  }
    \includegraphics[height=2.0in,width=2.8in]{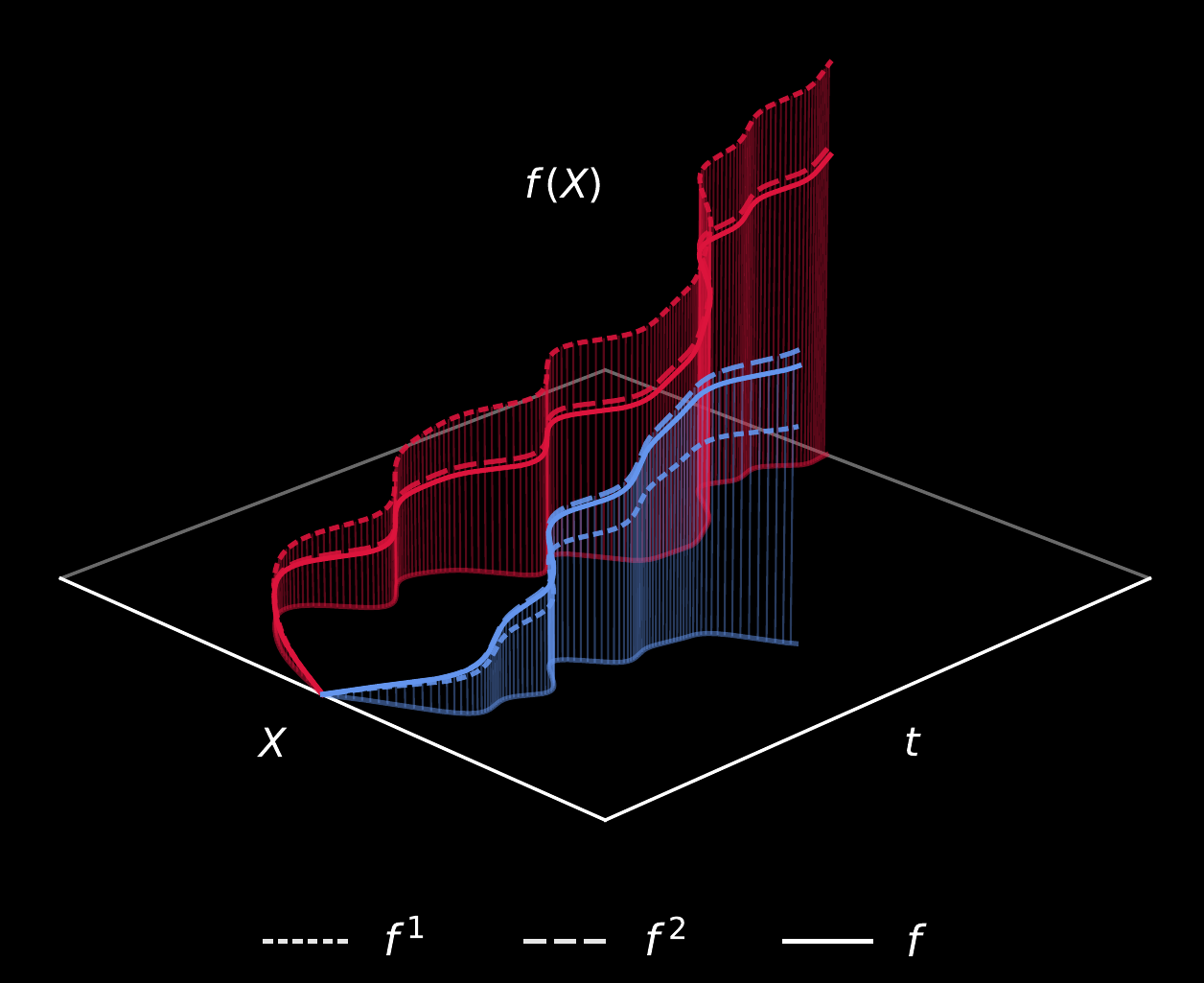}
    \label{fig:FTE}
\end{figure}

\begin{example} \label{ex:classicTaylor}\textbf{(Taylor expansion)}
Let $h:\R\to \R$ be an analytic European payoff.
Going back to the fundamentals, Taylor's theorem gives 
\begin{equation}\label{eq:stdTaylor}
    h(x_t)= \sum_{k=0}^{K-1} h^{(k)}(x_0) \frac{(x_t-x_0)^k}{k!} + R_K(x_t), \quad R_K(x_t)= o(|x_t-x_0|^{K-1}).
\end{equation}
 It is often overlooked that the scaled monomials appearing in the expansion are in fact iterated integrals. Indeed, observe that 
$\frac{(x_t-x_0)^k}{k!} = \int_0^t \cdots \int_0^{t_2} \circ \, dx_{t_1}\ldots \circ dx_{t_k} = \calS_{\mathds{1}_k}(X_t). $ 
Moreover, the integral form of the remainder allows us to write
\begin{align*}
    R_K(x_t) = \int_{x_0}^{x_t} h^{(K)}(z) \frac{(z-x_0)^{K-1}}{(K-1)!} dz 
    = \int_{\triangle_{K,t}} h^{(K)}(x_{t_1}) \circ \, dx^{\mathds{1}_k}.
\end{align*}
On the other hand, we can embed $h$ into the space of functionals by defining
$f(X_t) = h(x_t)$. Of course, $\Delta_x f = \partial_x h$ and $\Delta_t f = \partial_t h = 0$. Hence $f\in \C^{K,K+1}$ and  \Cref{cor:FME} yields
\begin{align*}
    f(X_t)  
    &= \sum_{k=0}^{K-1} \Delta_{\mathds{1}_k} f(X_0)\calS_{\mathds{1}_k}(X_t) +  \int_{\triangle_{K,t}} \Delta_{\mathds{1}_k} f(X_{t_1}) \circ \, dx^{\mathds{1}_k},
\end{align*}
which is consistent with $\eqref{eq:stdTaylor}$. We conclude that the elements $\{\calS_{\varnothing}, \calS_{1},\calS_{11},\calS_{111}, \ldots\}$ 
suffices to replicate European payoffs. Equivalently, they allow to  recover the volatility surface. 
\end{example} 

\begin{example} \label{ex:stochTaylor} \textbf{(Stochastic Taylor expansion)}
We slightly digress from our deterministic setting and verify that \Cref{thm:FTE} is a generalization of 
the stochastic Stratonovich-Taylor expansion  \cite{BenArous,KP}. 
For simplicity, let us stick to the one-dimensional case and consider the nonautonomous Stratonovich stochastic differential equation, 
\begin{equation}\label{eq:SDE}
    dz_t = a(t,z_t)dt + b(t,z_t) \circ dx_t, \quad z_0 \in \R,
\end{equation}
for smooth (hence Lipschitz) measurable functions $a,b: [0,T]\times \R \to \R$, and a  standard Brownian motion $X$. If we further assume that $a,b$ have 
 at most linear growth, there exists a unique strong solution of $\eqref{eq:SDE}$ \cite[Theorem~4.5.3.]{KP}.
 Moreover, the adaptedness of $Z$ 
 ensures the existence of a functional $\calI:\Lambda\to \R-$the \textit{Itô map}$-$such that 
\begin{equation}\label{eq:SDEFunc}
    z_t = \calI(X_t)  = z_0 + \int_0^t a(s,\calI(X_s))ds +  \int_0^t b(s,\calI(X_s)) \circ dx_s.
\end{equation}
Given $h \in \calC^{K,K+1}([0,T],\R)$ and $s\le t$, the Stratonovich-Taylor expansion  reads
\begin{align}\label{eq:STE}
    h(t,z_t) &= \sum_{|\alpha|< K}  \calV_{\alpha}h(s,z_s) \calS_{\alpha}(X_t) + \sum_{|\alpha| = K} \int_{\triangle_{K,t-s}} \calV_{\alpha}h(s+t_1,z_{s+t_1})\circ \, dx^{\alpha},
\end{align}
with the vector fields 
$\calV_0 = \partial_t + a \,\partial_z$, $\calV_1=b\, \partial_z$, and $\calV_{\alpha} =\calV_{\alpha_1}\cdots \calV_{\alpha_k}$. Next, define the functional $f(X_t) = h(t,\calI(X_t))$. Comparing $\eqref{eq:STE}$ with  \Cref{cor:FTEX}, we gather that  $\calV_{\alpha}h = \Delta_\alpha f$ must hold. Even though the regularity of $f$ is a priori not guaranteed, we will see that the smoothness of $h$, $a$, $b$ carries over to $f$. 
It suffices to show that $\calV_\alpha$ acts on $h$ exactly as $\Delta_\alpha$ acts on $f$ for $\alpha=0,1$. Indeed, the same would be true for longer words due to the similar recursive structure of the differential operators. If $\alpha=0$, the chain rule, equation $\eqref{eq:SDEFunc}$ and Example \ref{ex:IntDev} give respectively 
\begin{align*}
    \Delta_{t}f = \partial_t h + \partial_z h \, \Delta_t\! \calI = (\partial_t  + a \, \partial_z)h = \calV_0 h.
\end{align*}
As $b$ is smooth and  $X$ continuous $\Q-$a.s., we have $\Delta_x\calI=b$.  
Hence   $\Delta_{x}f = \partial_z h \, \Delta_x \!\calI = b \, \partial_z h = \calV_1 h.$ 
In conclusion, the two expansions coincide as claimed.
\end{example}

\subsubsection{Remainder Estimates and Real Analytic Functionals}\label{sec:realAnalytic}
First, we give in \cref{prop:remainder} 
an  upper bound for the remainder of the FTE.  
In the diffusion case, one typically study the convergence of (functional) Taylor expansions in the $L^2(\Q)$ sense; see  \citet[Section 5.10.]{KP} or \citet[Section 5.10.]{LittererOberhauser}.  
Indeed, It\^o's isometry greatly  facilitates manipulation 
of expected iterated integrals.  
However, this powerful tool is no longer available when one is interested in the pathwise convergence of the Taylor expansion. 
Another avenue pursued in  \citet{BMZ} consists of applying the Burkholder–Davis–Gundy inequality several times to obtain  an "almost sure" upper bound of the remainder. 
This procedure, although powerful,  
still requires  a stochastic framework. 
We here propose a purely pathwise remainder estimate. 
For simplicity, we establish the result for the functional Maclaurin expansion as in \cref{cor:FME}. 
To this end, we associate to each path $X_t \in \Lambda$ 
the  seminorm\footnote{Note that $[x_0, x_s]$ is understood as $[x_0\wedge x_s, x_0\vee x_s]$ throughout. We also recall that $X_s^{(\varepsilon)}= X_s^{\varepsilon-x_s}$ (so that $ X_s^{(\varepsilon)}(s) = \varepsilon$) as seen in \cref{sec:pathStrat}.} 
\begin{equation}\label{eq:funcNorm}
    \lVert \varphi \rVert_{X_t} = \sup_{s\le t} \sup_{\varepsilon \in [x_0, x_s]}|\varphi(X_s^{(\varepsilon)})|, \quad  \varphi:\Lambda \to \R. 
\end{equation}
As $\{X_s^{(\varepsilon)} \ : \  \varepsilon \in [x_0,x_s], \ s\le t\}$ is a compact subset of $\Lambda$, then $\lVert \varphi \rVert_{X_t} < \infty$ whenever $\varphi$ is $\Lambda-$continuous. Lastly, let $\lVert X_t \rVert_{\infty} = \sup_{s\le t} |x_s|$ and assume without loss of generality that $x_0 =0$.  

\begin{proposition}\label{prop:remainder} \textnormal{\bf(Remainder Estimate)} Let $X_t\in \Omega_t^{\Pi}$ and $f\in \C^{\infty,\infty}(\bar{\Omega}^{\Pi})$. For $K\ge 1$, consider the functional Maclaurin expansion 
 $$f(X_t) = \sum_{|\alpha| < K} \Delta_{\alpha}f(\boldsymbol{0}) \calS_{\alpha}(X_t) + R_K(X_t). $$ Moreover, write 
$r_{\alpha}(X_t) = \int_{\triangle_{K,t}}\Delta_{\alpha}f(X_{t_1}) \circ dx^{\alpha},$
so that 
$R_K(X_t) = \sum_{|\alpha| = K} r_{\alpha}(X_t).$ Let $|\alpha|_{01}$ be the number of occurrences of the word $01$ in $\alpha$. 
Then for each $|\alpha| = K$,  we have 
\begin{equation}\label{eq:r_alpha}
    |r_{\alpha}(X_t)| \le \frac{2^{|\alpha|_{01}}}{|\alpha|_0!}  c_{\alpha} \  t^{|\alpha|_0}\lVert X_t \rVert_{\infty}^{|\alpha|_1},
\end{equation}
with $c_{\alpha} = \lVert \Delta_{\alpha}
      f\rVert_{X_t} + (t \vee 1) \ \lVert \Delta_{0\alpha} f\rVert_{X_t}$.  
     Letting $C_{K} = \max_{|\alpha|=K} c_{\alpha}$ and $\rho(X_t) = 2(t \vee  \lVert X_t \rVert_{\infty})$, then the remainder is bounded by
     \begin{equation}\label{eq:remainder}
         |R_{K}(X_t)|\le C_{K} \rho(X_t)^K.
     \end{equation} 
\end{proposition}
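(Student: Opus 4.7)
The plan is to bound each term $r_\alpha(X_t)$ individually (establishing \eqref{eq:r_alpha}) and then to combine them via the triangle inequality together with a combinatorial estimate to deduce \eqref{eq:remainder}. The core tool throughout will be the pathwise antiderivative identity \eqref{eq:pathStrat} of Section \ref{sec:pathStrat}, which expresses a Stratonovich integral $\int_0^t \varphi(X_s)\circ dx_s$ of a $\C^{1,1}$ functional as a spatial primitive plus an explicit correction involving $\Delta_t \varphi$.

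For \eqref{eq:r_alpha}, I would introduce the partial iterates $\psi_0 := \Delta_\alpha f$ and $\psi_k(X_s) := \int_0^s \psi_{k-1}(X_u) \circ dx_u^{\alpha_k}$ for $1\le k \le K$, so that $r_\alpha(X_t) = \psi_K(X_t)$. Setting $p_k = |\alpha_1\cdots\alpha_k|_0$ and $q_k = |\alpha_1\cdots\alpha_k|_1$, the induction target is
\begin{equation*}
   \sup_{\varepsilon \in [x_0,x_s]} |\psi_k(X_s^{(\varepsilon)})| \;\le\; C_k \,\frac{s^{p_k}}{p_k!}\,\lVert X_t\rVert_\infty^{q_k} \qquad (s\le t),
\end{equation*}
for explicit constants $C_k$. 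The case $\alpha_k = 0$ is trivial, keeping $C_k = C_{k-1}$ and producing the factorial via $\int_0^s u^{p_{k-1}}/p_{k-1}!\,du = s^{p_k}/p_k!$. The case $\alpha_k = 1$ is the crux: I would apply \eqref{eq:pathStrat} to $\psi_k(X_s^{(\varepsilon)})$, splitting it into the primitive $\int_{x_0}^{\varepsilon} \psi_{k-1}(X_s^{(\eta)})\,d\eta$ (bounded via $|\varepsilon|\le \lVert X_t\rVert_\infty$, contributing $C_{k-1}$) and the correction $-\int_0^s\!\int_{x_0}^{x_u}\Delta_t\psi_{k-1}(X_u^{(\eta)})\,d\eta\,du$. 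Examples \ref{ex:IntDev}--\ref{ex:StratDev} identify $\Delta_t\psi_{k-1}$ as $\psi_{k-2}$ when $\alpha_{k-1}=0$ and $k\ge 2$ (correction contributes $C_{k-2}$, the time integration again producing the right factorial increment), as $0$ when $\alpha_{k-1}=1$ (correction vanishes), and as $\Delta_{0\alpha} f$ when $k=1$ (correction contributes $t\lVert \Delta_{0\alpha} f\rVert_{X_t}$). Collecting: $C_0 = \lVert \Delta_\alpha f\rVert_{X_t}$, $C_1 \le c_\alpha$ (absorbing $t$ into $t\vee 1$), $C_k = C_{k-1} + C_{k-2}$ whenever $\alpha_{k-1}\alpha_k = 01$ ($k\ge 2$), and $C_k = C_{k-1}$ otherwise. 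Since each ``$01$'' substring is preceded by a ``$0$'' step (forcing $C_{k-1} = C_{k-2}$, hence $C_k = 2 C_{k-1}$), a routine induction yields $C_K \le 2^{|\alpha|_{01}} c_\alpha$, which at $s=t$, $\varepsilon = x_t$ gives exactly \eqref{eq:r_alpha}.

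For \eqref{eq:remainder}, the triangle inequality together with $c_\alpha \le C_K$ give $|R_K(X_t)|\le C_K \sum_{|\alpha|=K}(2^{|\alpha|_{01}}/|\alpha|_0!)\, t^{|\alpha|_0} \lVert X_t\rVert_\infty^{|\alpha|_1}$, which I would majorize by $C_K \rho(X_t)^K$ after grouping words by $(|\alpha|_0,|\alpha|_1)$ and applying a binomial-type bound combined with $t\vee \lVert X_t\rVert_\infty$ on each letter. The main obstacle will be Step 1: one has to run the entire induction uniformly over the \emph{perturbed} paths $X_s^{(\varepsilon)}$, not only at $X_s$, because each ``$1$''-step feeds $\psi_{k-1}$ evaluated at auxiliary paths $X_s^{(\eta)}$ and $X_u^{(\eta)}$. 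The careful combination of (i) the exact cancellation $\Delta_t\psi_{k-1} \equiv 0$ at ``$11$''-steps, (ii) the time-integration producing the correct factorial increment at ``$01$''-steps, and (iii) the bookkeeping of the exponents $(p_k,q_k)$, is what compresses the potential doubling at every step into the mild exponential factor $2^{|\alpha|_{01}}$ rather than the naive $2^K$.
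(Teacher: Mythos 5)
Your proposal follows essentially the same route as the paper's proof: peel off the iterated integral one letter at a time, handle each ``$1$''-step via the pathwise representation \eqref{eq:pathStrat}, use that $\Delta_t$ of a Stratonovich integral vanishes (the ``$11$'' case) while $\Delta_t$ of a time integral returns the integrand (the ``$01$'' case, which contributes the doubling), and finish with the triangle inequality and the crude bound $2^{|\alpha|_{01}}/|\alpha|_0! \le 2$. Your explicit constant recursion $C_k = C_{k-1}+C_{k-2}$ at ``$01$''-steps and your insistence on running the induction uniformly over the bumped paths $X_s^{(\varepsilon)}$ are just a cleaner bookkeeping of what the paper's three-case induction on $r_{\alpha,k}$ does implicitly, so the argument is correct and matches the paper.
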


\begin{proof} See \cref{app:remainder}
\end{proof}

In light of $\eqref{eq:remainder}$ and the definition of $\rho(\cdot)$, we conclude that the functional Taylor expansion converges to its functional Taylor series whenever both $t$ and $\lVert X_t \rVert_{\infty}$ are strictly smaller than $1/2$. As $X_t$ is assumed to be uniformly continuous  in \cref{prop:remainder}, we can therefore  find $t$ sufficiently small such that the FTE converges. 

Note that \cref{prop:remainder} gives an error bound of the FTE at a single path. 
Shifting gears, we now investigate functionals such that its FTE converges for \textit{every} path inside a neighborhood of some fixed $X \in \Lambda$. 
 In line with the terminology from calculus, we shall qualify these functionals as \textit{real analytic}. 

What remains is to choose an adequate topology on an appropriate subspace of $\Lambda$. 
Let $\Omega^{\text{Lip}}_t 
\subseteq \Omega^{\Pi}_t$ be the subspace of Lipschitz   paths, that is $Y \in \Omega^{\text{Lip}}_t$ if and only if $[Y]_{\text{Lip}} := \sup_{ u\neq s} \frac{|y_s - y_u|}{|s-u|}<\infty$. As usual, we write $\Omega^{\text{Lip}} = \bigcup_{t\in [0,T]}\Omega^{\text{Lip}}_t$. As Lipschitz paths 
have 
finite variation, the signature is therefore defined without difficulty as Riemann-Stieljes integrals. Moreover, one has the upper bound, 
\begin{equation}\label{eq:sigLipschitz}
    |\calS_{\alpha}(Y_t)| \le [Y_t]_{\text{Lip}}^{|\alpha|_1}\frac{t^{K} }{K!}, \quad |\alpha| = K. 
\end{equation}
For $s \le t$ and Lipschitz paths $Y_t,Y_s' \in \Omega^{\text{Lip}}$, define the metric
$$d_1(Y_t,Y_s')  = t-s + |y_0-y_0'|+ [Y_t - Y_{s,t-s}']_{\text{Lip}},$$ 
and let $B_{\delta}^{\text{Lip}}(Y)$ be the Lipschitz ball centered at $Y\in \Omega^{\text{Lip}}$ of radius $\delta $, i.e. the set of paths $Y' \in \Omega^{\text{Lip}}$ such that $  d_{1}(Y,Y') < \delta$. 
When $Y = \boldsymbol{0} \in \Lambda_0$, we  write $B_{\delta}^{\text{Lip}}$ instead of $B_{\delta}^{\text{Lip}}(\boldsymbol{0})$. 

\begin{definition} Let $\Lambda' \subseteq \Lambda$. 
A functional $f \in \C^{\infty,\infty}(\Lambda')$ is \textit{real analytic at} $X\in \Lambda'$,  if its functional Taylor expansion converges in a neighborhood of $X$, i.e. there exists $\delta >0$ such that 
\begin{equation}\label{eq:analytic}
    f(X\oplus Y) = \sum_{\alpha \in \A}\Delta_{\alpha}f(X) \calS_{\alpha}(Y)  := \lim_{K\to \infty} \sum_{|\alpha|\le K} \Delta_{\alpha}f(X) \calS_{\alpha}(Y) \quad \forall \ Y\in B_{\delta}^{\text{Lip}}.  
\end{equation}
The largest $\delta \ge 0 $ such that $\eqref{eq:analytic}$ holds is called the \textit{radius of convergence} of the functional series of $f$ centered at $X$ and is denoted by $\varrho(f,X)$. 
Finally, we say that  $f$ is \textit{real analytic} (notation: $f\in \C^{\omega}(\Lambda')$) if $f$ is real analytic at every $X \in \Lambda'$.
\end{definition}

We emphasize that the center path $X$ in the above definition need not be Lipschitz; what really matters is the regularity of the concatenated path $Y$. Nevertheless, the functional has to be defined at $X \oplus Y$ which may restrict the subspace $\Lambda'\subseteq \Lambda$ as we shall see in the next example. 
\begin{example}\label{def:realanalytic}
It is easily seen that every signature functional belongs to $\C^{\omega}(\Lambda')$ with $\Lambda' = \bar{\Omega}^{\Pi}$. Indeed for any word  $\alpha \ne \varnothing$ and $X\in \Omega^{\Pi}$, Chen's identity \cite{Chen} reads 
\begin{equation}\label{eq:ChenIdentity}
    \calS_{\alpha}(X\oplus Y) = \sum_{\beta   \gamma = \alpha}  \calS_{\beta}(X)\calS_{\gamma}(Y) \quad \forall \ Y \ \in \Omega^{\text{Lip}}. 
\end{equation}
Since $ \Delta_{\gamma} \calS_{\beta  \gamma}(X) = \calS_{\beta}(X) $ (see \cref{prop:sigProperties} (i)) we conclude that the right side of $\eqref{eq:ChenIdentity}$ is the Taylor series of $\calS_{\alpha}(X\oplus Y)$ and in turn that $\calS_{\alpha} \in \C^{\omega}(\bar{\Omega}^{\Pi})$. Moreover,  
the radius of convergence at $X$ is infinite, i.e. $\varrho(\calS_{\alpha},X) = \infty$. 
\end{example}

Finally, we provide an explicit upper bound for the remainder term of  smooth functionals with Lipschitz pertubation paths and a characterization for the corresponding radius of convergence. 
\begin{proposition} \label{prop:remainderLip} Let $\Lambda' \subseteq \Lambda$, $X\in \Lambda'$,  and  $f\in \C^{K,K+1}(\Lambda')$. Suppose that there exist constants $C_1, C_2 \ge 0$ such that $ |\Delta_{\alpha}f(X\oplus Y')| \le C_1^{K}(K!)^{C_2}$ for all $|\alpha| = K$ and $Y' \in \Omega^{\textnormal{Lip}}$. Then, the remainder $\eqref{eq:FTE2}$ of the FTE of $f$ satisfies   
\begin{align}\label{eq:LipRemainder}
    |R_K(X,Y_u)| &\le (K!)^{C_2-1}  \rho(Y_u)^K, \quad Y_u\in \Omega_u^{\textnormal{Lip}}, \quad u\ge 0, 
\end{align}
with $\rho(Y_u) = 2 \ C_1  ([Y_u]_{\textnormal{Lip}} \vee 1) u$. If $f$ is also smooth, then it is real analytic if and only if $C_2 \le 1$, with 
 radius of convergence 
 $$\varrho(f,X)\ge  \frac{1}{2C_1}\wedge 1\; \  \text{ if $C_2 < 1$}, \quad \varrho(f,X) = \infty \; \ \text{ if $C_2 = 1$}.$$
\end{proposition}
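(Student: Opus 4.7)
The plan is to bound each summand in the remainder directly, exploiting the Lipschitz regularity of the perturbation path. Writing
\[
R_K(X, Y_u) = \sum_{|\alpha|=K} r_\alpha(X, Y_u), \quad r_\alpha(X, Y_u) = \int_{\triangle_{K,u}} \Delta_\alpha f(X \oplus Y_{t_1}) \circ dy^\alpha,
\]
I would first observe that since $Y_u \in \Omega^{\text{Lip}}_u$ has finite variation, the Stratonovich integrals collapse to Riemann--Stieltjes integrals, so I can use the pointwise estimates $|dy_t^0| = dt$ and $|\circ\, dy_t^1| \le [Y_u]_{\text{Lip}}\, dt$. Combined with the hypothesis $\lVert \Delta_\alpha f(X \oplus \cdot)\rVert_\infty \le C_1^K (K!)^{C_2}$ on $\Omega^{\text{Lip}}$, this gives
\[
|r_\alpha(X, Y_u)| \le C_1^K (K!)^{C_2}\, [Y_u]_{\text{Lip}}^{|\alpha|_1} \int_{\triangle_{K,u}} dt^{\otimes K} = C_1^K (K!)^{C_2-1} [Y_u]_{\text{Lip}}^{|\alpha|_1}\, u^K,
\]
since $\mathrm{vol}(\triangle_{K,u}) = u^K/K!$.

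Next, I would sum over the $2^K$ words of length $K$ and use $[Y_u]_{\text{Lip}}^{|\alpha|_1} \le ([Y_u]_{\text{Lip}}\vee 1)^K$ to uniformize the bound across $\alpha$. This yields
\[
|R_K(X, Y_u)| \le 2^K C_1^K (K!)^{C_2-1} ([Y_u]_{\text{Lip}}\vee 1)^K u^K = (K!)^{C_2-1}\, \rho(Y_u)^K,
\]
with $\rho(Y_u) = 2C_1([Y_u]_{\text{Lip}}\vee 1)u$, which is exactly \eqref{eq:LipRemainder}.

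For the analyticity statement, I would analyze the growth of the right-hand side. When $C_2 < 1$, Stirling's formula shows $(K!)^{C_2-1}$ decays super-exponentially, so the bound tends to zero for any finite $\rho(Y_u)$, giving infinite radius of convergence. When $C_2 = 1$, the bound reduces to $\rho(Y_u)^K$, which vanishes iff $\rho(Y_u) < 1$; taking $Y \in B^{\text{Lip}}_\delta$ with $\delta \le 1 \wedge \tfrac{1}{2C_1}$ ensures both $[Y]_{\text{Lip}} \le 1$ and $u \le \delta$, whence $\rho(Y) \le 2C_1 \delta < 1$, proving the claimed lower bound on $\varrho(f,X)$. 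For the necessity ($C_2 \le 1$ required for analyticity), I would observe that when $C_2 > 1$ and the bound is saturated on a suitable word, the terms $(K!)^{C_2-1}\rho^K$ diverge for every $\rho > 0$, precluding convergence.

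The main technical step is the iterated-integral estimate, which is essentially routine once Stratonovich is replaced by Riemann--Stieltjes. The mild obstacle is the converse direction of the analyticity equivalence, since an upper bound on $|\Delta_\alpha f|$ does not by itself force the remainder to be large; this is handled by invoking the tightness of the hypothesis in the regime $C_2 > 1$ on at least one family of indices $\alpha$.
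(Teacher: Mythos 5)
Your argument is correct and essentially identical to the paper's: the same termwise bound $|r_\alpha(X,Y_u)| \le C_1^K (K!)^{C_2}\, [Y_u]_{\text{Lip}}^{|\alpha|_1}\, u^K/K!$ obtained from the Lipschitz signature estimate \eqref{eq:sigLipschitz}, the same summation over the $2^K$ words of length $K$ with $[Y_u]_{\text{Lip}}^{|\alpha|_1} \le ([Y_u]_{\text{Lip}}\vee 1)^K$, and the same choice $\delta = \frac{1}{2C_1}\wedge 1$ for the radius. Two remarks: your assignment of the cases ($\varrho(f,X)=\infty$ when $C_2<1$, the finite lower bound when $C_2=1$) is the mathematically correct one and matches the paper's own proof, even though the displayed conclusion of the proposition has the two cases transposed; and your caveat about the ``only if'' direction is well founded, since the paper dismisses it as obvious while an upper bound on the derivatives alone cannot force divergence of the remainder when $C_2>1$.
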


\begin{proof} For all $|\alpha| = K$, note that $|\int_{\triangle_{K,u}} \Delta_{\alpha}f(X\oplus Y_{t_1}) dy^{\alpha}| \le C_1^{K}(K!)^{C_2} [Y]_1^{|\alpha|_{\textnormal{Lip}}} \frac{u^K}{K!}$  using $\eqref{eq:sigLipschitz}$ and the assumption on $\Delta_{\alpha}f$. Hence $\eqref{eq:LipRemainder}$ follows from  $[Y]_1^{|\alpha|_1} \le ([Y]_{\textnormal{Lip}} \vee 1)^K$ and the triangular inequality. Now if $C_2 = 1$ and $\delta = \frac{1}{2C_1}\wedge 1$, then  for all $Y_u \in B_{\delta}^{\text{Lip}}$,  $[Y_u]_{\textnormal{Lip}} < \delta \le 1 $ and similarly $u<1$. This implies that $\rho(Y_u) = 2C_1 u < 1$ as desired. The cases $C_2 < 1$ and $C_2 > 1$ are obvious. 
\end{proof}

 \begin{remark}
Alternatively, one can  choose  $\Omega^{\gamma}_t = \calC^{0,\gamma}([0,t],\R)$, the space of $\gamma-$H\"older paths. When $\gamma>1/2$, the signature elements are well-defined using Young integration \cite{Young}. More generally, one can consider paths of finite $p-$variation with $p\in (1,2)$ and use estimates of the iterated iterated as in \citet{Lyons}. In this work, we nevertheless 
stick to Lipschitz paths for ease of presentation. 
\end{remark}

\subsubsection{Connections with Static Expansions}\label{sec:connectStatic}
{\textbf{Intrinsic Value Expansion} (\cref{sec:IVE}).} \label{sec:connectionIVE}
Let $g$ be a  smooth $T-$functional and consider its  IVE,  
\begin{equation}\label{eq:IVEComp}
    g(X_T) = g(X_{0,T}) + \sum_{k= 1}^{\infty} I_k\varphi_k (X_T),\qquad \ X_T \in \Omega^{\Pi}_T,
\end{equation}
where we define $ \varphi_k(t_1,\ldots,t_k) =  D_{t_1 \cdots t_k}g(X_{0,T})$ and    $I_k\varphi_k (X_T) =  \int_{\triangle_{k,T}} \varphi_k \, \circ dx^{\otimes k }$. 
 On the other hand, assuming that the associated  intrinsic functional $\iota_{0}g(X_t) = g(X_{t,T-t})$ is real analytic, one can compute its functional Maclaurin series. In particular, for $X_T \in \Omega^\Pi_T$, 
  \begin{align}\label{eq:FTEComp}
      g( X_T)  = \iota_{0}g( X_T) = \underbrace{\iota_{0}g(\boldsymbol{0})}_{= \  g(X_{0,T})} +  \sum_{\alpha \in \A} \Delta_{\alpha} \iota_{0}g (\boldsymbol{0}) \calS_{\alpha}(X_T). 
  \end{align}
  The aim of this section is to show how $\eqref{eq:IVEComp}$ relates to $\eqref{eq:FTEComp}$. One way is to derive the intrinsic value expansion of each signature $T-$functional $\calS_{\alpha}|_{\Lambda_T}$. In other words, we seek $(\varphi_k^\alpha)_{k\ge 1}$ such that 
  \begin{equation*}
      \calS_{\alpha}(X_T) = \calS_{\alpha}(X_{0,T}) + \sum_{k\ge 1} I_k \varphi_k^{\alpha}(X_T). 
  \end{equation*}
  First, we conclude from a homogeneity argument that $\varphi^{\alpha}_k=0$ unless $k= |\alpha|_1$. Indeed, defining the scaled path $\lambda X_T= (\lambda x_t)_{t\in [0,T]}$, $\lambda >0$, we must have that 
 $$  \lambda^{|\alpha|_1}\calS_{\alpha}(X_T)= \calS_{\alpha}(\lambda X_T) = \calS_{\alpha}(X_{0,T}) + \sum_{k\ge 1} I_k \varphi_k^{\alpha}(X_T)\lambda^k, \qquad  \forall \ \lambda >0.  $$
The proposition gives an explicit formula for  $(\varphi^{\alpha}_k)$; see the first example in \cite{Fliess81} for a similar expression in the case of absolutely continuous paths.  
  \begin{proposition}\label{prop:IVEPara}
  Let $\alpha \in \A$ and $k= |\alpha|_1$. Moreover, express  $\alpha$ as 
  \begin{equation*}
      \alpha = \boldsymbol{0}_{\gamma_0}1 \boldsymbol{0}_{\gamma_1}1 \cdots   1 \boldsymbol{0}_{\gamma_k},
  \end{equation*}
  where  we allow $\gamma_l=0$ ($\Longrightarrow \ \boldsymbol{0}_{\gamma_l} =\varnothing $) for $l=0,\ldots,k$.  
  Then  $\calS_{\alpha}(X_T) = I_k \varphi_k^{\alpha}(X_T)$ with the kernel  
 \begin{equation}\label{eq:kernelSig}
    \varphi_k^{\alpha}(t_1,\ldots,t_k) = \prod_{l=0}^k \frac{(t_{l+1}-t_{l})^{\gamma_l}}{\gamma_l!}, \quad (t_{0}=0, \ t_{k+1}=T). 
 \end{equation}
 
 \end{proposition}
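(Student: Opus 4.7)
The plan is to prove the formula by induction on the word length $m=|\alpha|$, peeling off the last letter at each step. The base case is $\alpha=\varnothing$ (so $k=0$, $\gamma_0=0$): then $\calS_\varnothing\equiv 1$ and the prescribed kernel reduces to the empty product $1$, matching $I_0\varphi_0^{\varnothing}=1$. Writing $\alpha=(\alpha-1)\cdot\alpha_m$, the recursion $\calS_{\alpha}(X_T)=\int_{0}^{T}\calS_{\alpha-1}(X_s)\circ dx^{\alpha_m}_s$ (which follows directly from the definition $\eqref{eq:sigDefStrat}$) splits the argument into two cases, distinguished by whether $\alpha_m=1$ (equivalently, $\gamma_k=0$) or $\alpha_m=0$ (equivalently, $\gamma_k\ge 1$).

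In the first case ($\alpha_m=1$, $\gamma_k=0$), the word $\alpha-1$ has exactly $k-1$ ones and block sizes $\gamma_0,\ldots,\gamma_{k-1}$, so by the induction hypothesis applied on the horizon $[0,s]$,
\[
\calS_{\alpha-1}(X_s)=\int_{\triangle_{k-1,s}}\prod_{l=0}^{k-2}\frac{(t_{l+1}-t_l)^{\gamma_l}}{\gamma_l!}\cdot\frac{(s-t_{k-1})^{\gamma_{k-1}}}{\gamma_{k-1}!}\circ dx_{t_1}\cdots\circ dx_{t_{k-1}}.
\]
Renaming $s=t_k$ and outer-integrating $\circ dx_{t_k}$ over $[0,T]$ produces the claimed iterated Stratonovich integral over $\triangle_{k,T}$, and inserting the trivial factor $(t_{k+1}-t_k)^{\gamma_k}/\gamma_k!=1$ completes this case. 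In the second case ($\alpha_m=0$, $\gamma_k\ge 1$), the word $\alpha-1$ has the same $k$ ones and block sizes $\gamma_0,\ldots,\gamma_{k-1},\gamma_k-1$. By induction,
\[
\calS_{\alpha-1}(X_s)=\int_{\triangle_{k,s}}\prod_{l=0}^{k-1}\frac{(t_{l+1}-t_l)^{\gamma_l}}{\gamma_l!}\cdot\frac{(s-t_k)^{\gamma_k-1}}{(\gamma_k-1)!}\circ dx_{t_1}\cdots\circ dx_{t_k},
\]
and I would swap the outer Lebesgue integral $\int_0^T\cdot\,ds$ with the inner iterated Stratonovich integral, after which the elementary computation $\int_{t_k}^{T}(s-t_k)^{\gamma_k-1}/(\gamma_k-1)!\,ds=(T-t_k)^{\gamma_k}/\gamma_k!$ reconstitutes the missing factor in $\varphi_k^\alpha$.

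The main obstacle is justifying this Fubini-type interchange of a Riemann time integral with a pathwise Stratonovich iterated integral. I would resolve it using the pathwise construction of the Stratonovich integral established in \cref{sec:pathStrat}: since $X\in\Omega^{\Pi}_T$, the integrand $\calS_{\alpha-1}(X_{\cdot})$ is $\Lambda$-continuous in its upper limit (as $\calS_{\alpha-1}\in\C^{\infty,\infty}(\bar\Omega^\Pi)$ by \cref{prop:Sig}), the spatial anti-derivative $\eqref{eq:anti-der}$ is jointly continuous in $(s,\varepsilon)$, and formula $\eqref{eq:pathStrat}$ rewrites the inner Stratonovich integral as a genuine Riemann integral against $d\varepsilon$ plus a time integral; ordinary Fubini--Tonelli then applies. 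Alternatively, one can approximate the outer $\int_0^T\cdot\,ds$ by a Riemann sum, for which the interchange is trivial, and pass to the limit using the continuity of $s\mapsto\calS_{\alpha-1}(X_s)$. The statement and proof technique mirror the absolutely continuous special case treated by \citet{Fliess81}, but our pathwise framework makes the interchange uniform in $\omega$ rather than merely almost sure.
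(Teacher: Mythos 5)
Your proof is correct and follows essentially the same route as the paper's: an induction combined with a Fubini-type rearrangement of the order of integration that converts the trailing time integrations into the polynomial factors $(t_{l+1}-t_l)^{\gamma_l}/\gamma_l!$. The only difference is organizational — the paper peels off the entire last block $1\boldsymbol{0}_{\gamma_k}$ in one step and inducts on $k=|\alpha|_1$, whereas you peel one letter at a time and induct on $|\alpha|$ with a two-case analysis — and your extra care in justifying the interchange of the outer Lebesgue integral with the inner iterated Stratonovich integrals (via \eqref{eq:pathStrat} or Riemann-sum approximation) goes beyond what the paper provides, which simply invokes ``Fubini's theorem iteratively.''
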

\begin{proof}
The case $k=0$ is trivial. For $k\ge 1$, the result can be established by computing the $k-$th order Malliavin derivatives of $\calS_{\alpha}$. However, this would be tedious.  Alternatively, we can  use Fubini's theorem iteratively to rearrange the order of integration. Writing $\alpha = \beta 1 \boldsymbol{0}_{\gamma_k}$ with $\beta = \boldsymbol{0}_{\gamma_0}1 \boldsymbol{0}_{\gamma_1}1 \cdots   1 \boldsymbol{0}_{\gamma_{k-1}}$, observe that 
\begin{align*}
    \calS_{\alpha}(X_T) = \int_{\triangle_{\gamma_k+1,T}} \calS_{\beta}(X_{t_0})\circ dx_{t_0}  dt^{\otimes \gamma_k}
    = \int_{0}^T \calS_{\beta}(X_{t_0})\frac{(T-t_0)^{\gamma_k}}{\gamma_k!} \circ dx_{t_0}.  
\end{align*}
The result thus follows from a  simple induction on $k$. 
\end{proof}

It is worth noting that the kernel of signature functionals does not depend on the precise location of the time points. Indeed, we see from $\eqref{eq:kernelSig}$ that $\varphi_k^{\alpha}$ only entails time increments. 
Finally, we  connect the kernels $\varphi_k$ of $g$ with the signature kernels $\varphi_k^{\alpha}$.  

\begin{corollary}\label{cor:IVEPara}
Let $g$ be a smooth $T-$function with analytic intrinsic functional $\iota_{0}g$. Then the kernel of the intrinsic value expansion can be expressed as, 
\begin{equation*}
    D_{t_1 \cdots t_k}g(X_{0,T}) = \sum_{|\alpha|_1 = k } \Delta_{\alpha}\iota_{0}g(X_0)\prod_{l=0}^k \frac{(t_{l+1}-t_{l})^{\gamma_l}}{\gamma_l!}.
\end{equation*}
\end{corollary}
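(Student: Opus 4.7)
The plan is to equate two different series representations of $g(X_T)$ and match coefficients. On the one hand, the intrinsic value expansion \eqref{eq:IVEComp} yields
\[
g(X_T) = g(X_{0,T}) + \sum_{k\ge 1} I_k \varphi_k(X_T), \qquad \varphi_k(t_1,\dots,t_k) = D_{t_1\cdots t_k}g(X_{0,T}).
\]
On the other hand, using the assumed analyticity of $\iota_0 g$, the functional Maclaurin series \eqref{eq:FTEComp} gives
\[
g(X_T) = g(X_{0,T}) + \sum_{\alpha \neq \varnothing} \Delta_\alpha \iota_0 g(X_0)\,\calS_\alpha(X_T).
\]
The idea is to convert the right-hand side of the second identity into an IVE by invoking \cref{prop:IVEPara}, which represents each $\calS_\alpha(X_T)$ itself as $I_{|\alpha|_1}\varphi_{|\alpha|_1}^\alpha(X_T)$ with explicit signature kernels $\varphi_k^\alpha$ given by \eqref{eq:kernelSig}.

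Substituting this into the Maclaurin series and grouping words by their number of ones, I would write
\[
g(X_T) - g(X_{0,T}) = \sum_{k \ge 1} I_k\!\left( \sum_{|\alpha|_1 = k} \Delta_\alpha \iota_0 g(X_0)\,\varphi_k^\alpha \right)(X_T).
\]
Here the inner finite-or-countable sum over words of fixed $|\alpha|_1 = k$ ranges over all choices of $(\gamma_0,\dots,\gamma_k)\in \mathbb{N}^{k+1}$, which is where the matching to the claimed formula will come from. A comparison with the first IVE then yields, by uniqueness of the IVE kernels,
\[
D_{t_1\cdots t_k}g(X_{0,T}) = \sum_{|\alpha|_1 = k} \Delta_\alpha \iota_0 g(X_0) \prod_{l=0}^{k} \frac{(t_{l+1}-t_l)^{\gamma_l}}{\gamma_l!},
\]
which is the statement of the corollary.

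Two technical points need to be addressed. First, the rearrangement of the double sum and the passage of $I_k$ inside the sum over $|\alpha|_1 = k$ must be justified; this follows from absolute convergence on a suitable path (e.g., in the radius of convergence of the functional Maclaurin series) together with the homogeneity observation made just before \cref{prop:IVEPara} which already confines the contributions of a given $\alpha$ to the $k=|\alpha|_1$ level. Second, uniqueness of the kernels $\varphi_k$ in the IVE is needed so that matching the two representations forces kernel-wise equality; since the kernels are by construction the iterated Malliavin derivatives of $g$ at the flat path (cf.\ the derivation preceding \eqref{eq:IVS}), and the IVE holds for every admissible $X_T$, this uniqueness is inherited from the uniqueness of iterated Stratonovich integral representations against a rich enough class of test paths. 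I expect the justification of this last point, i.e.\ showing that equality of $\sum_k I_k \psi_k$ as functionals of $X_T$ forces $\psi_k \equiv \varphi_k$, to be the only genuinely delicate step; everything else is bookkeeping from \cref{prop:IVEPara}.
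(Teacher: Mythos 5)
Your proposal is correct and follows essentially the same route as the paper: equate the IVE of $g$ with the functional Maclaurin series of $\iota_0 g$, rewrite each $\calS_\alpha(X_T)$ via \cref{prop:IVEPara} as $I_{|\alpha|_1}\varphi^\alpha_{|\alpha|_1}(X_T)$, group words by $|\alpha|_1=k$, and identify kernels using linearity of $I_k$ and the fact that the identity holds for all admissible $X_T$. The two technical points you flag (interchange of sums and uniqueness of the IVE kernels) are exactly the steps the paper also leaves implicit, so your write-up is, if anything, slightly more careful than the original.
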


\begin{proof} First, we recall from $\eqref{eq:IVEComp}$ and $\eqref{eq:FTEComp}$ that 
\begin{equation}\label{eq:IVEFTE}
    \sum_{k\ge 1}  I_k  \varphi_k (X_T) = g(X_T) - g(X_{0,T})  =    \sum_{\alpha \in \A \setminus \{\varnothing\}} \Delta_{\alpha} \iota_{0}g (\boldsymbol{0}) \calS_{\alpha}(X_T) \\
    =  
     \sum_{k\ge 1} \sum_{|\alpha|_1 = k } \Delta_{\alpha} \iota_{0}g (\boldsymbol{0}) I_k \varphi_k^\alpha (X_T).
\end{equation}
Using the linearity of $I_k$ and the fact that $\eqref{eq:IVEFTE}$ must hold for all $X_T\in \Lambda_T$, then  $\varphi_k=\sum_{|\alpha|_1 = k } \Delta_{\alpha} \iota_{0}g (\boldsymbol{0})  \varphi_k^\alpha$.  
We finally invoke the expression of $\varphi_k^\alpha$ given in  \cref{prop:IVEPara} to conclude. 
\end{proof} 
{\textbf{Wiener Chaos  }(\cref{sec:chaos}).}  
We show how the chaos expansion can be transformed into the FTE  for  path-independent $T-$functionals. Let    $\Q$ be the Wiener measure as usual and $g(X_T)=h(x_T)$ for some real analytic function $h:\R\to \R$. 
The following result provides a characterization of path independence for $T-$functionals. In passing, it brings us one step closer to the functional Taylor expansion. We refer the reader to \cref{sec:MROperator} for the definition of the operator $\calM$ and its domain $\calD_{\! \calM}$ (given in \eqref{eq:domain}).  
\begin{proposition}\label{prop:Martingale}
Let $g$ be a $T-$functional such that $\calM_T g \in \calD_{\! \calM}$  
and consider its price functional $f = \iota_{\Q}g$ introduced in \cref{ex:pricefunctional}.  Then the following are equivalent: 
\begin{itemize}
\item[(i)] $g$ is path-independent. 
\item[(ii)]  $\Delta_{\mathds{1}_k} f(X)$ is an $\Q-$martingale for all $k\ge 0$. 
\item[(iii)]  The Wiener chaos expansion writes $g(X_T) = \sum_{k=0}^{\infty}\Delta_{\mathds{1}_k}f(\boldsymbol{0}) J_k(X_T)$.
\end{itemize}
\end{proposition}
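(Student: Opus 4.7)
The plan is to establish the cycle $(i) \Rightarrow (ii) \Rightarrow (iii) \Rightarrow (i)$. Throughout, I would rely on two facts already isolated in the excerpt: the intertwining $\Delta_x \iota_{\Q} = \iota_{\Q} D_t$ on $\D^{1,2}(\Lambda_T)$, and the kernel formula $\phi_k(t_1,\ldots,t_k) = \E^{\Q}[\calM_{t_2\cdots t_k T} g(Y_{t_1})]$ from \cref{prop:MRT_Chaos}.

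For $(i) \Rightarrow (ii)$, I would write $g(X_T) = h(x_T)$ for a suitably smooth $h$. Then $D_t g(Y_T) = h'(y_T)$ is independent of $t$, so by the intertwining,
\[ \Delta_x f(X_t) = \iota_{\Q}(D_t g)(X_t) = \E^{\Q}[h'(Y_T) \,|\, X_t], \]
which is a $\Q$-martingale as the conditional expectation of a fixed integrable random variable. But then $\Delta_x f$ is itself the price functional $\iota_{\Q}(\tilde{g})$ of the path-independent $T$-functional $\tilde{g}(Y_T) = h'(y_T)$, so the same argument applies with $\tilde g$ in place of $g$. Iterating gives $\Delta_{\mathds{1}_k}f(X_t) = \E^{\Q}[h^{(k)}(Y_T) \,|\, X_t]$, a martingale for every $k \ge 0$.

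For $(ii) \Rightarrow (iii)$, I would proceed by induction on the number of nested operators in $\calM_{t_2\cdots t_k T}$. The base case is $\calM_T g = \Delta_x f$. For the inductive step, using $\calM_t \varphi(X_s) = \Delta_x \E^{\Q}[\varphi(Y_t)\,|\,X_s]$ and the martingale property of $\Delta_{\mathds{1}_{k-1}}f$, the inner conditional expectation $\E^{\Q}[\Delta_{\mathds{1}_{k-1}}f(Y_{t_k})\,|\,X_s]$ collapses to $\Delta_{\mathds{1}_{k-1}}f(X_s)$ for $s \le t_k$, producing one additional spatial derivative. After $k-1$ iterations one obtains $\calM_{t_2\cdots t_k T} g = \Delta_{\mathds{1}_k}f$. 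Taking expectation of this evaluated at $Y_{t_1}$, and invoking the martingale property of $\Delta_{\mathds{1}_k}f$ once more, forces $\phi_k(t_1,\ldots,t_k) = \Delta_{\mathds{1}_k}f(\boldsymbol{0})$, a constant in the time variables. Substituting back into the chaos expansion and pulling the constants out of the iterated integrals $J_k\phi_k(X_T) = \Delta_{\mathds{1}_k}f(\boldsymbol{0}) J_k(X_T)$ yields $(iii)$.

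For $(iii) \Rightarrow (i)$, I would invoke the classical identification of the iterated It\^o integral with the time–space Hermite polynomial: $k!\,J_k(X_T) = H_k(x_T,T)$, which depends on the path $X_T$ only through its terminal value $x_T$ (and the deterministic $T$). Hence each term on the right of $(iii)$ is a function of $x_T$ alone, and so is their sum, proving that $g$ is path-independent.

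The main obstacle is the induction inside $(ii) \Rightarrow (iii)$: one must make sure the successive identifications $\calM_{t_l\cdots t_k T} g = \Delta_{\mathds{1}_{k-l+1}}f$ remain valid at every stage, which requires that each $\Delta_{\mathds{1}_j}f$ stays in a class where $\calM$ can legitimately be applied (i.e., inside $\calD_{\!\calM}$) and that the martingale identity $\E^{\Q}[\Delta_{\mathds{1}_j}f(Y_{t})\,|\,X_s] = \Delta_{\mathds{1}_j}f(X_s)$ holds with the version of $\Delta_{\mathds{1}_j}f$ actually produced by the spatial differentiation. The hypothesis $\calM_T g \in \calD_{\!\calM}$ is exactly what ensures we may iterate freely, so the argument reduces to careful bookkeeping rather than any deeper analytic difficulty.
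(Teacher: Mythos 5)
Your proposal is correct and follows essentially the same route as the paper: the same cycle $(i)\Rightarrow(ii)\Rightarrow(iii)\Rightarrow(i)$, with $(ii)\Rightarrow(iii)$ proved by the identical induction showing $\calM_{t_2\cdots t_k T}\,g = \Delta_{\mathds{1}_k}f$ and then using martingality to make the kernels constant, and $(iii)\Rightarrow(i)$ via the Hermite-polynomial form of $J_k$. The only cosmetic difference is in $(i)\Rightarrow(ii)$, where you invoke the intertwining $\Delta_x\iota_{\Q} = \iota_{\Q}D_t$ while the paper differentiates the Gaussian convolution directly; both yield $\Delta_{\mathds{1}_k}f(X_t)=\E^{\Q}[h^{(k)}(y_T)\,|\,X_t]$.
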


\begin{proof}
See \cref{app:Martingale}.
\end{proof}
\begin{remark}
  
  Although $f = \iota_{\Q} g$ in  \cref{prop:Martingale} is termed price functional, the spatial derivatives $(\Delta_{\mathds{1}_k} f(X))_{k\ge 0}$ may not
  correspond to Greeks in derivatives pricing. 
Indeed, 
$X$ has to be interpreted as the source of risk, which usually differs from the underlying. 
For instance, a call option in the Black-Scholes with zero dividend and interest rate would read $g(X_T) = (\calI(X_T) -K)^+$, where $X$ is Brownian motion and $\calI$  the  It\^o map $\calI(X_T) = x_0 e^{\sigma x_T - \frac{1}{2}\sigma^2T}$.  
Exceptions include the Bachelier model, where the (scaled) coordinate process $X$ is the asset itself. In this case,  $(\Delta_{\mathds{1}_k} f(X))_{k\ge 0}$   indeed correspond to  the option Delta, Gamma, and so on. 
\end{remark}
Next, we express $k-$fold It\^o iterated integrals as  linear combinations of finitely many signature functionals. 
First, we recall that
 \begin{equation}\label{eq:Hermite}
    J_{k}(X_t) = h_k(t,x_t), \quad  h_k(t,x) =  \frac{t^{k/2}}{k!}H_k\left(\frac{x}{\sqrt{t}}\right),  
 \end{equation}
where $H_k$ is the $k-$th probabilist's Hermite polynomial; see  \cite{DiNunno}. 
Although $X_t\mapsto J_k(X_t)$ is indistinguishable from the time-space Hermite polynomial $h_k(t,x_t)$, their functional derivatives may differ as nicely 
explained in \citet{Oberhauser}. 
However, using the functional It\^o formula and the uniqueness of the  Doob-Meyer decomposition, the operator $\calL := \Delta_t +  \frac{1}{2}\Delta_{xx},$ must act in the same way,  regardless of the representation chosen; see again 
\cite{Oberhauser}. 
For obvious reasons, $\calL$ is often called the   \textit{causal}  \textit{heat operator}. 
The next simple lemma relates the It\^o iterated integrals to the signature. 

\begin{lemma}\label{lem:hermite}
The It\^o iterated integrals are smooth functionals and admit the Maclaurin expansions 
  \begin{align*}
      J_{k}(X_t)
      = \sum_{\alpha \in \A_k} (-2)^{-|\alpha|_0}  \calS_{\alpha}(X_t),  \quad k \ge 0,
 \end{align*}
with  the disjoint subsets 
 $\A_k = \left\{\alpha \in \A \ : \ \lVert \alpha\rVert = k \right\}$, $\ \lVert \alpha\rVert := 2|\alpha|_0 + |\alpha|_1$. 
\end{lemma}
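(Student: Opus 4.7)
My plan is to prove the identity by induction on $k$, exploiting the It\^o recursion $J_k(X_t) = \int_0^t J_{k-1}(X_s)\, dx_s$ combined with the pathwise It\^o-to-Stratonovich conversion. Subtracting \eqref{eq:pathStrat} from \eqref{eq:pathwiseIto} and using $\Delta_x \Phi = \varphi$ yields, for any sufficiently regular integrand $\varphi$,
\begin{equation*}
    \int_0^t \varphi(X_s)\, dx_s = \int_0^t \varphi(X_s)\circ dx_s - \frac{1}{2}\int_0^t \Delta_x \varphi(X_s)\, ds.
\end{equation*}
The base cases $k=0,1$ are immediate ($J_0 \equiv 1 = \calS_\varnothing$ and $J_1(X_t) = x_t = \calS_1(X_t)$), and the smoothness of $J_k$ as a functional on $\bar{\Omega}^\Pi$ follows inductively from the recursion, or directly from \eqref{eq:Hermite}.

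\textbf{Inductive step.} Assume $J_{k-1} = \sum_{\alpha \in \A_{k-1}} (-2)^{-|\alpha|_0} \calS_\alpha$. Substituting into the recursion and interchanging the finite sum with the integrals, I would invoke three facts from Proposition \ref{prop:Sig}: first, $\int_0^t \calS_\alpha(X_s) \circ dx_s = \calS_{\alpha 1}(X_t)$; second, $\int_0^t \calS_\beta(X_s)\, ds = \calS_{\beta 0}(X_t)$; and third, $\Delta_x \calS_\alpha$ equals $\calS_{\alpha - 1}$ when $\alpha$ ends in $1$ and vanishes otherwise. The Stratonovich part then produces $\sum_{\alpha \in \A_{k-1}} (-2)^{-|\alpha|_0} \calS_{\alpha 1}(X_t)$, while the correction part produces
\begin{equation*}
    -\frac{1}{2} \sum_{\beta \,:\, \beta 1 \in \A_{k-1}} (-2)^{-|\beta 1|_0} \calS_{\beta 0}(X_t) \;=\; \sum_{\beta \in \A_{k-2}} (-2)^{-|\beta 0|_0} \calS_{\beta 0}(X_t),
\end{equation*}
using $|\beta 1|_0 = |\beta|_0$, the identity $-\tfrac{1}{2}(-2)^{-|\beta|_0} = (-2)^{-|\beta|_0 - 1} = (-2)^{-|\beta 0|_0}$, and the equality $\{\beta : \beta 1 \in \A_{k-1}\} = \{\beta : \|\beta\| = k-2\} = \A_{k-2}$.

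\textbf{Combinatorial matching.} Appending a $1$ maps $\A_{k-1}$ bijectively onto the words in $\A_k$ ending in $1$ (since $\|\alpha 1\| = \|\alpha\| + 1$), while appending a $0$ maps $\A_{k-2}$ bijectively onto those ending in $0$ (since $\|\beta 0\| = \|\beta\| + 2$). Together these partition $\A_k$, so the two sums assemble into $\sum_{\gamma \in \A_k} (-2)^{-|\gamma|_0} \calS_\gamma(X_t)$, closing the induction.

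\textbf{Main obstacle.} The argument is essentially symbolic once the pathwise It\^o-to-Stratonovich conversion is available. The only nontrivial technical point is to justify that $J_{k-1}$, viewed as a functional on $\bar{\Omega}^\Pi$, has the regularity needed for this conversion and for the term-by-term interchange of the finite sum with the two integrals; this reduces, via the inductive hypothesis, to each $\calS_\alpha$ belonging to $\C^{\infty,\infty}(\bar{\Omega}^\Pi)$, which is exactly the content of Proposition \ref{prop:Sig}.
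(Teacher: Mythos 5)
Your proof is correct, but it takes a genuinely different route from the paper. The paper's proof is a direct computation: it starts from the closed form $J_k(X_t)=h_k(t,x_t)$ in \eqref{eq:Hermite}, expands the Hermite polynomial into monomials $c_{k,l}\,x^{k-2l}$ with $c_{k,l}=\frac{k!}{l!(k-2l)!}(-2)^{-l}$, and then converts each monomial $\frac{t^{l_0}}{l_0!}\frac{x_t^{l_1}}{l_1!}$ into $\sum_{|\alpha|_0=l_0,\,|\alpha|_1=l_1}\calS_\alpha(X_t)$ via the shuffle product \cite{Ree}. You instead induct on $k$ through the recursion $J_k(X_t)=\int_0^t J_{k-1}(X_s)\,dx_s$ and the pathwise It\^o--Stratonovich correction obtained by subtracting \eqref{eq:pathStrat} from \eqref{eq:pathwiseIto}; the bookkeeping of words ending in $1$ (from the Stratonovich part, indexed by $\A_{k-1}$) versus words ending in $0$ (from the $-\tfrac12\Delta_x$ correction, indexed by $\A_{k-2}$) is carried out correctly and does assemble into $\A_k$ with the right coefficients. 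What your route buys is structural insight: it makes visible that the factor $(-2)^{-|\alpha|_0}$ is generated letter by letter by the It\^o correction term, with no appeal to Hermite polynomials or the shuffle identity. What it costs is a heavier reliance on the pathwise It\^o integral \eqref{eq:pathwiseIto}, which the paper only states in a remark, and on identifying the recursively defined $J_k$ with the functional whose derivatives one computes --- a point the paper treats with some care, since $J_k$ and $h_k(t,x_t)$ are indistinguishable under $\Q$ yet may have different functional derivatives as explained in \cite{Oberhauser}; your induction sidesteps this cleanly only because the end result is a finite linear combination of signature functionals, which is manifestly in $\C^{\infty,\infty}(\bar{\Omega}^{\Pi})$ by \cref{prop:Sig}. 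Both arguments are valid; the paper's is shorter, yours is more self-propelled.
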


\begin{proof} See \cref{app:hermite}.
\end{proof}
We observe from \cref{lem:hermite} that  It\^o iterated integrals induce a partition of the words in $\A$ 
based on the "weighted length" $\lVert \cdot \rVert$. Note that the  latter arises naturally when establishing convergence results about iterated It\^o or Stratonovich integrals  \cite{KP,BenArous,LyonsVictoir}. 
Finally, we bridge the gap between the FTE and Wiener-It\^o chaos expansion in this simple case. 
\begin{proposition}\label{prop:connectionIndep}
Let $g$ be an analytic path-independent $T-$functional with price functional $f = \iota_{\Q} g$.  Under the assumptions of  \Cref{prop:Martingale}, then $\Delta_{\mathds{1}_k} f(\boldsymbol{0}) (-2)^{-|\alpha|_0}=\Delta_{\alpha}f(\boldsymbol{0}) $. Consequently, 
   \begin{equation*}
       g(X_T) = \sum_{k=0}^{\infty} J_k \phi_k(X_T)  = \sum_{k=0}^{\infty} \sum_{\alpha \in \A_k} \Delta_{\mathds{1}_k} f(\boldsymbol{0}) (-2)^{-|\alpha|_0}   \calS_{\alpha}(X_T) = \sum_{\alpha }
   \Delta_{\alpha}f(\boldsymbol{0})    \calS_{\alpha}(X_T).
   \end{equation*}
\end{proposition}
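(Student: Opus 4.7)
The statement has two parts: the identity $\Delta_{\alpha}f(\boldsymbol{0}) = (-2)^{-|\alpha|_0}\Delta_{\mathds{1}_k}f(\boldsymbol{0})$ for $\alpha\in\A_k$, and the three-term chain of equalities. My strategy is to establish the derivative identity first, then piece together the three expressions using earlier results.

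\textbf{Step 1: Reduce $f$ to a two-variable function.} Since $g$ is path-independent, write $g(X_T)=h(x_T)$. Then the price functional collapses to $f(X_t) = \E^{\Q}[h(Y_T)\,|\,X_t] = F(t,x_t)$, where $F(t,x) = \E[h(x + (Y_T-Y_t))]$. In particular, the spatial and temporal functional derivatives reduce to ordinary partial derivatives, $\Delta_x f(X_t)=\partial_x F(t,x_t)$ and $\Delta_t f(X_t)=\partial_t F(t,x_t)$, which commute. Consequently, for any word $\alpha$ of length $k'=|\alpha|_0+|\alpha|_1$,
\[
\Delta_\alpha f(X_t) \;=\; \partial_t^{|\alpha|_0}\partial_x^{|\alpha|_1}F(t,x_t).
\]

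\textbf{Step 2: Apply the backward heat equation.} By Proposition~\ref{prop:Martingale}(ii), $f(X)$ is a $\Q$-martingale. Applying the functional It\^o formula (recall the causal heat operator $\calL=\Delta_t+\tfrac12\Delta_{xx}$ introduced just before Lemma~\ref{lem:hermite}) and invoking uniqueness of the Doob--Meyer decomposition forces $\calL f\equiv 0$, i.e.\ $\partial_t F = -\tfrac12\partial_{xx}F$. Iterating this identity converts each time derivative into two spatial ones: for $\alpha\in\A_k$ (so $\lVert\alpha\rVert=2|\alpha|_0+|\alpha|_1=k$),
\[
\Delta_\alpha f(\boldsymbol{0}) \;=\; \partial_t^{|\alpha|_0}\partial_x^{|\alpha|_1}F(0,0)\;=\; \Bigl(-\tfrac12\Bigr)^{|\alpha|_0}\partial_x^{2|\alpha|_0+|\alpha|_1}F(0,0)\;=\;(-2)^{-|\alpha|_0}\,\partial_x^k F(0,0).
\]
Since $\mathds{1}_k\in\A_k$ with $|\mathds{1}_k|_0=0$, we have $\Delta_{\mathds{1}_k}f(\boldsymbol{0})=\partial_x^k F(0,0)$, and the claimed derivative identity follows.

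\textbf{Step 3: Assemble the chain of equalities.} The first equality, $g(X_T)=\sum_{k\ge 0}J_k\phi_k(X_T)$, is the Wiener chaos expansion of $g$, and Proposition~\ref{prop:Martingale}(iii) identifies the kernels as $\phi_k \equiv \Delta_{\mathds{1}_k}f(\boldsymbol{0})$, so $J_k\phi_k(X_T)=\Delta_{\mathds{1}_k}f(\boldsymbol{0})\,J_k(X_T)$. The middle equality is then obtained by substituting the expansion $J_k(X_T)=\sum_{\alpha\in\A_k}(-2)^{-|\alpha|_0}\calS_\alpha(X_T)$ from Lemma~\ref{lem:hermite}. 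Finally, the third equality follows from the derivative identity proved in Step~2, after collecting terms indexed by $\alpha$ across all $k$ (the partition $\A=\bigsqcup_{k\ge 0}\A_k$ being exactly the one induced by $\lVert\cdot\rVert$).

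\textbf{Main obstacle.} The delicate point is Step~2: rigorously deducing $\calL f \equiv 0$ from the martingale property. This requires the functional It\^o formula to apply to $f$, which in turn hinges on the regularity assumption $\calM_T g\in\calD_{\!\calM}$ from Proposition~\ref{prop:Martingale} ensuring that the relevant functional derivatives of $f$ exist and are $\Lambda$-continuous, so that the finite-variation part of $f(X)$ can be identified and forced to vanish. Everything else is bookkeeping.
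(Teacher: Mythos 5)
Your proposal is correct and follows essentially the same route as the paper: invoke \cref{prop:Martingale} for the chaos expansion with kernels $\Delta_{\mathds{1}_k}f(\boldsymbol{0})$, use the martingale property plus the functional It\^o formula to get the heat equation $\calL f = 0$ and thereby trade each temporal derivative for $-\tfrac12$ of a second spatial derivative, and finish with \cref{lem:hermite}. Your Step~1 (reducing $f$ to a two-variable function $F(t,x)$ so that the derivatives commute) is a worthwhile explicit justification of a point the paper's proof leaves implicit, given that $\Delta_t$ and $\Delta_x$ do not commute in general.
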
 

\begin{proof} 
As $g$ is path-independent, 
\cref{prop:Martingale} gives
$
g(X_T) = 
\sum_{k=0}^{\infty}\Delta_{\mathds{1}_k}f(\boldsymbol{0}) J_k(X_T)
$ and  $\Delta_{\mathds{1}_k} f(X)$ is an $\Q-$ martingale for all $k\ge 0$. As $X$ is Brownian motion, the functional It\^o formula \cite{Dupire} implies that each $\Delta_{\mathds{1}_k} f$ solves  the path-dependent PDE  
$\calL \varphi =  \left(\Delta_t + \frac{1}{2}\Delta_{xx} \right)\varphi=0.$ 
Put differently, any second spatial derivative can be converted into a temporal derivative by multiplying by the factor $-2$. Thus,  $\Delta_{\mathds{1}_k}f =  (-2)^{|\alpha|_0} \ \Delta_{\alpha} f$ for every word $\alpha \in \A_k$. 
The result follows from \cref{lem:hermite}. 
\end{proof}
\begin{remark}
In the general case where the kernels are time-dependent,  
it is delicate to compare the Wiener chaos expansion with the FTE. Indeed, we would need to expand the terms $J_k \phi_k$ which does not have the required regularity if expressed pathwise;  
see the discussion in \cite{Oberhauser}. Inversely, one can compute the Wiener chaos expansion of each signature functional, namely $\calS_{\alpha}(X_T) = \sum_{k=0}^{\infty} J_k \phi_k(X_T)$ for some  kernels  $(\phi_k)$ to be determined. In light of Stroock's formula $\eqref{eq:stroock}$, one has to compute the expected higher-order Malliavin derivatives of the signature to obtain $(\phi_k)$. 
This was conducted recently by \citet{Cass} for a wide class of Gaussian processes encompassing fractional Brownian motion 
with Hurst parameter $H\in(1/4,1)$. 
\end{remark}

\subsection{Other Dynamic Expansions}
\label{sec:otherExp}

\subsubsection{Hilbert Projections} \label{sec:Hilbert}

Another procedure to expand functionals 
consists of projecting 
its image seen as a collection of transformed paths. More precisely, let $f:\Lambda \to \R$, $X \in \Lambda_T$ and consider
$Z=f(X)$ given by $z_t = f(X_t)$, $t \in [0,T]$. If $Z\in \calH$ where $\calH$ is a  separable Hilbert space, then we can write $Z = \sum_{k = 1}^{\infty} (Z,F_k)_{\calH} F_k$ for an orthonormal basis (ONB) $\frakF = (F_k)_{k=1}^{\infty}$ of $\calH$. 
Put differently, 
\begin{equation}
    \label{eq:Hilbert}
    f(X_t) = \sum_{k = 1}^{\infty} \  (f(X),F_k)_{\calH} \ F_k(t). 
\end{equation}
Truncating $\eqref{eq:Hilbert}$ at some level  $K\ge 1$ gives the projection
$f^{K,\frakF}(X) = \sum_{k \le K} (f(X),F_k)_{\calH}  \ F_k.$ An illustration is given \cref{fig:KL}. 
One may wonder which ONB yields the best approximation$-$in some 
suitable sense$-$for a fixed truncation level. 
For instance, given  a measure $\Q$ on $\Lambda$ and  $\calH = L^2([0,T],dt)$, one can minimize the projection error, that is 
$$\min_{\frakF, \, \text{ONB}} \; \lVert f -f^{K,\frakF} \rVert_{L^2(\Lambda)} = \min_{\frakF, \, \text{ONB}} \; \lVert f(Y) -f^{K,\frakF}(Y) \rVert_{L^2(\Q\otimes dt)}, \quad K\ge 1. $$
The unique optimal ONB turns out to be, the same for all truncation level $K$ and  given by the eigenfunctions of the covariance kernel $\kappa(s,t) = \C^{\Q}(f(X_s),f(X_t))$. 
This is a classical result established by \citet{Karhunen} and \citet{Loeve} and employing the optimal basis in \eqref{eq:Hilbert} leads to the \textit{Karhunen-Loève (KL) expansion of $f$}.  

\begin{figure}[H]
	\centering
	\caption{Hilbert projection of functionals in the $(t,X,f(X))$ space. }
	\vspace{-2mm}
\includegraphics[height=2.2in,width=3in]{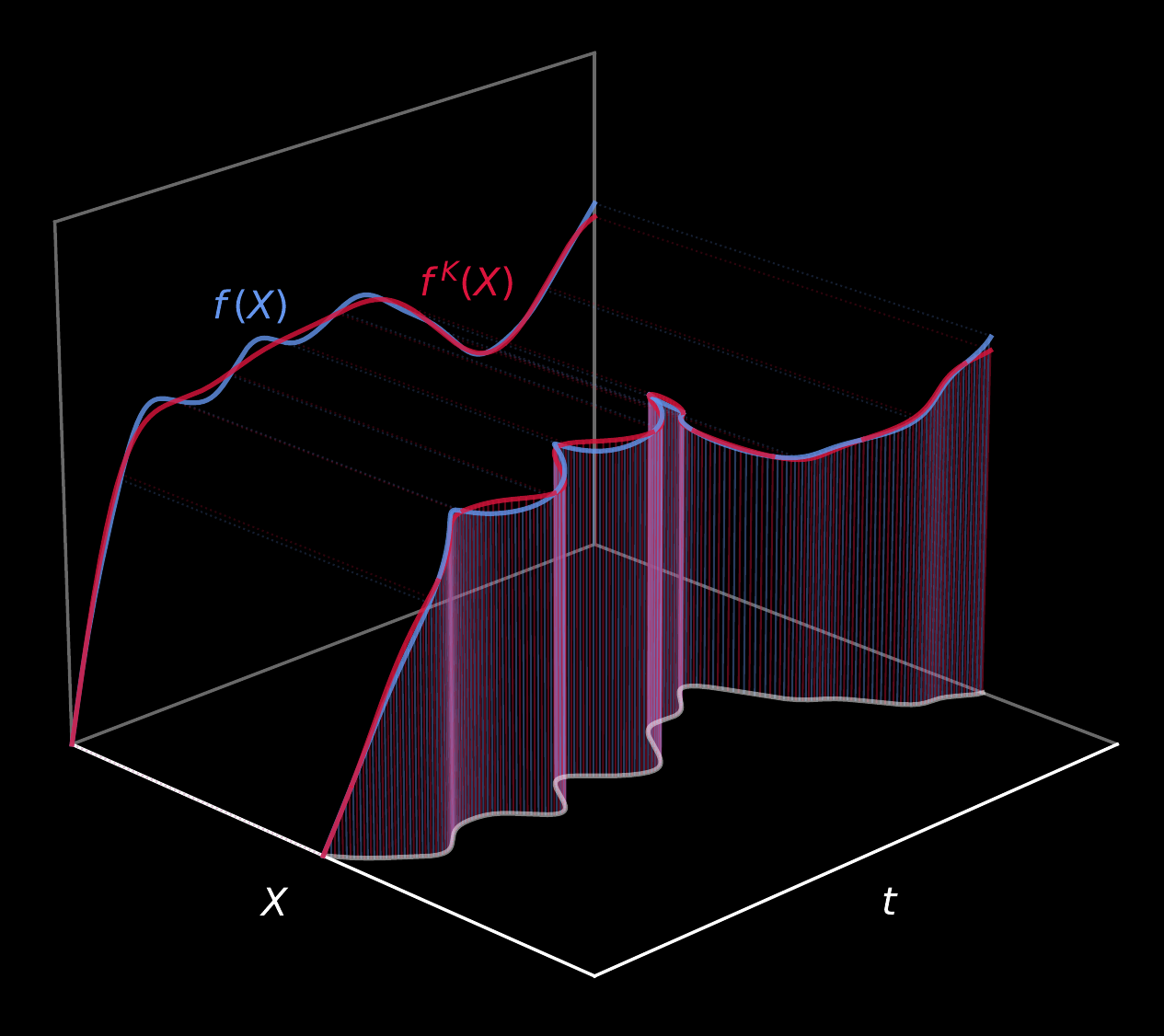}
\label{fig:KL}
\end{figure}

Among other financial applications, the KL expansion allows to
directly simulate the functional(s)  appearing in the payoff of exotic options \cite{Tissot}. 
Also, a market generator can be devised 
when $f$ represents 
the map linking the  random source (e.g. Brownian motion) to the stock price process. 
Indeed, the Fourier coefficient in \eqref{eq:Hilbert} can be "learned" from historical data permitting the simulation of new scenarios.  

\begin{remark}
If we are only given a $T-$functional $g \in L^2(\Lambda_T)$, one can consider an embedding $\iota: L^2(\Lambda_T) \hookrightarrow L^2(\Lambda)$ (such as $\iota_{\Q}$ defined in \cref{sec:TFuncEmbedd}) and apply the above methodology to $\iota g$. 
We thereafter
choose $t=T$ in \eqref{eq:Hilbert} to obtain the representation
$$g^{K,\frakF}(X_T) = \sum_{k\le K} (\iota g (X),F_k)_{L^2(dt)} F_k(T) .$$
\end{remark}

\subsubsection{Dynamic Wiener Chaos Expansion}\label{sec:dynamicChaos}
We refer the reader to the beginning of \cref{sec:chaos} for some notations 
pertaining to the Wiener chaos expansion.  
In particular, we consider a stochastic basis $(\Lambda,\calF,\F, \Q)$ and take $\Q$ to be  the Wiener measure throughout. 

It is well-known that the Wiener chaos expansion can be extended from random variables to stochastic processes \cite[Chapter 2]{DiNunno}. Indeed, if $Y $ 
is a measurable square integrable process 
(possibly non-adapted to the Brownian filtration), then one can apply  \Cref{thm:Wiener1} at each intermediate time.  
The chaos expansion for processes is of great importance in Malliavin calculus as it allows to define, among other things, the Skorokhod integral \cite{Skorokhod}. 
As our focus is on non-anticipative functionals, we henceforth suppose that $Y$ is adapted. 
An application of Doob's functional representation \cite[Lemma 1.13.]{kallenberg} shows that every  right-continuous process $Y$ adapted to the Brownian filtration  
is paired with a functional $f:\Lambda \to \R$ such that $y_t=f(X_t)\,$ $\forall \ t \in [0,T],$ $\Q-$a.s. We therefore expect that a "dynamic" Wiener chaos expansion is available, namely a chaos expansion for  functionals in $ L^2(\Lambda)$. This is confirmed in the next theorem.  

\begin{theorem}\label{thm:dynamicChaos}
\textnormal{(\textbf{Wiener Chaos Expansion for Functionals})}  \label{thm:chaosfunctional}
Every functional $f \in L^2(\Lambda)$ is associated with a unique sequence $\{\phi_k \in L^2(\triangle_{k,T})\}_{k \ge 1}$  such that for all $t\in [0,T]$
\begin{equation}
    \sum_{k= 0}^{K} J_k\phi_{k+1}(\cdot,t)(X_t) =   \sum_{k= 0}^{K} \int_{\triangle_{k,t}} \phi_{k+1}(t_1,\ldots,t_k, t)\, dx^{\otimes k}  \ \underset{L^2(\Lambda_t)}{\overset{K\uparrow \infty}{\xrightarrow{\hspace{1.2cm}}}} \ f(X_t).
\end{equation}  
\end{theorem}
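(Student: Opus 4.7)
The plan is to reduce the dynamic statement to the static one (Theorem 2.2) by applying it slice by slice in the time variable $t$, then assemble the kernels into a single function on the enlarged simplex $\triangle_{k+1,T}$.

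First, recall from the definition \eqref{eq:L(Lambda)} and Fubini's theorem that $f \in L^2(\Lambda)$ if and only if $f_t := f|_{\Lambda_t}$ lies in $L^2(\Lambda_t)$ for a.e. $t \in [0,T]$ together with $\int_0^T \|f_t\|_{L^2(\Lambda_t)}^2 \, dt < \infty$. For each such $t$, Theorem 2.2 produces a unique sequence $\{\psi_k^t \in L^2(\triangle_{k,t})\}_{k\ge 0}$ with
$$ f_t(X_t) = \sum_{k \ge 0} J_k \psi_k^t(X_t) \quad \text{in } L^2(\Lambda_t), \qquad \|f_t\|_{L^2(\Lambda_t)}^2 = \sum_{k\ge 0} \|\psi_k^t\|_{L^2(\triangle_{k,t})}^2. $$

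Second, I would define $\phi_{k+1}\colon \triangle_{k+1,T} \to \R$ by $\phi_{k+1}(t_1,\ldots,t_k,t) := \psi_k^t(t_1,\ldots,t_k)$; the last argument plays the role of the slice horizon while $(t_1,\ldots,t_k) \in \triangle_{k,t}$, which is exactly the constraint imposed by $(t_1,\ldots,t_k,t)\in \triangle_{k+1,T}$. Writing $\triangle_{k+1,T} = \{(t_1,\ldots,t_k,t)\colon t\in[0,T],\ (t_1,\ldots,t_k)\in \triangle_{k,t}\}$ and applying Fubini then summing over $k$ yields
$$ \sum_{k\ge 0}\|\phi_{k+1}\|_{L^2(\triangle_{k+1,T})}^2 = \int_0^T \sum_{k\ge 0} \|\psi_k^t\|_{L^2(\triangle_{k,t})}^2 \, dt = \int_0^T \|f_t\|_{L^2(\Lambda_t)}^2\, dt = \|f\|_{L^2(\Lambda)}^2 < \infty, $$
so each $\phi_{k+1}$ belongs to $L^2(\triangle_{k+1,T})$. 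The $L^2(\Lambda_t)$ convergence of the partial sums at (a.e.) $t$ is nothing but Theorem 2.2 applied to $f_t$, and uniqueness transfers from the slice-wise uniqueness: any competing sequence $(\phi_{k+1}')$ would induce kernels $\psi_k^{\prime t}$ agreeing with $\psi_k^t$ in $L^2(\triangle_{k,t})$ for a.e. $t$, forcing equality in $L^2(\triangle_{k+1,T})$.

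The main technical obstacle is the joint measurability of $(t_1,\ldots,t_k,t) \mapsto \psi_k^t(t_1,\ldots,t_k)$ on $\triangle_{k+1,T}$, because Theorem 2.2 only furnishes $\psi_k^t$ for each \emph{fixed} $t$, a priori up to null sets that may depend on $t$. I would circumvent this in two steps. When $f_t$ is Malliavin smooth for a.e. $t$, Stroock's formula \eqref{eq:stroock} gives the explicit representative $\psi_k^t(t_1,\ldots,t_k) = \E^{\Q}[D_{t_1\cdots t_k}f_t(Y_t)]$, which is jointly measurable by Fubini. For general $f \in L^2(\Lambda)$, I would approximate $f$ by a sequence $(f^{(n)})$ of smooth functionals in $L^2(\Lambda)$ (using the density of Malliavin-smooth functionals in each $L^2(\Lambda_t)$ combined with a truncation in $t$) and define $\phi_{k+1}$ as the $L^2(\triangle_{k+1,T})$ limit of the corresponding jointly measurable kernels $\phi_{k+1}^{(n)}$; the isometry displayed above makes this Cauchy argument routine. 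A small null-set redefinition of $\phi_{k+1}$ and of $f$ then upgrades "for a.e. $t$" to "for all $t\in[0,T]$" as stated.
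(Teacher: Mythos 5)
The paper gives no proof of its own here—it simply cites \cite[Chapter 2]{DiNunno}—and your argument is correct and is essentially the standard one found there: apply the static chaos expansion (\cref{thm:Wiener1}) slice by slice in $t$, glue the kernels $\psi_k^t$ into $\phi_{k+1}$ on $\triangle_{k+1,T}$, and use the Parseval identity plus Fubini to get square-integrability, measurability (after selecting a jointly measurable version), and uniqueness. The only caveat, which you already flag yourself, is that for a general $f\in L^2(\Lambda)$ the restriction $f|_{\Lambda_t}$ need only lie in $L^2(\Lambda_t)$ for a.e.\ $t$, so the convergence in the statement can only be asserted for a.e.\ $t\in[0,T]$ (or for all $t$ after choosing a suitable representative of $f$); this is a defect of the statement rather than of your proof.
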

\begin{proof}
See \cite[Chapter 2]{DiNunno}.
\end{proof}

\begin{example}
    Let $g$ be a square integrable $T-$functional with Wiener chaos expansion $g = \sum_{k\ge 0} J_k\phi_k$ and consider the price functional $f(X_t) = \iota_{\Q}g(X_t)= \E^{\Q}[g(Y_T) \ | \ X_t]$. Applying \cref{thm:chaosfunctional} to $f$, we obtain that 
    \begin{equation}
    f(X_t) =   \sum_{k= 0}^{\infty} \int_{\triangle_{k,t}} \tilde{\phi}_{k+1}(t_1,\ldots,t_k, t)\, dx^{\otimes k},
\end{equation} 
for some kernels $\{\tilde{\phi}_k \in L^2(\triangle_{k,T})\}_{k \ge 1}$. On the other hand, we have seen in \cref{ex:pricefunctional} that $f(X_t)$ coincides with truncated Wiener chaos of $g$  on $[0,t]$, namely  $f(X_t) = \sum_{k\ge 0} J_k\phi_k(X_t)$. We conclude from the uniqueness (in $L^2(\Q)$) of the Wiener kernels that $\tilde{\phi}_{k+1}(\cdot,t) = \phi_k |_{\triangle_{k,t}}$. 
\end{example}

\section{Applications}\label{sec:FTEApp}

\subsection{Pricing of Path-dependent Claims}
\label{app:pricing}
The functional Taylor expansion has obvious applications in the pricing of exotic options. 
In what follows, fix a risk-neutral measure  $\Q$ and assume zero interest rate. Moreover, the paths represent the evolution of the underlying stock price. 
If  $g\in L^1(\Lambda_T)$ represents the payoff of a path-dependent claim of European type, then its value (or price) with respect to $\Q$  is given by
$v_0 = \E^{\Q}[g(Y_T)]$. If $g$ can be nearly replicated 
by a  signature payoff, i.e.  $g \approx \sum_{\alpha \in \A'} c_{\alpha} \calS_{\alpha}$  where $\A'$ is a finite subset of $\A$, then
\begin{equation}\label{eq:sigPrice}
    v_0 \approx \sum_{\alpha \in \A'} c_{\alpha}  \E^{\Q}[\calS_{\alpha}(Y_T)]. 
\end{equation}
Thus, a pricing method can be devised  provided that both the coefficients $(c_{\alpha})$ and risk-neutral prices of the "primitive" securities $(\calS_{\alpha})$ are computed efficiently. 

This was carried out by \citet{Szpruch} and \citet{LyonsNum}, where  $(c_{\alpha})$ are  calculated by regressing realizations of the payoff against the signature functionals. 
The expected signature is either known explicitly \cite{Szpruch} or implied by the market \cite{LyonsNum}. 

If a functional Taylor (or Maclaurin) expansion was available for $g$, we would set $c_{\alpha} = \Delta_{\alpha}g(X_0)$ in view of  \cref{eq:sigPrice}. However, $g$ is a priori defined only for paths of length $T$ of its temporal derivative may not exist.   A remedy is to consider an embedding  $\iota: \frakF_T \hookrightarrow \frakF$ (see \cref{def:embedding}) with $\frakF_T,\frakF$ to be specified. In the context of pricing, we see that a natural choice  is $\frakF_K= L^1(\Lambda_T)$ and $\frakF =\bar{L}^1(\Lambda)$, the subspace of functionals $f:\Lambda \to \R$ such that $f(Y)$ is bounded in $L^1(\Q)$. 
Although other embeddings can be used, we  here choose the family of "Bachelier embeddings" $(\iota_{\Q_\sigma})_{\sigma \ge 0 }$ seen in \cref{sec:TFuncEmbedd}, namely $$\iota_{\Q_\sigma}   g(X_t)  =  \E^{\Q_\sigma}[g(Y_T) \,|\, X_t ],$$ 
where the scaled canonical process $Y/\sigma$ is $\Q_\sigma-$Brownian motion.  For all $\sigma\ge 0$, then   $\iota_{\Q_\sigma}   g \in \bar{L}^1(\Lambda)$  from Jensen's inequality and the tower property of expectations. Hence
$\iota_{\Q_\sigma}$ embeds $L^1(\Lambda_T)$  into $\bar{L}^1(\Lambda)$. 
If   $\iota_{{\Q_\sigma}}g\in \C^{K,K+1}$ for some $K\ge 1$, we can  choose $\A' = \{\alpha \in \A \ : \ |\alpha| \le K\}$ and apply the FTE to obtain 
\begin{align}
 v_0 = \E^{\Q}[g(Y_T)] = \E^{\Q}[\iota_{\Q_\sigma}g(Y_T)]  \approx \sum_{|\alpha| \le K}\underbrace{\Delta_{\alpha}\iota_{\Q_\sigma}g(X_0)}_{=:\ c_{\alpha,\sigma}} \E^{\Q}[\calS_{\alpha}(Y_T)]. 
 \label{eq:expTFuncPrice}
\end{align}
Provided that the signature claims are priced correctly (namely the "model" $\Q$ is calibrated to the  market), then \eqref{eq:expTFuncPrice} gives an explicit approximation for the value of $g$. We emphasize that the Bachelier model is solely used to compute the coefficients.  
    Note that each value of $\sigma$ will generate a distinct sequence of coefficients $(c_{\alpha,\sigma})$ and in turn a distinct approximation of the price. In view of  \cref{thm:Basis}, the coefficients will nevertheless be unique once  expressed in the basis $\{\calS_{\beta1}|_{\Omega_T^{\Pi}} \ : \ |\beta| < K \}$.  Because expectations with respect to an atomless measure 
tend to smooth out the integrand, we  
favor  $\sigma > 0$ to compute $(c_{\alpha,\sigma})$  over the degenerate case $\sigma =0$ associated to the intrinsic functional.
 
\begin{remark} \label{rmk:lkbk}
If $g$ is a path-independent payoff, i.e. $g(X_T) = h(x_T)$ for some $h:\R\to \R$, then $\iota_{\Q_{\sigma}} g(X_t) = \E^{\Q_{1}}[h(x_t + \sigma Y_{T-t})]$, $t<T$, is a smooth function of $x_t$. However,  the regularity of $\iota_{\Q_{\sigma}} g$ is no longer guaranteed for general path-dependent payoffs. As an illustration, consider a fixed strike,  at-the-money  lookback call option, i.e.  $g(X_T)= \max_{0 \le t \le T}x_t - x_0$. Assuming no interest rates, 
a simple calculation shows that $\iota_{\Q_\sigma}g(X_t) = u(t,x_t,\max_{0 \le s \le t}x_s)$, where 
\begin{align*}
  u(t,x,m) 
&= x - x_0 + (m -x)\left[2\Phi\left(\frac{m-x}{\sigma \sqrt{T-t}}\right)-1\right] + 2 \sigma \sqrt{T-t} \ \phi\left(\frac{m-x}{\sigma \sqrt{T-t}}\right),
  \end{align*}
where $\phi,\ \Phi$ is the PDF and CDF of the standard normal distribution, respectively. Although $u$ is clearly a smooth function in $m$, $\iota_{\Q_{\sigma}}g$ does not even belong to $\C^{0,1}(\Lambda)$ as the functional spatial derivative of $\max_{0 \le s \le t}x_s$ is not $\Lambda-$continuous.  
One way to sort out this issue is to regularize the running maximum as in  \cite[Example 2]{Dupire} and  compute  thereafter the needed functional derivatives in \eqref{eq:expTFuncPrice}.
\end{remark}

 \subsection{Static Hedging}
  The FTE possesses also  immediate applications 
 to static hedging. 
 Consider a claim $g$ with embedding  $f= \iota_{\Q_\sigma}g\in \C^{K,K+1}$, $\sigma>0$. Suppose we are given  a replicating portfolio $\varphi = \sum_{|\alpha|\le K}c_{\alpha}\calS_{\alpha}$ of $g$ such that $\Delta_{\alpha}f(X_0) = c_{\alpha}$ for all $|\alpha|\le K$. We shall see that the FTE allows to quantify the hedging error $|g-\varphi|$. First, applying \cref{cor:FME} to the difference $f-\varphi$ gives 
 \begin{equation}\label{eq:}
     f(X_t)-\varphi(X_t) =  R_{K}(X_t) =  \sum_{|\alpha| = K} \int_{\triangle_{K,t}} \Delta_{\alpha}f(X_{t_1}) \circ \, dx^{\alpha},  \end{equation}
 for any price path $X \in \Omega^{\Pi}$. If $g$ belongs to the space spanned by the signature functionals up to order $K$, we would obtain a perfect hedge, i.e. $R_{K} \equiv 0$. Otherwise, the hedging error can be bounded as we now explain.  
In line with financial markets, we assume that $X$ is the 
  piecewise interpolation of 
  tick data and is in particular Lipschitz continuous. 
  If also  $ \sup_{|\alpha| = K}|\Delta_{\alpha}f(X)| \le C^{K}$ for all $X\in \Omega^{\text{Lip}}$, $C< \infty $, then 
\cref{prop:remainderLip} gives
  $R_{K}(X_t) = R_{K}(X_0,X_t) = \calO(\frac{t^K}{K!})$.  
Choosing $t=T$, the  hedging error for the claim $g$ is therefore bounded by $$|g(X_T)-\varphi(X_T)| \le |R_{K}(X_T)| = \calO\left({\textstyle \frac{T^K}{K!}}\right), \quad \forall \ X_T\in \Omega^{\text{Lip}}_T. $$
We emphasize that the above upper bound holds pathwise, matching the  needs of exotic option traders to be protected against any future scenario. 

\section{Conclusion} \label{sec:conclusion}

This paper 
gathers and intends to 
elucidate 
expansions in the path space. 
In particular, we draw a distinction between static expansions 
(Volterra, Wiener, IVE) and dynamic ones$-$in particular the functional Taylor expansion (FTE)  lying at the heart of this work.  
We can summarize the latter as a pathwise tool combining the functional It\^o calculus and the signature to unravel path dependence. 
We then establish parallels 
between the FTE and static expansions such as the  intrinsic value expansion and the celebrated Wiener chaos. 
As seen in the applications, the FTE proves useful for the pricing and hedging of exotic claims as it effectively separates the payoff functional from the price path. For instance, this separation can be  incorporated into numerical methods such as cubature schemes to guarantee pricing accuracy and speed up computation. 

We deem the study of path functionals a necessity in modern quantitative finance, given the omnipresence of path dependence. As such, it seems that many works  can be undertaken in this direction. 
In particular, one can investigate financial applications  of  the Volterra and Wiener series, used  extensively 
in nonlinear systems but barely in finance.  
Another avenue would be to tackle non-Markovian problems with the FTE, such as the pricing of American options involving  path-dependent features in the payoff, the dynamics, or both. 


 
\subsection*{Acknowledgments}
We would like to thank Josef Teichmann, Harald Oberhauser, Harvey Stein  as well as the participants of the Research in Options conference (RiO 2021 and RiO 2022) and the 2022 CFMAR Workshop at UCSB for precious comments. We are also grateful to our colleagues  at Bloomberg (Bryan Liang,   Guixin Liu)  for fruitful discussions. 
\addtocontents{toc}{\protect\setcounter{tocdepth}{1}}
\appendix

\section{Proofs}\label{app:Proofs}

\subsection{\cref{thm:FSF}}\label{app:FSF}
\begin{proof} 
	 Recall that $X\in \Omega^{\Pi}_t$ implies $X_s \in \Omega^{\Pi}_s$ for all $s\le t$. We can therefore  show the result for $s = t$ without loss of generality.  
	For fixed $N$, write   $\delta x_{t_n} = x_{t_n}-x_{t_{n-1}}$, $\delta t_n = t_n- t_{n-1}$ with  $1\le n\le N$. Note that we omit the dependence of $\delta x_{t_n}, \ \delta t_n$ on $N$ for ease of presentation. 
	Let $X^N \in \bar{\Omega}^{\Pi}_{t}$ be the càdlàg piecewise constant approximation of $X$ along $\Pi^N$, i.e. $X^N = \sum_{t_n \in \Pi_t^N}  x_{t_{n-1}} \mathds{1}_{[t_{n-1},t_n)} + x_t \mathds{1}_{\{t\}}.$ Since $X^N_0 = X_0$, we can write 
	\begin{align*}
		f(X_t) - f(X_0) =   f(X_t) - f(X_t^{N}) +  \sum_{t_n\in \Pi_t^N} (f(X_{t_n}^{N})-f(X_{t_{n-1}}^{N})).
	\end{align*} 
	Clearly, $X^N \overset{N \uparrow \infty}{\longrightarrow} X$ uniformly, thus $f(X_t^N) \overset{N \uparrow \infty}{\longrightarrow} f(X_t)$ as well since $f$ is $\Lambda-$continuous. 
	Next, we decompose the summands in the above telescopic sum as 
	$$f(X_{t_n}^{N})-f(X_{t_{n-1}}^{N}) =\underbrace{f(X_{t_{n}-}^{N})-f(X_{t_{n-1}}^{N})}_{\text{flat extension}} + \underbrace{f(X_{t_n}^{N})-f(X_{t_{n}-}^{N})}_{\text{vertical bump}}. $$
	We thus retrieve from $X^N$ an alternation of flat extensions and bumps.  
As $f\in \C^{1,2}(\bar{\Omega}^{\Pi}_t)$,   the mean value theorem applied to $u\mapsto f(X^N_{t_{n-1},u})$, $h\mapsto f((X^N_{t_{n}-})^h)$  gives the existence of $(u_n,h^{+}_n) \in (0,\delta t_{n})\times (0,\delta{x}_{t_n})$ such that  
\begin{align}
	f(X_{t_{n}-}^{N})-f(X_{t_{n-1}}^{N})  &= \Delta_t f(X^N_{t_{n-1},u_n}) \delta t_n, \nonumber \\ f(X_{t_n}^{N})-f(X_{t_{n}-}^{N}) &= \Delta_x f(X_{t_{n}-}^{N}) \delta x_{t_n}  + \frac{1}{2}\Delta_{xx} f((X_{t_{n}-}^{N})^{h^{+}_n}) \delta x^2_{t_n}.\label{eq:MVTX1}
\end{align} 
Similarly, using the function  $h\mapsto f((X^N_{t_{n}})^{-h})$, there exists $h_n \in (0,\delta{x}_{t_n})$ such that  
\begin{equation}
	f(X_{t_n-}^{N})-f(X_{t_n}^{N}) = -\Delta_x f(X_{t_{n}}^{N}) \delta x_{t_n}  + \frac{1}{2}\Delta_{xx} f((X_{t_{n}}^{N})^{-h_n}) \delta x^2_{t_n}.\label{eq:MVTX2}
\end{equation} 
Hence, subtracting \eqref{eq:MVTX2} to \eqref{eq:MVTX1} yields
\begin{equation*}
	2(f(X_{t_n}^{N})-f(X_{t_{n}-}^{N})) = (\Delta_x f(X_{t_n-}^{N}) + \Delta_x  f(X_{t_n}^{N}))\delta x_{t_n} + \frac{\Delta_{xx} f((X_{t_n-}^{N})^{h_{n}^{+}})  - \Delta_{xx} f((X_{t_{n}}^{N})^{-h_n}) }{2}\delta x^2_{t_n}. 
\end{equation*}
Noticing that $(X_{t_{n}}^{N})^{-h_n} = (X_{t_n-}^{N})^{h_{n}^{-}}$ with $h_{n}^{-} := \delta x_{t_n} - h_n$, we obtain
\begin{align}
	\sum_{t_n\in \Pi_t^N} (f(X_{t_n}^{N})-f(X_{t_{n-1}}^{N})) 
	&= 
	\sum_{t_n\in \Pi_t^N} \Delta_t f(X^N_{t_{n-1},u_n}) \ \delta t_n \label{eq:riemann} \\
	&+ \sum_{t_n\in \Pi_t^N} \frac{\Delta_x f(X_{t_n-}^{N}) + \Delta_x  f(X_{t_n}^{N})}{2} \  \delta x_{t_n} \label{eq:strat} \\ 
	&+ \sum_{t_n\in \Pi_t^N} \frac{\Delta_{xx} f((X_{t_n-}^{N})^{h_{n}^{+}})  - \Delta_{xx} f((X_{t_n-}^{N})^{h_{n}^{-}}) }{4} \ \delta x^2_{t_n} \label{eq:remain}
\end{align} 
As $t\to t$ is a smooth integrator and $\Delta_t f$ is $\Lambda-$continuous,  
the right side of $\eqref{eq:riemann}$ converges to the Riemann integral
$$\lim_{| \Pi^N_t | \downarrow 0} \sum_{t_n\in \Pi_t^N} \Delta_t f(X^N_{t_{n-1},u_n}) \ \delta t_n = \lim_{| \Pi^N_t | \downarrow 0} \sum_{t_n\in \Pi_t^N} \Delta_t f(X^N_{t_{n-1}}) \ \delta t_n = \int_0^t \Delta_t f(X_{s}) ds.$$
Now consider the compact set $\calK_N = \{(X_{s}^{N})^{h} \ | \ (s,h) \in [0,t] \times [-\delta_N, \delta_N]\} \subset \Lambda$ with $\delta_N = \max_{n=1,...,N} |\delta x_{t_n}|$. 
Then $\Delta_{xx}f$ is uniformly $\Lambda-$continuous in $\calK_N$. In turn, the function $(s,h) \mapsto \Delta_{xx}f((X_{s}^{N})^{h})$, $(s,h) \in [0,t] \times [-\delta_N, \delta_N]$,  admits a modulus of continuity $\omega_N:\R_{+}\to \R_{+}.$ Thus, $$|\Delta_{xx} f((X_{t_n-}^{N})^{h_{n}^{+}})  - \Delta_{xx} f((X_{t_n-}^{N})^{h_{n}^{-}}) | \le \omega_N(|h_{n}^{+}-h_{n}^{-}|) \le \omega_N(\delta_N ) \quad \forall \ n = 1,\ldots,N. $$ 
As $\sum_{t_n\in \Pi_t^N} \delta x_{t_n}^2$ converges to $\langle X \rangle^{\Pi}_t < \infty$,  there exists $N_0 \in \N$ such that $\sum_{t_n\in \Pi_t^N} \delta x_{t_n}^2 \le 2\langle X \rangle^{\Pi}_t$ $\ \forall \ N\ge N_0$. 
Then $\eqref{eq:remain}$ is dominated by $\frac{1}{2}\omega_N(\delta_N)\ \langle X \rangle^{\Pi}_t$  for all $N\ge N_0$, 
and thus vanishes as $N \uparrow \infty$  since $X$ is uniformly continuous. 
Altogether, we have shown that 
\begin{align*}
	f(X_t) - f(X_0) &=
	\int_0^t \Delta_t f(X_{s}) ds +  \lim_{| \Pi^N_t | \downarrow 0} \sum_{t_n\in \Pi_t^N}  \frac{\Delta_x f(X^N_{t_{n}-}) + \Delta_x  f(X^N_{t_n})}{2} \  \delta x_{t_n}. 
\end{align*}
In fine, the  limit of $\eqref{eq:strat}$  exists and coincides with the Stratonovich integral $\int_0^t \Delta_x f(X_{s}) \circ dx_s$. 
\end{proof}
\subsection{\cref{prop:Sig}}\label{app:propSig}
 \begin{proof}
   We fix a path $X\in \Omega^{\Pi}_t$ and show the result by induction on $k = |\alpha| \ge 1$. If $k=1$, then clearly $\calS_0(X_t) = t$, $\calS_1(X_t) = x_t-x_0$ are well-defined smooth functionals. Now suppose that $\calS_{\beta}$ exists  for all $|\beta| < k$, $k\ge 2$. 
Pick any word $\alpha$ such that $k = |\alpha| > |\alpha|_1$. 
If  $\alpha_k=0$, then $\calS_{\alpha} = \int_0^t \calS_{\alpha -1}(X_s)ds$ is a well-defined Riemann integral. Moreover, $\Delta_t \calS_{\alpha} = \calS_{\alpha -1 }$, $\Delta_x \calS_{\alpha} \equiv 0 $.  The smoothness $\calS_{\alpha}$ follows by induction.
   
   Finally, we treat the case when  $\alpha_{k}=1$.  We would like to make sense of 
   $\calS_{\alpha}(X_t) := \int_0^t \calS_{\alpha_{-1}}(X_s)\circ dx_s$ by showing that the integrand is the spatial derivative of some smooth functional to apply \cref{thm:FSF}. In light of the discussion in \cref{sec:pathStrat} and the smoothness of $\calS_{\alpha-1}$, let us  introduce
   $\Phi(X_t) = \int_{x_0}^{x_t} \calS_{\alpha-1}(X_t^{(\varepsilon)}) \ d \varepsilon.$ 
   First, $\Delta_x \Phi = \calS_{\alpha-1} \in \C^{\infty,\infty}(\bar{\Omega}^{\Pi})$ by construction. If $\alpha_{k-1} = 1$, then $\Delta_t \Phi(X_t) = 0$ and in turn 
   $\Phi(X_t) = \int_0^t \calS_{\alpha-1}(X_s) \circ dx_s = \calS_{\alpha}(X_t)$ using the functional Stratonovich formula.  
   When $\alpha_{k-1} = 1$, then $\Delta_t \Phi(X_t) =  \int_{x_0}^{x_t} \calS_{\alpha-2}(X_t^{(\varepsilon)}) \ d \varepsilon$, so $\eqref{eq:sigDefStrat}$ reads
      $\calS_{\alpha}(X_t) = \Phi(X_t) - \int_{0}^{t}\int_{x_0}^{x_s} \calS_{\alpha-2}(X_s^{(\varepsilon)}) \ d \varepsilon ds. $
   Hence $\calS_{\alpha}$ is well-defined in both cases  and taking the spatial (respectively temporal) derivative 
   gives  $\Delta_{x}\calS_{\alpha} = \calS_{\alpha-1}$ (resp. $\Delta_{t}\calS_{\alpha} = 0$), which is $\eqref{eq:derivSig}$. 
   \end{proof}
\subsection{\cref{prop:sigProperties}}
\label{app:sigProperties}
\begin{proof} 
(i) This  follows from an iterative application of  $\eqref{eq:derivSig}$. 
     (ii) Suppose that there exists a 
finite subset 
$\A' \subseteq \A$ and $(c_{\alpha})_{\alpha \in \A'}$ such that 
$f := \sum_{\alpha \in \A'} c_{\alpha} \calS_{\alpha} \equiv 0. $ 
As $f$ is a combination of finitely many signature functionals, then $f$ is smooth using \cref{prop:Sig}. Moreover, assertion (i) implies that  $\Delta_{\alpha} \calS_{\beta}(X_0) = \mathds{1}_{\{\alpha = \beta\}}$ for all $\alpha,\beta \in \A$. We conclude that 
$c_{\alpha} = \Delta_{\alpha}f(X_0) = 0$ for all  $ \alpha \in \A'$  as desired. 
     (iii) There are many examples. A simple one follows from the product rule: define $g(X_T) = T (x_T-x_0)$, which is equal to $\calS_{01}(X_T)+\calS_{10}(X_T)$.  With $T$ being fixed, 
 it can be treated as a coefficient. 
 The linear dependence follows from the fact that $X_T \mapsto x_T-x_0$ is a signature functional, namely $\calS_1(X_T)$. 
     (iv) We start with a Lemma.
     

     

\begin{lemma} \label{lem:incremental}
Let $f \in \C(\Omega^{\Pi})$ such that $\calI \! f(X_T) : = \int_0^T f(X_t) \circ dx_t = 0$ on $\Omega^{\Pi}_T$. Then $f$ vanishes on $\Omega^{\Pi}$.
\end{lemma}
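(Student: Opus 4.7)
The plan is to argue by contradiction. Suppose $f(X^*_t) \ne 0$ for some $X^*_t \in \Omega^\Pi_t$; replacing $f$ by $-f$ if necessary, we may assume $f(X^*_t) > 0$. First I would reduce to the case $t < T$: if $t = T$, a short computation shows $d_\Lambda(X^*_{T-\epsilon}, X^*_T) \le \epsilon + \omega_{X^*}(\epsilon) \to 0$, where $\omega_{X^*}$ is a modulus of continuity of $X^*$, and $\Lambda$-continuity of $f$ then gives $f(X^*_{T-\epsilon}) \to f(X^*_T) \ne 0$, so we may replace $X^*_t$ by $X^*_{T-\epsilon}$ for some small $\epsilon > 0$.

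The core idea is to probe $f(X^*_t)$ by comparing $\calI f$ along two Lipschitz extensions of $X^*_t$ that differ only in an infinitesimal right-neighborhood of $t$. For $\epsilon \in (0, T-t)$ I would define the ramp $R^\epsilon \in \Lambda_{T-t}$ by $r^\epsilon(v) = \min(v, \epsilon)$ and set
\begin{equation*}
    Y^\epsilon_T \ = \ X^*_t \oplus R^\epsilon, \qquad Z_T \ = \ X^*_{t,\,T-t}.
\end{equation*}
Both are Lipschitz extensions contributing no additional quadratic variation, so $Y^\epsilon_T, Z_T \in \Omega^\Pi_T$, and by hypothesis $\calI f(Y^\epsilon_T) = \calI f(Z_T) = 0$. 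Splitting each integral at time $t$, the $[0,t]$-parts coincide and cancel; the $[t,T]$-part of $\calI f(Z_T)$ vanishes because $z$ is constant there; and since $R^\epsilon$ is Lipschitz (zero quadratic variation), the Stratonovich integral on $[t,t+\epsilon]$ reduces to a genuine Riemann integral. After the change of variable $u = s-t$, this leaves
\begin{equation*}
    0 \ = \ \calI f(Y^\epsilon_T) - \calI f(Z_T) \ = \ \int_0^\epsilon f\bigl(X^*_t \oplus R^\epsilon|_{[0,u]}\bigr) \, du.
\end{equation*}

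The final step is a standard averaging argument: a direct computation gives
\begin{equation*}
d_\Lambda\bigl(X^*_t \oplus R^\epsilon|_{[0,u]},\, X^*_t\bigr) \ \le \ 2u \ \le \ 2\epsilon
\end{equation*}
uniformly in $u \in [0,\epsilon]$, so by $\Lambda$-continuity of $f$ the integrand converges uniformly to $f(X^*_t)$ as $\epsilon \downarrow 0$. Dividing by $\epsilon$ and letting $\epsilon \downarrow 0$ therefore forces $f(X^*_t) = 0$, contradicting our assumption. The main obstacle I anticipate is a clean justification of the pathwise additivity $\int_0^T = \int_0^t + \int_t^T$ for the Stratonovich integral, together with its reduction to a Riemann integral on the Lipschitz segment $[t, t+\epsilon]$: the paper defines the Stratonovich integral through the functional Stratonovich formula rather than as a Riemann-sum limit, so both facts should be flagged explicitly and deduced from the pathwise construction in \cref{sec:pathStrat}.
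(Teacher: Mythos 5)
Your proof is correct and follows essentially the same route as the paper: the paper likewise compares the flat extension $X_{t,T-t}$ with a small Lipschitz ramp appended right after time $t$, reduces the difference of the two vanishing integrals to a Riemann average over the ramp, and concludes by $\Lambda$-continuity (the paper uses a ramp of height $h$ over duration $\delta t$ and takes the double limit $h \downarrow 0$ then $\delta t \downarrow 0$, whereas you couple the two parameters into a single slope-one ramp of duration $\epsilon$). Your explicit handling of the boundary case $t = T$ by approximating with $X^*_{T-\epsilon}$ is a detail the paper leaves implicit.
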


\begin{proof}
Take $X_t \in \Omega_t^{\Pi}$. Consider  the  family $(Y_T^{(h,\delta t)}) \subseteq \Omega_T^{\Pi}$ defined as 
$$Y_T^{(h,\delta t)}(s) =  x_{t\wedge s} + h \left(\frac{(t-s)^{+}}{\delta t } \wedge 1\right), \quad s \in [0,T], \quad (h,\delta t) \in \R_+ \times (0,\infty). $$
Note in particular that $Y_T^{(0,\delta t)} = X_{t,T-t}$. 
By assumption $\calI \!  f (X_{t,T-t}) = \calI \! f(Y_T^{(h,\delta t)}) = 0$. Hence, 
\begin{align*}
    0 =   \frac{1}{h}\left( \calI f(Y_T^{(h,\delta t)}) -  \calI f(X_{t,T-t}) \right)
   =  \frac{1}{\delta t}\int_t^{t+\delta t} f(Y_s^{(h,\delta t)}) \ ds \; \,\overset{h \ \downarrow \  0}{\xrightarrow{\hspace{0.8cm}}} \; \, \frac{1}{\delta t}\int_0^{\delta t} f(X_{t,u}) \ du.
\end{align*}
Letting $\delta t$ go to zero yields the result. 
\end{proof}

To prove (iv), suppose that there exists  
$\A' \subseteq \{ \calS_{\beta 1}|_{\Omega_T^{\Pi}} \ : \ \beta \in \A \}$, $|\A'| < \infty$, and $(c_{\alpha})_{\alpha \in \A'}$ such that $\sum_{\alpha\in \A'} c_{\alpha} \calS_{\alpha}(X_T) =0$ for all $X_T\in \Lambda_T$. As every word in $\A'$ ends with a "$1$", we can 
write 
$$\sum_{\alpha\in \A'} c_{\alpha} \calS_{\alpha}(X_T) = \sum_{\beta1 \in \A'} c_{\beta1} \calS_{\beta 1}(X_T) = \int_0^T \sum_{\beta1\in \A'} c_{\beta 1} \calS_{\beta}(X_t) \circ dx_t. $$
Applying \cref{lem:incremental} to $f = \sum_{\beta1\in \A'} c_{\beta 1} \calS_{\beta}$ and  conclude that $f \equiv 0$. 
Using (ii), this gives $c_{\alpha}=0$ for all $\alpha \in \A'$ as desired.
 \end{proof} 
 
\subsection{\cref{thm:Basis}}
\label{app:thmBasis}
 \begin{proof}
The independence is already shown in \cref{prop:sigProperties} (iv). We prove the spanning property by induction. The result is clearly true for $K=1$ as $\calS_0(X_T) = T$ can be regarded as a constant. 
Next, write $\B_K = \{\beta1 \ : \ |\beta| <K\}$ and take any word $\alpha$ of length at most $K\ge 2$ and outside of $\B_K$. If $|\alpha|<K$, then $\calS_{\alpha}(X_T)$ can be expressed as a linear combination of $(\calS_{\gamma}(X_T))_{\gamma\in \B_{K-1}}$ by induction. We can therefore  assume that 
$|\alpha| = K$ and $\alpha \notin \B_K$, that is  $\alpha = \beta 0$ with $|\beta| = K-1$. Invoking the induction hypothesis, there exist coefficients $(c_{\gamma})$ such that $\calS_{\beta}(X_T) = \sum_{\gamma1 \in \B_{K-1}} c_{\gamma}\calS_{\gamma 1}(X_T)$. We thus obtain using Fubini's theorem and the shuffle product \cite{Ree} that 
\begin{align*}
    \calS_\alpha (X_T) 
    &= \int_0^T \calS_{\beta}(X_t)dt \\
    &= \sum_{\gamma1 \in \B_{K-1}} c_{\gamma}\int_0^T \int_0^t\calS_{\gamma}(X_s) dx_s dt  \\ 
    &= \sum_{\gamma1 \in \B_{K-1}} c_{\gamma} \left( T \calS_{\gamma 1}(X_T) -  \int_0^T \calS_{1}(X_s) \calS_{\gamma}(X_s) dx_s  \right)\\
    &=  \sum_{\gamma1 \in \B_{K-1}} c_{\gamma} \left( T \calS_{\gamma 1}(X_T) - \sum_{j=1}^{K-1}  \calS_{\gamma_1\cdots \gamma_j 1 \gamma_{j+1} \cdots \gamma_{K-2} 1}(X_T)  \right).
\end{align*}
The result follows since  each signature term in the last expression belongs to $ \{\calS_{\beta 1}|_{\Omega_T^{\Pi}} :   |\beta| < K\}$.  
\end{proof}

\subsection{\cref{thm:FTE}}
\label{app:FTE}
\begin{proof} We carry out an induction on $K \ge 1$. We start with  $K=1$ (hence $f\in \C^{1,2}$) and define the functional $\tilde{f}(Y_r) = f(X_s \oplus Y_r)$, $r\le u$.  
As $Y_u \in \Omega^{\Pi}$, we can apply the functional Stratonovich formula (\Cref{thm:FSF}) to $\tilde{f}$, which gives  
\begin{align*}
    f(X_s \oplus Y_u) &= \tilde{f}(Y_u) \\
     &= \tilde{f}(Y_0) + \int_0^u \Delta_t \tilde{f}(X_r)  dr + \int_0^t\Delta_x \tilde{f}(Y_r) \circ dy_r\\
    &= f(X_s) + \int_0^u \Delta_t f(X_s \oplus Y_{r})  dr + \int_0^u\Delta_x f(X_s\oplus Y_{r}) \circ dy_r\\
        &= \Delta_{\emptyset}f(X_s) \calS_{\emptyset}(Y_u)+ \underbrace{\int_{\triangle_{1,u}} \Delta_t f(X_s\oplus Y_{t_1})  \circ  dy^{0} +\int_{\triangle_{1,u}} \Delta_x f(X_s\oplus Y_{t_1}) \circ  dy^{1}}_{= \, r_1(X_s,Y_u)}.
\end{align*}
Now take  $K\ge 2$ and suppose the result true for all $k\le K$. 
 As $f\in \C^{K,K+1}$ by assumption,  then trivially $f\in  \C^{K-1,K}$ and the induction hypothesis yields
\begin{align*}
    f(X_{s} \oplus Y_u) &= \sum_{|\alpha|< K-1}  \Delta_{\alpha}f(X_s) \calS_{\alpha}(Y_u) + r_{K-1}(X_s,Y_u).
\end{align*}
As  $\Delta_{\alpha}f$ is at least $\C^{1,2}$ for all $|\alpha| = K-1$, we can apply the pathwise Stratonovich formula to the integrands constituting the remainder functional. That is,
\begin{align*}
    r_{K-1}(X_s,Y_u) &= \sum_{|\alpha| = K-1} \int_{\triangle_{K-1,u}} \Delta_{\alpha}f(X_s \oplus Y_{t_1}) \circ \, dy^{\alpha}\\
    &= \sum_{|\alpha| = K-1}\int_{\triangle_{K,u}} \left[ \Delta_{\alpha}f(X_s)   + \int_0^{t_1} 
    \Delta_{0\alpha}f(X_s \oplus Y_{t_0})\, dt_0 + \int_0^{t_1} 
    \Delta_{1\alpha}f(X_s \oplus Y_{t_0})\circ \, dy_{t_0} \right] \circ  dy^{\alpha} \\
    &= \sum_{|\alpha| = K-1} \Delta_{\alpha}f(X_{s})\calS_{\alpha}(Y_u) \\
    &+ \underbrace{\sum_{\substack{|\alpha| = K \\ \alpha_1 =\,0}} \int_{\triangle_{K,u}} \Delta_{\alpha}f(X_s \oplus Y_{t_0}) \circ \, dy^{\alpha} 
    + \sum_{\substack{|\alpha| = K \\ \alpha_1 = \,1}}
    \int_{\triangle_{K,u}} \Delta_{\alpha}f(X_s \oplus Y_{t_0}) \circ \, dy^{\alpha}}_{= \, r_K(X_s,Y_u)}.
\end{align*}
Bundling the terms together completes the proof.  
\end{proof}

\subsection{\cref{cor:FTEX}}
\label{app:FTEX}
\begin{proof} Equation $\eqref{eq:CorFTE1}$ simply follows from   \Cref{thm:FTE} by setting  $u = t-s$ and  $Y_u = X |_{[s,t]} \in \Omega^{\Pi}$. Although $\eqref{eq:CorFTE2}$ is a  consequence of \Cref{thm:FTE} as well, we outline its proof as we deem it instructive. 
Indeed, some care is needed when expanding backward. 
For simplicity, write $W_u =X|_{[s,t]}$ so that  $\overleftarrow{W}=X|_{[t,s]} =: Y_u$. In particular, $X_r = X_s \oplus W_{r-s}$ for all $r\in [s,t]$. 
We now rearrange and  iterate  the pathwise Stratonovich formula  as follows: 
\begin{align*}
    f(X_s) 
    &= f(X_t) - \sum_{|\alpha|=1}\int_s^t \Delta_{\alpha} f(X_{t_1}) \circ  dx^{\alpha}\\ 
    &= f(X_t) - \sum_{|\alpha|=1}\int_{0}^{u} \Delta_{\alpha} f(X_{s} \oplus W_{\! t_1}) \circ  dw_{t_1}^{\alpha}\\ 
    &= f(X_t) - \sum_{|\alpha|=1}\Delta_{\alpha} f(X_t) \int_{\overleftarrow{\triangle}_{1,u} } \circ  dw^{\alpha} + \sum_{|\alpha|=2}\int_{\overleftarrow{\triangle}_{2,u} }  \Delta_{\alpha} f(X_s\oplus W_{t_1})  \circ  dw^{\alpha},
\end{align*}
with the time-reversed simplexes $\overleftarrow{\triangle}_{k,t}=  \{(t_1,\ldots,t_k)\in [0,t]^k\,|\, t_1 \ge \ldots \ge t_k\}$. 
Proceeding until $K$ yields 
\begin{align}\label{eq:intermediate}
    f(X_s) =  \sum_{|\alpha|<K} \Delta_{\alpha} f(X_t) \ (-1)^{|\alpha|} \ \int_{\overleftarrow{\triangle}_{|\alpha|,u} }   \circ \ dw^{\alpha} + (-1)^K \sum_{|\alpha|=K}  \int_{\overleftarrow{\triangle}_{K,u} }  \Delta_{\alpha} f(X_s \oplus W_{t_1})  \circ  dw^{\alpha}.
\end{align}
We now verify that 
\begin{equation}\label{eq:claim}
    \int_{\overleftarrow{\triangle}_{|\alpha|,u} } \varphi(X_s \oplus W_{t_1})  \circ  dw^{\alpha} = (-1)^{|\alpha|}  \int_{\triangle_{|\alpha|,u} } \varphi(X_t \oplus Y_{t_1})  \circ  dy^{\alpha},
\end{equation}
for every word $\alpha$ and functional $\varphi:\Lambda \to \R$  such that the above integrals are well-defined. 
Write  $k = |\alpha|$ and  consider the affine involution $T_u:[0,u]^k \to [0,u]^k$ given by $T_u(t_1,\ldots, t_k) = (u-t_1,\ldots,u-t_k)$. In particular,  $T_u(\overleftarrow{\triangle}_{k,u}) = \triangle_{k,u}$,  $w^{\alpha}= y^{\alpha} \circ T_u $ and $|\text{det}(\nabla T_u)| \equiv 1$.  Noticing also that $X_s \oplus W_{t_1} = X_t \oplus Y_{u-t_1}$, changing variables yields 
\begin{align*}
        \int_{\overleftarrow{\triangle}_{k,u} } \varphi(X_s \oplus W_{t_1})  \circ  dw^{\alpha} &=
          \int_{0}^u  \int_{t_k}^u \cdots \int_{t_2}^u   \varphi(X_t \oplus Y_{u-t_1}) \circ d \left(y^{\alpha} \circ T_u\right) \\ 
        &= \int_{u}^0  \int_{t_k}^{0} \cdots \int_{t_2}^0   \varphi(X_t \oplus Y_{t_1})  \circ  dy_{t_1}^{\alpha_1} \cdots \circ  dy_{t_k}^{\alpha_k}\\
          &=
        (-1)^{k}  \int_{\triangle_{k,u} } \varphi(X_t \oplus Y_{t_1})  \circ  dy^{\alpha}. 
\end{align*}

This proves the claim. In light of $\eqref{eq:intermediate}$, we choose  $\varphi \equiv 1$ and $\varphi=\Delta_{\alpha}f$ in $\eqref{eq:claim}$,  giving respectively the signature elements $\calS_{\alpha}(Y_u)$ and  terms in the remainder.   
\end{proof}

\subsection{\cref{prop:remainder}}
\label{app:remainder}
\begin{proof}
Fix $\alpha \in \A$ such that $|\alpha|=K$. Write $\alpha^{k} = \alpha_1\cdots\alpha_k$, $k\le K$, and 
$r_{\alpha,k}(X_t) = \int_{\triangle_{k,t}}\Delta_{\alpha}f(X_{t_1}) \circ dx^{\alpha^{k}}$. In particular, $r_{\alpha}  = r_{\alpha,K}.$ 
We prove by induction on $k$ that 
\begin{equation}\label{eq:inductionRemainder}
    |r_{\alpha,k}(X_s)| \le \frac{2^{|\alpha^k|_{01}}}{|\alpha^k|_0!}  c_{\alpha} \  t^{|\alpha^k|_0}\lVert X_t \rVert_{\infty}^{|\alpha^k|_1} \quad \forall \ s\le t, 
\end{equation}

so that $\eqref{eq:r_alpha}$ corresponds to $k=K$.  Without loss of generality, we can assume that $s=t$. 
\begin{itemize}
     \item \underline{$k=1$:} First note that $|\alpha^k|_{01}=0$. If $\alpha_1 = 0$, then using the norm $\lVert \cdot \rVert_{X_t}$ defined in $\eqref{eq:funcNorm}$, we obtain $$|r_{\alpha,1}(X_t)| \le \int_0^t |\Delta_{\alpha} f(X_s)|ds \le \lVert \Delta_{\alpha} f\rVert_{X_t}\  t \le  c_{\alpha}  \ t.$$ 
   If $\alpha_1 = 1$, using the expression $\eqref{eq:pathStrat}$ of the Stratonovich integral and  the notation $\int_{(0,x_t)} := \int_{x_t\wedge 0}^{x_t\vee 0}$, this gives 
   \begin{align*}
      |r_{\alpha,1}(X_t)| &\le \int_{(0,x_t)} |\Delta_{\alpha} f(X_t^{(\varepsilon)})| d\varepsilon + \int_{0}^{t} \int_{(0,x_s)}|\Delta_{t\alpha} f(X_s^{(\varepsilon)})| d\varepsilon ds \\
      &\le  \lVert \Delta_{\alpha}
      f\rVert_{X_t} \ |x_t|  + \lVert \Delta_{t\alpha} f\rVert_{X_t}\lVert X_t \rVert_{\infty} \ \ t  \ . \\
      &\le c_{\alpha} \lVert X_t \rVert_{\infty}.
   \end{align*}
  Hence $\eqref{eq:inductionRemainder}$ is satisfied for $k=1$. 
   
\item \underline{$k\ge 2$:} There are several cases. 
    \begin{itemize}
        \item \underline{$\alpha_k = 0$:} As $|\alpha^{k-1}|_0 = |\alpha^k|_0-1$ and $|\alpha^{k-1}|_{01}=|\alpha^k|_{01}$, the induction hypothesis gives
       \begin{align*}
      |r_{\alpha,k}(X_t)| &\le \int_{0}^{t} |r_{\alpha,k-1}(X_s)| ds \\
      &\le \frac{2^{|\alpha^k|_{01}}}{(|\alpha^k|_0-1)!}  c_{\alpha}\int_{0}^{t} s^{|\alpha^k|_0-1}\lVert X_s \rVert_{\infty}^{|\alpha^k|_1} ds\\
      &\le\frac{2^{|\alpha^k|_{01}}}{|\alpha^k|_0!}   c_{\alpha}  \ t^{{|\alpha^k|_0}}\lVert X_t \rVert_{\infty}^{|\alpha^k|_1}.
   \end{align*}
        \item \underline{$\alpha_k = 1, \ \alpha_{k-1}=1$:} Since $|\alpha^k-1|_0 = |\alpha^k|_0$ and $|\alpha^{k-1}|_{01}=|\alpha^k|_{01}$, we directly obtain 
         \begin{align*}
      |r_{\alpha,k}(X_t)| &\le \int_{(0,x_t)} |r_{\alpha,k-1}(X^{(\varepsilon)}_s)| d\varepsilon \\
     &\le \frac{2^{|\alpha^k|_{01}}}{|\alpha^k|_0!} c_{\alpha} \  \int_{(0,x_t)}  t^{|\alpha^k|_0} \lVert X_t^{(\varepsilon)} \rVert_{\infty}^{|\alpha^k|_1 -1} d\varepsilon \\
  &\le \frac{2^{|\alpha^k|_{01}}}{|\alpha^k|_0!} c_{\alpha} \  \ t^{{|\alpha^k|_0}}\lVert X_t \rVert_{\infty}^{|\alpha^k|_1}.
   \end{align*}
   In the last inequality, we used the fact that $\lVert X_t^{(\varepsilon)} \rVert_{\infty} \le \lVert X_t \rVert_{\infty}$ for all $\varepsilon \in [0,x_t]$. 
    \item \underline{$\alpha_k = 1, \ \alpha_{k-1}=0$:} This is the interesting case. First, observe that 
    $|\alpha^{k-2}|_0 = |\alpha^k|_0-1$ and  $|\alpha^{k-2}|_0 = |\alpha^k|_{01}-1$. Thus, 
 \begin{align*}
      |r_{\alpha,k}(X_t)| &\le \int_{(0,x_t)} |r_{\alpha,k-1}(X^{(\varepsilon)}_t)| d\varepsilon + \int_{0}^{t} \int_{(0,x_s)}|r_{\alpha,k - 2}(X_s^{(\varepsilon)})| d\varepsilon ds \\
      &\le 2  \int_{(0,x_t)} \int_{0}^{t}|r_{\alpha,\alpha^k - 2}(X_s^{(\varepsilon)})|  ds d\varepsilon \\
      &\le \frac{2^{|\alpha^k|_{01}}}{(|\alpha^k|_0-1)!}  c_{\alpha} \int_{0}^{t}s^{|\alpha^k|_0-1}  ds  \ \lVert X_t \rVert_{\infty}^{|\alpha^k|_1},
   \end{align*}
   which concludes the proof of $\eqref{eq:inductionRemainder}$. 
    \end{itemize}
\end{itemize}
\qquad \quad Finally, $\eqref{eq:remainder}$ follows by observing that $\frac{2^{|\alpha|_{01}}}{|\alpha|_0!} \le \frac{2^{|\alpha|_{0}}}{|\alpha|_0!}\le 2$. 
\end{proof}




\subsection{\cref{prop:Martingale}}
\label{app:Martingale}
\begin{proof} First,  $(iii) \Longrightarrow (i)$ is immediate as  It\^o iterated integrals only depends on the final value of the path; see  $\eqref{eq:Hermite}$. 
It remains to show that $(i) \Longrightarrow (ii)$ and $(ii) \Longrightarrow (iii)$.

\begin{enumerate}

    \item $(i) \Longrightarrow (ii)$ 
    
    The case $k=0$ is trivial so we proceed with $k=1$. Since $f(X_T)=g(X_T)=h(x_T)$ for some $h\in \calC^{\infty}(\R)$ and $X$ is Brownian motion, then 
\begin{equation}\label{eq:martingale}
    \Delta_xf(X_t) = \frac{d}{dx}\E^{\Q}[h(y_{T})\ | \ x_t] = \E^{\Q}[h'(x + y_{T-t})]\big |_{x=x_t} = \E^{\Q}[\Delta_{x}f(Y_T) \, | \, X_t]. 
\end{equation}
 For $k =2$, notice that $\tilde{g} :=\Delta_{x}f\big |_{\Lambda_T}$ is itself a smooth path-independent payoff. We can thus apply a similar argument to $\tilde{g}$ and $\tilde{f}(X_t)=\E^{\Q}[\tilde{g}(Y_T)\,|\,X_t]$, which is exactly $\Delta_{x}f(X_t)$ thanks to $\eqref{eq:martingale}$. Hence $\Delta_{xx}f(X_t) = \E^{\Q}[\Delta_{xx}f(Y_T) \, | \, X_t]$ and the same holds for higher 
derivatives.

\item $(ii) \Longrightarrow (iii)$

Recall from \Cref{prop:MRT_Chaos} that $g(X_T) = \sum_{k=0}^{\infty} J_k \phi_k (X_T)$ with
$$\phi_0 = \calM_T g(X_0) , \quad   \phi_k(t_1,...,t_k) = \E^{\Q}[\calM_{t_2...t_kT}g(Y_{t_1})].$$
If we show that  $\phi_k \equiv \Delta_{\mathds{1}_k}f(X_0)$   for each $k\ge 0$, then 
$J_k \phi_k (X_T) = \Delta_{\mathds{1}_k}f(X_0) J_k(X_T)$ as desired. 
We  prove by induction the slightly stronger claim, 
$$\E^{\Q}[(\calM_{t_2...t_kT}g)(Y_{t_1})\,|\, X_{t_0}] \equiv \Delta_{\mathds{1}_k}f(X_{t_0}), \quad (t_0,t_1,\ldots,t_k)\in \triangle_{k+1,T}.$$
If $k=0$, then obviously $\calM_T g(X_0) = f(X_0) = \Delta_{\emptyset} f(X_0) $. For $k\ge 1$,  we have
\begin{align*}
\E^{\Q}[(\calM_{t_2...t_kT}g)(Y_{t_1})\,|\, X_{t_0}]
    &= \E^{\Q}[\Delta_x \E^{\Q}[(\calM_{t_3...t_kT}g)(Y_{t_2}')\,|\, Y_{t_1}]\,|\, X_{t_0} ]\\
    &= \E^{\Q}[ \Delta_{\mathds{1}_k}f(Y_{t_1})\,|\, X_{t_0} ] \qquad && \text{(induction)}\\
    &=  \Delta_{\mathds{1}_k}f(X_{t_0}), &&\text{(martingality)}
\end{align*}
Taking $t_0=0$ yields the result.
\end{enumerate}
\end{proof}

\subsection{\cref{lem:hermite}}
\label{app:hermite}

\begin{proof}

Fix $k\ge 0$. The Hermite polynomials can be  written as  \begin{equation}\label{eq:exHermite}
H_k(x) = \sum_{l=0}^{\lfloor k/2 \rfloor} c_{k,l} \, x^{k-2l}, \quad c_{k,l} = \frac{k!}{l! (k-2l)!} (-2)^{-l}, 
\end{equation}
with coefficients $(c_{k,l})_{l=0}^k$ retrieved  for instance from   the relation 
$ H_{k+1}(x) = xH_{k}(x) - kH_{k-1}(x),\,   H_1(x)=x, \, H_0\equiv 1
$; see \cite[Section 5.4]{Szego}. Thanks to $\eqref{eq:exHermite}$, we have 
 \begin{equation}\label{eq:poly}
     J_k(X_t) = \frac{t^{k/2}}{k!} \sum_{l=0}^{\lfloor k/2 \rfloor} c_{k,l} \, \left(\frac{x_t}{\sqrt{t}}\right)^{k-2l}  = \sum_{2l_0+l_1=k} (-2)^{-l_0}  \, \frac{t^{l_0} }{l_{0}!}\frac{ x_t^{l_1}}{ l_{1}!}. 
 \end{equation}
 Thus,  $J_k$ is a polynomial in $t$ and $x_t$. 
To conclude, we invoke the following identity,
 \begin{equation*}
     \sum_{|\alpha|_0=l_0, \, |\alpha|_1=l_1 } \calS_{\alpha}(X_t) = \calS_{\boldsymbol{0}_{l_0}}(X_t)\, \calS_{\mathds{1}_{l_1}}(X_t) = \frac{t^{l_0} }{l_{0}!}\frac{ x_t^{l_1}}{ l_{1}!},
 \end{equation*}
 which is  either seen as a consequence  of \cite[Proposition 5.2.10]{KP}  or a particular case of the shuffle product \cite{Ree}. Since $\bigcup\limits_{2l_0+l_1=k} \{\alpha\, |\,\, |\alpha|_0 =l_0, \, |\alpha|_1 =l_1\} =  \A_k$ with $\A_k$ in the statement, the result follows. 
\end{proof}

\bibliographystyle{abbrvnat}
\bibliography{main.bib}
\end{document}